\documentclass[a4paper,10pt]{article}

\usepackage[margin=2.9cm]{geometry}
\usepackage{setspace}
\usepackage{amsmath, amssymb, amsthm, bm, mathrsfs, empheq, xfrac}

\usepackage{gensymb}
\usepackage{bbm}
\usepackage{multirow,array}

\usepackage{graphicx}
\usepackage{float}
\usepackage{tikz}
\usepackage{pgfplots}
\pgfplotsset{width=10cm,compat=1.18}

\usepackage{dirtytalk}
\usepackage{color}

\usepackage{caption}
\usepackage{subcaption}

\usepackage{enumitem}

\usepackage[most]{tcolorbox}

\usepackage[T1]{fontenc}
\usepackage{ebgaramond}

\usepackage[bookmarksnumbered=true]{hyperref}
\hypersetup{
    colorlinks = true,
    linkcolor = red,
    anchorcolor = blue,
    citecolor = blue,
    filecolor = blue,
    urlcolor = blue
}
\usepackage{natbib}

\newtheorem{definition}{Definition}
\newtheorem{proposition}{Proposition}

\newtheorem{lemma}{Lemma}
\newtheorem{theorem}{Theorem}

\newtheorem{example}{Example}

\begin{document}

\title{Reduced Forms: Feasibility, Extremality, Optimality}
\author{
\setcounter{footnote}{1}
    Pasha Andreyanov\thanks{
        HSE University, Moscow.  
        Email: \href{mailto:pandreyanov@gmail.com}{pandreyanov@gmail.com}.
    }
    \and Ilia Krasikov\thanks{
        Arizona State University.  
        Email: \href{mailto:krasikovis.main@gmail.com}{krasikovis.main@gmail.com}.
    }
    \and Alex Suzdaltsev\thanks{
        HSE University, Saint Petersburg.  
        Email: \href{mailto:asuzdaltsev@gmail.com}{asuzdaltsev@gmail.com}.
    }
}
\date{February 2026}

\maketitle

\begin{abstract}
We study independent private values auction environments in which the auctioneer's revenue depends nonlinearly on bidders' interim winning probabilities.
Our framework accommodates heterogeneity among bidders and places no ad hoc constraints on the mechanisms available to the auctioneer.
Within this general setting, we show that feasibility of interim winning probabilities can be tested along a unidimensional curve---the \emph{principal curve}---and use this insight to explicitly characterize the extreme points of the feasible set.
We then combine our results on feasibility and extremality to solve for the optimal auction under a natural regularity condition.
We show that the optimal mechanism allocates the good based on \emph{principal virtual values}, which extend Myerson’s virtual values to nonlinear settings and are constructed to equalize bidders’ marginal revenue along the principal curve.
We apply our approach to the classical linear model, settings with endogenous valuations due to ex ante investments, and settings with non-expected utility preferences, where previous results were largely limited either to symmetric environments with symmetric allocations or to two-bidder environments.
\end{abstract}

\section{Introduction}
\label{sec: intro}

Auctions are among the most widely used market institutions. 
Governments use them to allocate spectrum licenses and procurement contracts; firms employ them to acquire or liquidate inputs; and millions of individuals participate daily on online platforms. 
A central question in auction theory is how to design mechanisms that maximize an auctioneer's objective, such as expected revenue.

The benchmark case of independent private values (IPV) with linear utilities is well understood following \citet{myerson1981optimal}.
Myerson's celebrated analysis shows that in this environment the optimal mechanism awards the good to the bidder with the highest nonnegative virtual value under mild regularity conditions. 
A key feature that makes this analysis tractable is that expected revenue is linear in bidders' interim winning probabilities. 
This linearity allows the auctioneer's problem to be solved pointwise over feasible allocations in regular environments when ironing is not needed.

Much less is known once the auctioneer’s objective departs from linearity. 
Nonlinear dependence on interim winning probabilities arises naturally in economically important settings---for instance, when bidders undertake ex ante investments that affect valuations, as in models of endogenous valuations (\citet{gershkov2021theory}), or when bidders have non-expected-utility preferences such as constant relative risk aversion (CRA) (\citet{gershkov2022optimal}). 
In such environments, because the auctioneer’s revenue depends nonlinearly on interim winning probabilities, the convenient pointwise maximization of the linear benchmark is no longer available. 
As a result, classical methods break down, and the analysis of optimal auctions becomes substantially more complex.

This paper develops a unified framework for characterizing optimal auctions when the auctioneer’s expected revenue can be written in reduced form as an integral of bidder-specific revenue terms that may be nonlinear in interim winning probabilities. 
We allow for full asymmetry across bidders and do not impose any ad hoc restrictions on the mechanisms available to the auctioneer. 
Our framework accommodates a wide range of economic environments, including the linear IPV model, settings with endogenous valuations, and settings with CRA preferences cited above.

The core challenge in nonlinear environments is the loss of separability: the auctioneer’s problem no longer decomposes across types, and allocations cannot be chosen type-by-type.
Making progress therefore requires resolving three intertwined challenges.

\textit{Feasibility}.
Bidders' interim winning probabilities must be consistent with some ex post allocation, and characterizing this consistency requires checking Border's condition.
This condition provides a set of necessary and sufficient inequalities, but this set is high-dimensional—one inequality for every possible cutoff vector of types.
While tractable for symmetric auctions or with two bidders, it quickly becomes unmanageable as we move to asymmetric auctions with a larger number of bidders.
This leaves little hope of solving the fully constrained problem directly.

\textit{Extremality}.
The set of feasible reduced forms—tuples of bidders' interim winning probabilities—is convex.
Consequently, when the auctioneer’s objective is convex—as in many endogenous-valuation environments—an optimum is attained at an extreme point of the feasible set.
A tractable description of extremal reduced forms would therefore sharply restrict the set of candidates and reduce the subsequent maximization problem to a much lower-dimensional search.
Yet, existing characterizations are largely confined to symmetric settings or to two bidders, leaving the structure of extreme points poorly understood in richer settings.

\textit{Optimality}.
Ultimately, the core difficulty is that nonlinearity destroys the pointwise structure of the linear benchmark.
Feasibility and extremality help identify the relevant constraint set and restrict attention to plausible candidates, but the remaining task is still to optimize a nonlinear objective over that set.
Doing so requires a methodology that both solves the auctioneer’s problem and delivers sharp structural insights about the optimal auction.

In this paper, we address these challenges by developing new characterizations for feasibility and extremality and then leveraging them to solve the auctioneer’s problem.

Our first contribution is a new representation of feasibility that collapses Border’s high-dimensional family of constraints to a unidimensional test.
We show that for any reduced form there exists a canonical \emph{principal curve} inside the unit cube with the property that Border’s inequalities hold everywhere if and only if they hold along this single curve.
This reduction makes feasibility straightforward to verify even in fully asymmetric environments with many bidders.

Our second contribution is a sharp characterization of extremality.
Building on our feasibility representation, we show that a reduced form is extremal precisely when Border’s inequalities bind along its principal curve.
This criterion dramatically simplifies the auctioneer’s problem in environments where the optimum is known to be extremal.
It also clarifies how to induce reduced forms through ex post allocations.
Specifically, we show that every extremal reduced form can be implemented by a simple \emph{score allocation}.
In a score allocation, each bidder’s type is mapped into a unidimensional score; the good is awarded to the bidder with the highest nonnegative score; and ties occur with probability zero.
We prove that score allocations are exactly the mechanisms that generate extremal reduced forms.
This yields a sharper form of the classical equivalence between Bayesian and dominant-strategy implementation originally established by \citet{manelli2010bayesian}.
In particular, we establish that any reduced form is implementable as a mixture of deterministic, dominant-strategy incentive compatible, and nonbossy allocations, in the sense of \citet{satterthwaite1981strategy}.

Our third contribution concerns optimality in nonlinear environments.
We reformulate the auctioneer’s problem as a continuous-time optimal control problem that builds the principal curve over “time”, moving from high to low types.
In this representation, the auctioneer chooses how quickly to allocate winning probability to each bidder, subject to a dynamic feasibility constraint that captures Border’s condition along the curve.
This reformulation opens the door to standard tools from control theory and yields structural and economic insights that were unavailable before.
Specifically, the optimality conditions reveal a key organizing principle: on segments where monotonicity does not bind (i.e., where no ironing is required), the optimal principal curve equalizes bidders’ marginal revenues.
Using this insight, we completely solve for the optimum in regular environments where ironing is not required.

In regular environments where the optimum is extremal, marginal revenue equalization delivers a complete characterization of the optimal auction, which takes the form of a score allocation.
Each bidder’s score is a \emph{principal virtual value}, constructed to equalize bidders’ marginal revenues along the optimal principal curve.
In linear settings, principal virtual values coincide with Myerson’s virtual values since bidders’ marginal revenues do not depend on their interim winning probabilities.
In nonlinear environments, these scores depart from Myerson’s virtual values because they incorporate the shadow cost of feasibility: increasing one bidder’s interim winning probability necessarily comes at the expense of others and therefore changes their marginal revenues.

We show how this characterization applies to the endogenous-valuation model, where bidders choose ex ante investments that affect their valuations.
In this environment, interim winning probabilities matter not only for rent extraction, but also because they determine the return to investing.
Any bidder with a positive interim winning probability has an incentive to invest even though only one ultimately wins, so investment costs are largely duplicated.
The optimal auction departs from Myerson’s auction by shifting winning probability toward bidders with higher investment returns and away from bidders with lower investment returns, until bidders’ marginal revenues are equalized along the optimal principal curve.
Relative to Myerson’s auction, this reallocation reduces excessive duplication of investment costs, thereby increasing the auctioneer’s expected revenue.

We then extend the analysis to regular environments in which the optimum is not extremal.
This case arises naturally when the auctioneer’s objective exhibits ``satiation'' in interim winning probabilities, as in settings with risk-averse bidders.
In such environments, the score ranking implied by principal virtual values remains informative about who should receive probability, but it no longer pins down the amount to allocate.
In particular, at low types the marginal revenue from awarding the full unit to the top-ranked bidder can be negative.
Rather than excluding such types from trade altogether, the auctioneer can allocate a strictly positive fraction, increasing it until marginal revenue reaches zero.
We formalize this behavior through fractional score allocations, which preserve the score ranking while allowing the winning fraction to vary with the bidder’s type.

Finally, we illustrate the implications of this characterization in the CRA model, where awarding a fractional unit to risk-averse bidders at the bottom of the type distribution is beneficial for the auctioneer because it lowers informational rents while generating only small efficiency losses.
In a CRA application with two groups of bidders—risk-neutral and risk-averse—we derive a sharp comparative-statics insight.
The optimal auction favors the risk-averse bidders at low types and the risk-neutral bidders at high types. 
This echoes Myerson’s logic that the optimal auction should tilt toward ex ante “weak” bidders to intensify competition;  however, the key subtlety in the nonlinear CRA model is that who is “weak” depends on bidders' interim winning probabilities.
At high interim winning probabilities, the risk-averse bidders are effectively stronger because they are willing to pay an additional risk premium to secure the good with near certainty.
At low interim winning probabilities, this ranking is reversed.
Accordingly, the optimal auction should be biased {in favor} of risk-averse bidders for low allocations (and thus low types) and {against} risk-averse bidders for high allocations (and thus high types). As we explain, this reversal is a direct implication of marginal revenue equalization along the optimal principal curve.

\subsection{Relation to literature}

A substantial body of literature has studied  feasibility, extremality, and optimality of reduced forms.
Below, we describe the key existing contributions and relate our results to them.

\textit{Feasibility.}
In a classic result, \citet{border1991implementation}, building on \citet{maskin1984optimal} and \citet{matthews1984implementability}, characterized the set of feasible symmetric reduced forms---that is, symmetric interim winning probabilities inducible by some ex post allocation---via a unidimensional family of linear inequalities.
\citet{border2007reduced} and \citet{mierendorff2011asymmetric} generalized this result to asymmetric reduced forms, where feasibility is described by a family of inequalities whose dimension grows with the number of bidders.
\citet{che2013generalized}, in turn, extended these results to multi unit auction settings with capacity constraints.
\citet{hart2015implementation} related Border’s conditions to second-order stochastic dominance, and \citet{he2021private} showed that Border-type inequalities can also characterize the set of feasible belief distributions generated by private-private information structures.

The closest paper to ours regarding feasibility in auction settings is \citet{cai2011constructive}, who showed that in finite (potentially only partially ordered) type spaces one can reduce the family of asymmetric Border inequalities to one dimension.
Our result is complementary to theirs: it relies on additional structure---continuous, unidimensional, linearly ordered types and monotone interim winning probabilities---but delivers a sharper characterization.
In this setting, we characterize feasibility via a unidimensional monotone curve, which we describe in closed form through an explicit transformation of reduced forms.

\textit{Extremality.}
Since a quasiconvex functional attains its maximum at an extreme point of that set\footnote{This holds under compactness and continuity by a generalization of Bauer’s maximum principle to quasiconvex functionals; see \citet{ball2023bauer}.}, characterizing extreme points has become an increasingly powerful approach to solving many optimization problems of economic interest at once.
In a landmark contribution, \citet{kleiner2021extreme} characterized extreme points of the set of monotone functions that majorize, or are majorized by, a given monotone function.
This yielded, aside from applications to contests, optimal persuasion, and delegation, an immediate description of extreme points of the set of feasible symmetric reduced forms, since the symmetric Border constraint could be viewed as a majorization constraint.\footnote{\citet{kleiner2021extreme} also covered multiunit symmetric environments.}

Extreme points of asymmetric reduced forms in auction settings are still not fully understood.
A notable contribution is \citet{manelli2010bayesian}, who characterized some extremal asymmetric reduced forms---those that are step functions---and used this description to establish the BIC--DIC equivalence.
More recently, \citet{yang2025multidimensional} provided an equivalent characterization of extremal asymmetric reduced forms in the two-bidder case.
Their approach differs from ours: it characterizes extreme points of the set of rationalizable tuples of monotone functions.
While the connection between rationalizability and majorization is transparent with two bidders, it is unclear how to extend it to characterize extremal reduced forms when the number of bidders exceeds two.
In contrast, our approach delivers an explicit description of all extreme points irrespective of the number of bidders.

A few other papers used the extreme-points approach to solve problems of economic significance.
Among them, \citet{manelli2007multidimensional} employed it to study multidimensional screening.
\citet{nikzad2023constrained} and \citet{candogan2023optimal} characterized extreme points of the same set as in \citet{kleiner2021extreme} but with additional linear constraints and applied this approach to study optimal information disclosure.
\citet{arieli2023optimal} described extreme points of the set of mean-preserving contractions of a given prior.
\citet{kleiner2024extreme} studied extreme points of the set of fusions, the multidimensional analog of mean-preserving contractions, and 
\citet{lahr2024extreme} studied extreme points in multidimensional monopolistic screening settings.
Finally, \citet{yang2024monotone} characterized  extreme points of the set of monotone functions bounded by two given monotone functions and applied this to political economy, persuasion, the psychology of judgment, and security design.

\textit{Optimality.}
As noted above, once the objective is nonlinear in interim winning probabilities, the optimization problem becomes hard because pointwise optimization is no longer available.
Despite these difficulties, several papers have made substantial progress.
\citet{gershkov2021theory} leveraged the Fan--Lorentz inequality to solve for the optimal symmetric allocation with many goods when valuations are endogenous and provided conditions under which the globally optimal reduced form is symmetric in two-bidder environments; 
\citet{scoringandfavoritism} provided analysis for a related setting with contractible actions.
\citet{zhang2017auctions} characterized the optimum in the symmetric two-bidder endogenous-valuations model with quadratic costs, whereas \citet{scoringcompanion} solved for the optimum in general symmetric two-bidder settings with endogenous valuations. 
Finally, \citet{gershkov2022optimal} characterized the optimal mechanism in symmetric environments with non-expected-utility bidders exhibiting constant relative risk aversion.
Our approach covers these applications and unifies and extends the analysis to general asymmetric settings.

\subsection{Structure of the paper}
\label{sec: structure}

The paper contains several distinct results, and to help the reader navigate, we indicate key theorems and propositions in brackets throughout this outline.
Because the main objects and transformations we introduce are nonstandard, and later arguments rely on this shared language, we recommend reading the paper linearly.

Section \ref{sec: setting} introduces the model and explains how several leading auction environments fit into our framework.
Section \ref{sec: example} presents an illustrative example that previews the economics of the optimal mechanism.

Sections \ref{sec: feasibility}--\ref{sec: optimality} develop the main methodological results.
Specifically, Section \ref{sec: feasibility} reduces Border’s high-dimensional feasibility constraints to a unidimensional test along the principal curve (Theorem \ref{th: feasibility}).
Section \ref{sec: extremality} characterizes extremal reduced forms and links them to implementation via score allocations (Theorems \ref{th: extremality} and \ref{th: scores}).
Section \ref{sec: optimality} reformulates the auctioneer’s problem as an optimal control problem and derives the marginal revenue equalization condition that organizes the analysis of optimal auctions that follows (Proposition \ref{prop: oc problem} and Proposition \ref{prop: marginal revenue}).

Sections \ref{sec: optimum extremal} and \ref{sec: optimum non-extremal} apply these tools to characterize optimal auctions in regular environments, distinguishing between the extremal and non-extremal cases (Theorems \ref{th: optimum extremal} and \ref{th: optimum non-extremal}).

Section \ref{sec: conclusion} concludes, and the appendix collects proofs and supplementary results.

\section{Setting}
\label{sec: setting}

We consider an auction environment with $n$ bidders, indexed by $i = 1, \dots, n$, and a single auctioneer who sells one indivisible unit of a good.  
Each bidder $i$ has a private valuation that depends on a unidimensional type $\theta_i \in \Theta_i \subset \mathbf{R}_+$, where $\Theta_i$ is a compact interval.  
The type $\theta_i$ is drawn from a distribution $F_i$ on $\Theta_i$ that admits a strictly positive and continuous density denoted by $f_i$.  
Types are independently distributed across bidders, although the marginal distributions may differ.
We use boldface for vectors of individual variables, e.g.,
$\bm{\theta} = (\theta_1,\dots,\theta_n)$ stands for a type profile.

Let $\bm{\Theta} = \Theta_1 \times \cdots \times \Theta_n$ denote the set of all possible type profiles.  
An \textbf{allocation} is a function $\bm z = (z_1,\dots,z_n): \bm \Theta \to [0,1]^n$ satisfying
\begin{align}
\label{feasibility z}
\sum_{i=1}^n z_i(\bm \theta) \leq 1
\end{align}
for all type profiles $\bm \theta \in \bm \Theta$.  
Here, $z_i(\bm \theta)$ represents the probability that bidder $i$ receives the good when the type profile is $\bm \theta$, and the constraint simply means that the auctioneer can sell at most one unit.  
For bidder $i$, the \textbf{interim winning probability} given their type $\theta_i$ is defined by 
\begin{align}
\label{x from z}
x_i(\theta_i) = \mathbf{E}[z_i \mid \theta_i],
\end{align}
where the expectation is taken with respect to the independent draws of the other bidders' types.

The setting described thus far corresponds to the standard IPV  auction model with linear utilities.
It is well known that in this setting, the auctioneer's expected revenue can be expressed as the following function of interim winning probabilities: 
\[
\sum_{i=1}^n \int_{\Theta_i} \left(\theta_i - \frac{1-F_i(\theta_i)}{f_i(\theta_i)}\right) x_i(\theta_i)\, dF_i(\theta_i),
\]
where the term in parentheses is often referred to as \textbf{Myerson's virtual value} (henceforth \textbf{MVV}).

In contrast to the standard model, we remain agnostic about bidders' preferences and instead represent the auctioneer's implied revenue in reduced form.  
Specifically, let $J_i(x_i,\theta_i)$ denote the revenue obtained from selling the good to type $\theta_i$ with probability $x_i$, so that expected revenue can be written as
\begin{align}
\label{revenue}
\sum_{i=1}^n \int_{\Theta_i} J_i(x_i(\theta_i),\theta_i)\, dF_i(\theta_i).
\end{align}
This specification subsumes the standard model studied by Myerson, and it is far more general since $J_i$ need not even be linear in $x_i$.  
For example, this framework accommodates environments with endogenous valuations and bidders with CRA preferences.  

We are interested in finding the revenue-maximizing mechanism when the auctioneer's revenue is given by \eqref{revenue}, subject to bidders' interim winning probabilities, defined by \eqref{x from z} being weakly increasing.  
This monotonicity requirement is the only substantive assumption we impose: the auctioneer can implement any allocation in which bidders' interim winning probabilities are weakly increasing in their types and is restricted to search among such allocations.  
As we explain below, this condition is satisfied in the standard model as well as in settings with endogenous valuations and non-expected utility CRA  preferences.  
In all these cases, monotonicity is both necessary and sufficient for implementability; this is a direct consequence of bidders' incentive compatibility.  

To streamline the exposition, we assume that $J_i$ is twice continuously differentiable and that $\frac{\partial}{\partial x}J_i$ is not identically zero.
In the rest of this section, we show how our general framework encompasses several important auction environments studied in the literature.

\subsection{IPV setting with linear utilities}
\label{sec: linear}

This is the classical and most extensively studied case, already mentioned above.  
Bidder $i$'s utility takes the form $\theta_i z_i - t_i$, where $t_i$ denotes their payment to the auctioneer, the auctioneer seeks to maximize $\mathbf{E}[\sum_{i=1}^n t_i]$.  
It is well known that only allocations in which bidders' interim winning probabilities are monotone can be implemented in Bayesian Nash equilibrium.  
In this environment, the auctioneer's revenue coincides with \eqref{revenue}, with
\[
J_i(x_i,\theta_i) = \left(\theta_i - \frac{1-F_i(\theta_i)}{f_i(\theta_i)}\right) x_i
\]
up to a constant.  
This constant is pinned down by the participation constraint of the lowest types, which binds at zero, so the auctioneer's problem is exactly of the form described above.  

The solution has been well understood since \citet{myerson1981optimal}. In the regular case---where bidders' MVVs are strictly increasing---the item is allocated to the bidder with the highest nonnegative MVV.\footnote{More generally, the optimal allocation is determined by ironed MVVs and may involve randomization among tied bidders.}
To see it, note that due to linearity in interim winning probabilities, we can rewrite the objective in \eqref{revenue} as
\begin{align}
\label{revenue z}
\sum_{i=1}^n \int_{\Theta_i} J_i(x_i(\theta_i),\theta_i)\, dF_i(\theta_i) = \mathbf{E}\left[\sum_{i=1}^n \left(\theta_i - \frac{1-F_i(\theta_i)}{f_i(\theta_i)}\right) z_i(\bm \theta)\right],
\end{align}
which can be easily maximized pointwise in the space of allocations satisfying \eqref{feasibility z}.

\subsection{IPV setting with CRA preferences}
\label{sec: CRA}

In an important contribution, \citet{gershkov2022optimal} (henceforth GMSZ-CRA) extend optimal auction theory to environments in which bidders’ preferences over interim winning probabilities depart from expected utility.\footnote{Their standing assumption is quasilinearity in money.}
Under constant relative risk aversion, they showed that the auctioneer's revenue can still be expressed as in \eqref{revenue}, with
\[
J_i(x_i,\theta_i) = \theta_i x_i - \frac{1-F_i(\theta_i)}{f_i(\theta_i)} g_i(x_i),
\]
up to a constant.  
Here, $g_i(x_i)$ is the certainty equivalent of a lottery paying \$1 with probability $x_i$ and \$0 otherwise.  
If $g_i'' \equiv 0$, bidder $i$ has expected-utility preferences; when $g_i'' \geq 0$, she is risk-averse. 

Monotonicity of $\bm x$ remains both necessary and sufficient for implementability.  
Thus, the auctioneer's problem retains the same structure as before, but it is considerably harder because $J_i$ is generally nonlinear in $x_i$.  
As a result, one cannot express the auctioneer's expected revenue as a function of the allocation similarly to \eqref{revenue z} and maximize it pointwise.

To gain tractability, GMSZ-CRA focused on symmetric settings in which $J_i$ and $F_i$ are identical across bidders.  
Under risk aversion, $J_i$ is concave in $x_i$, which was their main case of interest.  
They showed that symmetry of the environment, combined with the concavity of $J_i$, implies that the optimal allocation is symmetric, and they derived a closed-form characterization of the corresponding interim winning probabilities.\footnote{GMSZ-CRA also described an approach to the optimal symmetric auction when bidders are symmetric but not necessarily risk-averse.}  
Their analysis, however, does not cover asymmetric environments, which we take up in this manuscript.

\subsection{IPV setting with endogenous valuations}
\label{sec: EVs}

In another influential paper, \citet{gershkov2021theory} (henceforth GMSZ-EV) analyzed models with ex ante investments and showed that they naturally generate preferences that are convex in interim winning probabilities.  
In their framework, before the auction, each bidder may take a private action $a_i$ that increases their valuation from winning.  
Specifically, bidder $i$'s utility is
\[
v_i(a_i,\theta_i) z_i - C_i(a_i) - t_i,
\]
where $C_i$ is a convex cost incurred regardless of the auction outcome.  

GMSZ-EV demonstrated that when bidders are expected-utility maximizers, the auctioneer's expected revenue again takes the form in \eqref{revenue}, with
\[
J_i(x_i,\theta_i) = \omega_i(x_i,\theta_i) - \frac{1-F_i(\theta_i)}{f_i(\theta_i)} \frac{\partial}{\partial\theta_i} \omega_i(x_i,\theta_i),
\]
up to a constant, where
\[
\omega_i(x_i,\theta_i) = \max_{a_i} \;v_i(a_i,\theta_i)x_i - C_i(a_i)
\]
is the maximal expected utility of type $\theta_i$ given their interim winning probability $x_i$.  
Since $\omega_i$ is a pointwise maximum of functions linear in $x_i$, it is convex in $x_i$, and typically $J_i$ inherits this convexity.\footnote{It can happen even though $J_i$ is nonconvex in $x_i$, its positive part $\max\{0,J_i\}$ is convex. 
As explained in GMSZ-EV, the difference is inconsequential for the structure of the optimum under suitable regularity.}

Finding the optimal mechanism therefore requires maximizing the convex functional in \eqref{revenue} over monotone interim winning probabilities $\bm x$ that are consistent with an allocation, i.e., that satisfy \eqref{feasibility z} and \eqref{x from z}.  
Very little is known about problems of this form.  
Notable exceptions include \citet{zhang2017auctions}, who found the optimum in an additively separable example with $n=2$; GMSZ-EV, who analyzed optimal symmetric allocations (and their global optimality) in symmetric environments; and \citet{scoringcompanion}, who characterized the asymmetric optimum with two bidders.  

To the best of our knowledge, no results exist for asymmetric environments with more than two bidders.  
To address this gap, we develop a novel approach to the study of optimal auctions in general nonlinear settings without imposing symmetry or restricting the auctioneer to symmetric mechanisms.

\section{Illustrative example}
\label{sec: example}

Since our approach requires several steps to develop, it is instructive to preview its implications for the optimal auction  and underlying economics in a concrete example before turning to the full analysis.

Consider an endogenous-valuation model in which each bidder $i$ can take a private action $a_i$ at cost $a_i^2$, which increases own valuation from winning the good to $v_i(a_i,\theta_i) = 2\sqrt{\theta_i}a_i$ so that the maximal utility of bidder $i$ is $\omega_i(x_i,\theta_i) = \theta_i x_i^2$.
Relative to the standard linear model, the key difference is the quadratic dependence of utility on the interim winning probability.
This nonlinearity arises endogenously from bidders’ optimal ex ante investment decisions.
As a result, the auctioneer’s revenue function $J_i$ equals bidder $i$'s MVV multiplied by the square of own interim winning probability.

In this environment, Myerson’s optimal auction is generally suboptimal.  
The reason is that such an allocation discriminates against stronger bidders so as to reduce their information rents (e.g., see \citet{carroll2019robustly}), thereby inducing excessive duplication of investment costs at the ex ante stage.  
By reallocating winning probability away from weaker bidders, the auctioneer can mitigate over-investment and increase his expected revenue.

To see this more concretely, suppose that bidders' typy distributions satisfy
\[
\theta_i - \frac{1-F_i(\theta_i)}{f_i(\theta_i)} = \left(F_i(\theta_i)\right)^{1/\beta_i}
\]
for some $\beta_i \leq 1$, where $\Theta_i = \left[\frac{1}{1/\beta_i+1},1\right]$.
For example, if there are two bidders with $\beta_1 = 1$ and $\beta_2 = \frac{1}{2}$, then their type distributions are given by  $F_1(\theta_1)=2\theta_1-1$ and $F_2(\theta_2) = \frac{\sqrt{12\theta_2-3}-1}{2}$.%
\footnote{To obtain $F_i$, remark that its inverse $v_i$ satisfies $v_i(u)-(1-u)v_i'(u) = u^{1/\beta_i}$. Solving this equation yields $v_i(u)=\frac{1}{1-u}\int_{u}^1 s^{1/\beta_i}ds$.}
Here, a larger value of $\beta_i$ is indicative of a higher bidder $i$'s strength in the sense of hazard ratio ordering.%
\footnote{For two cumulative distribution functions, $F$ and $G$, with the same support, $F$ is stronger in the sense of hazard ratio ordering than $G$ if $\frac{1-F}{f} \geq \frac{1-G}{g}$.
See Section \ref{sec: EV applications} for the general definition and discussion of this order.}
Direct calculations reveal that Myerson's auction yields expected revenue of $\frac{10}{21}$, whereas trading exclusively with the stronger bidder---bidder 1---gives $\frac{1}{2}$.

More generally, we show in Section \ref{sec: optimum extreme} that the optimal auction allocates the good to the bidder with the highest value of $\big(F_i(\theta_i)\big)^{1/\beta_i-1}$.
It can be verified that this mechanism equalizes marginal revenues, $\frac{\partial}{\partial x_i}J_i=\frac{\partial}{\partial x_j}J_j$, whenever bidders $i$ and $j$ are tied.
As a result, the optimal allocation discriminates less against stronger bidders than Myerson’s auction.
For example, if bidder $i$ and bidder $j$ are tied in Myerson’s auction, so that $\big(F_i(\theta_i)\big)^{1/\beta_i}=\big(F_j(\theta_j)\big)^{1/\beta_j}$ and $\beta_i>\beta_j$, then the stronger bidder $i$ should win with certainty in the optimal auction.

\section{Preliminaries and simplifications}

As noted above, finding the optimal mechanism is a challenging problem.
It is useful to introduce several simplifications and auxiliary observations that streamline the exposition and reduce the dimensionality of the problem.

First, we can absorb all asymmetries across bidders into the auctioneer’s revenue function by working with quantiles of bidders’ types.
Specifically, let $u_i = F_i(\theta_i)$ denote the quantile of $\theta_i$, which is uniformly distributed on $[0,1]$, and define
\[
H_i(x_i,u_i) = J_i(x_i,F_i^{-1}(u_i)).
\]
Then, our original problem is equivalent to one in which types, denoted by $\bm u$, are uniformly and identically distributed, and the auctioneer's revenue from selling the good to type $u_i$ with probability $x_i$ is given by $H_i(x_i,u_i)$.  
In this formulation, both allocations and interim winning probabilities can be viewed as functions of $\bm u$\footnote{To obtain allocations in the original type space, we can simply substitute back $\bm u = (F_1(\theta_1),...,F_n(\theta_n))$, e.g.,
$x_i(F_i(\theta_i))$ stays for their interim winning probability.}, so that \eqref{revenue} becomes
\begin{align}
\label{revenue u}
\sum_{i=1}^n \int_{0}^1 H_i(x_i(u),u)\, du.
\end{align}

Second, observe that the expected revenue in \eqref{revenue u} depends only on interim winning probabilities.  
It has been known since at least \citet{border1991implementation} and \citet{mierendorff2011asymmetric} that $\bm x$ is consistent with some allocation $\bm z$ satisfying analogues of \eqref{feasibility z} and \eqref{x from z} in the $\bm u$-space (in which case we say that $\bm z$ \textbf{induces} $\bm x$) if and only if the following (Border) constraint holds:
\begin{align}
\label{border}
B(\bm u) = \prod_{i=1}^n u_i + \sum_{i=1}^n \int_{u_i}^1 x_i(u)du \leq 1 \quad \forall \bm u \in [0,1]^n.
\end{align}
This constraint has received considerable attention in the literature.  
It ensures that interim allocation rules $\bm x$ are feasible, in the sense that they can be induced by some ex post allocation satisfying the feasibility condition in \eqref{feasibility z}.  
Intuitively, \eqref{border} requires that for every cutoff vector $\bm u$, the total probability assigned to the bidders whose types lie in the upper tails $[u_i,1]$ does not exceed the probability that at least one such bidder is present.  

Finally, it is without loss of generality to require each $x_i$ to be right-continuous with $x_i(1)=1$.  
To see this, recall that $x_i$ is weakly increasing and thus has at most countably many jump discontinuities.  
Redefining $x_i$ at those points (and possibly at $u=1$) affects neither the objective in \eqref{revenue u} nor the Border constraint in \eqref{border}.  

By this observation, we can restrict attention to $\bm x$ that are cumulative distribution functions (CDFs) on the unit interval.  
Let $\mathscr{X}$ denote the set of such CDFs.  
We refer to a tuple $\bm x \in \mathscr{X}^n$, which captures interim winning probabilities, as a \textbf{reduced form}.  
A reduced form $\bm x$ is \textbf{feasible} if it satisfies the Border constraints in \eqref{border}, and it is \textbf{optimal} if it solves the auctioneer's problem, which can be succinctly restated as
\begin{align}
\label{abstract problem}
\max_{\bm x \in \mathscr{X}^n} \; \sum_{i=1}^n \int_{0}^1 H_i(x_i(u),u)du 
\;\; \text{s.t.}\;\; B(\bm u) \leq 1 \quad\forall \bm u \in [0,1]^n.
\end{align}

While compact, the maximization problem in \eqref{abstract problem} remains challenging due to the nonlinearities in $H_i$, and the feasible set is defined by a large and intricate family of inequalities.
To make progress, we introduce a novel characterization of feasibility that dramatically simplifies Border’s condition.
Furthermore, since the feasible set is convex, additional tractability can be obtained by analyzing its extreme points.%
\footnote{Recall that a point in a convex set is extreme if it cannot be expressed as a convex combination of two distinct points in the set.  
As explained in Section 2 of \citet{kleiner2021extreme}, by the Krein–Milman theorem any convex and compact set in a locally convex space coincides with the closed convex hull of its extreme points.}  
For example, if each revenue function $H_i$ is convex in its first argument---as often arises in settings with endogenous valuations---then an optimal reduced form can be found at an extreme point of the feasible set.  
We call a feasible reduced form $\bm x$ \textbf{extremal} if it is an extreme point of the set of CDFs satisfying the Border constraint in \eqref{border}.  

In the sequel, we characterize all extremal reduced forms and use this characterization, together with our new feasibility result, to solve for the optimum.  

\section{Feasibility}
\label{sec: feasibility}

Border’s condition \eqref{border} tells us exactly which reduced forms are feasible.  
The drawback, however, is that it requires checking an entire $n$-dimensional family of inequalities---one for every cutoff vector $\bm u \in [0,1]^n$.  
When $n=2$ this is still manageable, but as soon as there are three or more bidders the condition becomes challenging to verify directly.  
Our goal in this section is to show that this system of constraints can, in fact, be reduced to a simple unidimensional test.  

Our key insight is that feasibility can be checked along a single continuous curve inside the unit cube.  
That is, for every reduced form $\bm x$, there exists a path 
\[
s \in [0,1] \mapsto \bm{\nu}(s) \in [0,1]^n
\]
such that Border’s inequality holds everywhere if and only if it holds along this path.  
Instead of monitoring a continuum of constraints in $n$ dimensions, we can therefore follow just one curve.
We refer to this curve $\bm{\nu}$ associated with a given reduced form $\bm x$ as the \textbf{principal curve} of $\bm x$.

\subsection{$\psi$-transform}

\label{sec: psi transform}

To explicitly derive the principal curve, we need a transformation that takes a cumulative distribution function (CDF) $x \in \mathscr{X}$ and associates it with a new function $\psi$, which we call the \textbf{$\psi$-transform of $x$}.
The construction works in two steps.  

First, consider the map $u \mapsto u x(u)$.  
This product can be thought of as a CDF itself: it is weakly increasing, starts at zero, and reaches one at $u=1$.  
Intuitively, it captures the joint effect of the type $u$ and the probability of winning at that type.  

Second, take the (right-continuous) inverse of this product map, denoted $\psi$.\footnote{For a CDF $x \in \mathscr{X}$, its right-continuous generalized inverse $x^{-1}$ is given by $x^{-1}(\iota)=\sup\{u\in[0,1]:x(u)\leq \iota\}$.}  
The function $\psi$ is continuous, weakly increasing, and always satisfies $\psi(1)=1$.  
Its value at zero, $\psi(0)$, simply records the lowest type at which the original CDF turns strictly positive.  

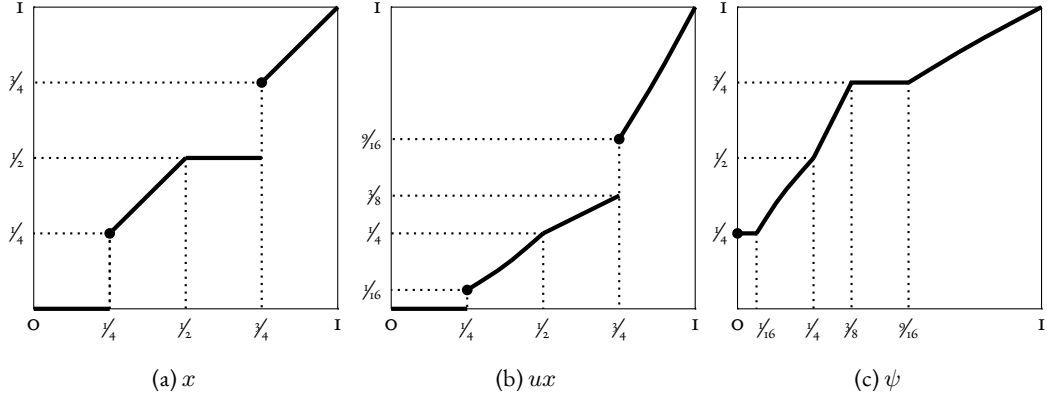
\begin{figure}[!htb]
\centering
\begin{subfigure}[b]{0.3\textwidth}
\centering
\begin{tikzpicture}
\draw (0,0) -- (0,4) -- (4,4) -- (4,0) -- (0,0);
\draw[ultra thick, black] (0,0) -- (1,0);
\draw[ultra thick, black] (1,1) -- (2,2);
\draw[ultra thick, black] (2,2) -- (3,2);
\draw[ultra thick, black] (3,3) -- (4,4);
\draw[thick, dotted, black] (1,0) -- (1,1);
\node[circle,fill=black,inner sep=0pt,minimum size=4pt] at (3,3) {};
\node[circle,fill=black,inner sep=0pt,minimum size=4pt] at (1,1) {};
\draw[thick, dotted, black] (1,1) -- (1,0) node[anchor=north]{\sfrac{1}{4}};
\draw[thick, dotted, black] (2,2) -- (2,0) node[anchor=north]{\sfrac{1}{2}};
\draw[thick, dotted, black] (3,3) -- (3,0) node[anchor=north]{\sfrac{3}{4}};
\draw[thick, dotted, black] (4,0) -- (4,0) node[anchor=north]{1};
\draw[thick, dotted, black] (1,1) -- (0,1) node[anchor=east]{\sfrac{1}{4}};
\draw[thick, dotted, black] (2,2) -- (0,2) node[anchor=east]{\sfrac{1}{2}};
\draw[thick, dotted, black] (3,3) -- (0,3) node[anchor=east]{\sfrac{3}{4}};
\draw[thick, dotted, black] (0,4) -- (0,4) node[anchor=east]{1};
\draw[thick, dotted, black] (0,0) -- (0,0) node[anchor=north]{0};
\end{tikzpicture}
\caption{$x$}
\end{subfigure}
\begin{subfigure}[b]{0.3\textwidth}
\centering
\begin{tikzpicture}
\draw (0,0) -- (0,4) -- (4,4) -- (4,0) -- (0,0);
\draw[ultra thick, black] (0,0) -- (1,0);
\draw[ultra thick, black] (1,1/16 * 4) .. controls(1.5, 9/64 * 4) .. (2, 4/16 * 4);
\draw[ultra thick, black] (2, 4/16 * 4) -- (3, 6/16 * 4);
\draw[ultra thick, black] (3, 9/16 * 4) .. controls(3.5, 49/64 * 4) .. (4, 4);
\node[circle,fill=black,inner sep=0pt,minimum size=4pt] at (3,9/16*4) {};
\node[circle,fill=black,inner sep=0pt,minimum size=4pt] at (1,1/16 * 4) {};
\draw[thick, dotted, black] (1,1/16 * 4) -- (1,0) node[anchor=north]{\sfrac{1}{4}};
\draw[thick, dotted, black] (2,4/16 * 4) -- (2,0) node[anchor=north]{\sfrac{1}{2}};
\draw[thick, dotted, black] (3,9/16 * 4) -- (3,0) node[anchor=north]{\sfrac{3}{4}};
\draw[thick, dotted, black] (4,0) -- (4,0) node[anchor=north]{1};
\draw[thick, dotted, black] (1,1/16 * 4) -- (0,1/16 * 4) node[anchor=east]{\sfrac{1}{16}};
\draw[thick, dotted, black] (2,4/16 * 4) -- (0,4/16 * 4) node[anchor=east]{\sfrac{1}{4}};
\draw[thick, dotted, black] (3,6/16 * 4) -- (0,6/16 * 4) node[anchor=east]{\sfrac{3}{8}};
\draw[thick, dotted, black] (3,9/16*4) -- (0,9/16*4) node[anchor=east]{\sfrac{9}{16}};
\draw[thick, dotted, black] (0,4) -- (0,4) node[anchor=east]{1};
\draw[thick, dotted, black] (0,0) -- (0,0) node[anchor=north]{0};
\end{tikzpicture}
\caption{$ux$}
\end{subfigure}
\begin{subfigure}[b]{0.3\textwidth}
\centering
\begin{tikzpicture}
\draw (0,0) -- (0,4) -- (4,4) -- (4,0) -- (0,0);
\draw[ultra thick, black] (0,1) -- (1/16 * 4,1);
\draw[ultra thick, black] (1/16 * 4,1) .. controls(9/64 * 4,1.5) .. (4/16 * 4,2);
\draw[ultra thick, black] (4/16 * 4,2) -- (6/16 * 4,3) -- (9/16 * 4,3);
\draw[ultra thick, black] (9/16 * 4,3) .. controls(49/64 * 4, 3.5) .. (4,4);
\node[circle,fill=black,inner sep=0pt,minimum size=4pt] at (0,1) {};
\draw[thick, dotted, black] (1/16 * 4,1) -- (0,1) node[anchor=east]{\sfrac{1}{4}};
\draw[thick, dotted, black] (4/16 * 4,2) -- (0,2) node[anchor=east]{\sfrac{1}{2}};
\draw[thick, dotted, black] (6/16 * 4,3) -- (0,3) node[anchor=east]{\sfrac{3}{4}};
\draw[thick, dotted, black] (0,4) -- (0,4) node[anchor=east]{1};
\draw[thick, dotted, black] (1/16 * 4,1) -- (1/16 * 4,0) node[anchor=north,xshift=4pt]{\sfrac{1}{16}};
\draw[thick, dotted, black] (4/16 * 4,2) -- (4/16 * 4,0) node[anchor=north]{\sfrac{1}{4}};
\draw[thick, dotted, black] (6/16 * 4,3) -- (6/16 * 4,0) node[anchor=north]{\sfrac{3}{8}};
\draw[thick, dotted, black] (9/16 * 4,3) -- (9/16 * 4,0) node[anchor=north]{\sfrac{9}{16}};
\draw[thick, dotted, black] (4,0) -- (4,0) node[anchor=north]{1};
\draw[thick, dotted, black] (0,0) -- (0,0) node[anchor=north]{0};
\end{tikzpicture}
\caption{$\psi$}
\end{subfigure}
\caption{Illustration of the transformation $\psi$ for $x(u)=u\cdot\mathbf{1}_{[\sfrac{1}{4},\sfrac{1}{2})\cup [\sfrac{3}{4},1)}(u)+\sfrac{1}{2}\cdot\mathbf{1}_{[\sfrac{1}{2},\sfrac{3}{4})}(u)$.}
\label{fig:psi-transform}
\end{figure}

Figure~\ref{fig:psi-transform} illustrates this transformation for a simple example.  
Panel (a) shows a step-shaped CDF $x$, panel (b) its product with $u$, and panel (c) the resulting $\psi$-transform.  
The picture highlights how $\psi$ “inverts’’ the joint map $ux$.  
In fact, this transformation is reversible, i.e., $x = (\psi)^{-1}(u)/u$, and so $\psi$ can be thought of as a reparametrization of CDFs on $[0,1]$.\footnote{\label{ftn: feasibility} More formally, let $\Psi$ be the subset of $\mathscr{X}$ that consists of continuous functions $\psi$ such that $\iota/\psi(\iota) \in [0,1]$ is weakly increasing. Then, $x \mapsto \psi$ is a bijection between $\mathscr{X}$ and $\Psi$.}  

\subsection{Feasibility through the principal curve}

The $\psi$-transform is more than a convenient relabeling: it is the tool that allows us to collapse Border’s $n$-dimensional condition into a unidimensional one.  
The principal curve $\nu$ mentioned above is built directly from the family of transforms $(\psi_{1},\dots,\psi_{n})$, one for each bidder.  
Given a reduced form $\bm x$, which is not necessarily feasible, we begin by combining the individual transforms through their geometric mean,
\[
\overline{\psi} = \sqrt[n]{\prod_{i=1}^n \psi_{i}},
\]
and then set for each $s \in [0,1]$,
\begin{align}
\label{curve}
\nu_i(s) = \psi_{i}\left(\overline{\psi}^{-1}(s)\right).
\end{align}
The principal curve $\bm{\nu} = (\nu_1,...,\nu_n)$ is continuous and weakly increasing.  
It begins at the minimal quantile types where bidders' interim winning probabilities are nonzero, which is precisely $(\psi_{1}(0),\dots,\psi_{n}(0))$, and ends at $(1,\dots,1)$.  

The main result of this section is the following characterization of feasibility through the principal curve explicitly defined in \eqref{curve}.  

\begin{theorem}
\label{th: feasibility}
A reduced form $\bm x$ is feasible if and only if
\[
B(\bm \nu(s)) \leq 1 \quad \forall s \in [0,1].
\]
Moreover, along the principal curve $\bm\nu$, the value of $B$ can be expressed as the following function of $\overline{\psi}$ alone:
\begin{align}
\label{eq: border curve}
B(\bm\nu(s)) = \Big(\max\left\{\overline{\psi}(0),s\right\}\Big)^n + n \int_{\overline{\psi}^{-1}(s)}^1 \iota \, d\ln \overline{\psi}(\iota).
\end{align}
\end{theorem}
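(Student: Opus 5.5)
The plan is to show that $B$ attains its maximum over $[0,1]^n$ at a point of the principal curve. Feasibility along the curve then implies feasibility everywhere, and the converse is trivial. Formula \eqref{eq: border curve} will follow from a change of variables $\iota = u\,x_i(u)$.

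\emph{Step 1 (coordinatewise concavity).} Fix $\bm u_{-i}$ and write $P_{-i}=\prod_{j\ne i}u_j$. The map $u_i\mapsto B(\bm u)$ is continuous. Its right and left derivatives are $P_{-i}-x_i(u_i)$ and $P_{-i}-x_i(u_i-)$. Since $x_i$ is increasing, $B$ is concave in each coordinate separately. $B$ is continuous on the compact cube, so a maximizer $\bm u^*$ exists. We may take $\bm u^*$ to be coordinatewise optimal: if it is not, coordinate moves that do not lower $B$ preserve being a maximizer, and compactness lets us reach such a point. Coordinatewise optimality gives $x_i(u_i^*-)\le P^*_{-i}\le x_i(u_i^*)$ at interior coordinates. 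Multiplying by $u_i^*$ yields
\[
u_i^*x_i(u_i^*-)\;\le\;P^*\;\le\;u_i^*x_i(u_i^*),\qquad P^*=\prod_j u_j^*.
\]
By definition of the right-continuous generalized inverse this says exactly $u_i^*=\psi_i(P^*)$. The boundary cases need checking.
\begin{itemize}
\item If $u_i^*=1$, then $\psi_i(\iota)=1$ requires $\iota\ge$ the last increase point, which is consistent because $x_i(1)=1\ge P^*_{-i}$.
\item If $u_i^*$ lies in the region where $x_i=0$, the derivative $P_{-i}\ge0$ lets us push $u_i^*$ up to $\psi_i(0)$ without lowering $B$.
\item If $P^*=0$, all coordinates can be raised to $\psi_i(0)$.
\end{itemize}
In every case the maximizer has the form $\bm u^*=(\psi_1(\iota),\dots,\psi_n(\iota))$ for the single level $\iota=P^*$. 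Setting $s=\overline\psi(\iota)$ gives $\bm u^*=\bm\nu(s)$, because $\overline\psi^{-1}(\overline\psi(\iota))$ lies in the same flat piece as $\iota$, where every $\psi_i$ is constant. So $\max B=\max_s B(\bm\nu(s))$, which proves the equivalence.

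\emph{Step 2 (formula).} Substitute $\iota=u\,x_i(u)$, i.e.\ $u=\psi_i(\iota)$. Then $\int_{\psi_i(\iota_0)}^1 x_i(u)\,du=\int_{\iota_0}^1 (\iota/\psi_i(\iota))\,d\psi_i(\iota)=\int_{\iota_0}^1\iota\,d\ln\psi_i(\iota)$. Jumps of $x_i$ correspond to flat pieces of $\psi_i$ and contribute nothing on either side, while flat pieces of $x_i$ are handled by the continuity of $\psi_i$. Summing over $i$ gives $n\int_{\iota_0}^1\iota\,d\ln\overline\psi$. At $\iota_0=\overline\psi^{-1}(s)$ the product term is $\overline\psi(\iota_0)^n$. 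This equals $s^n$ when $s\ge\overline\psi(0)$, by continuity of $\overline\psi$. When $s<\overline\psi(0)$ we have $\iota_0=0$ and the product term is $\overline\psi(0)^n$, which produces the $\max$ in \eqref{eq: border curve}.

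\emph{Main obstacle.} The delicate part is Step 1's bookkeeping with generalized inverses: jumps and flats of $x_i$, the lower support point $\psi_i(0)$, and the existence of a coordinatewise-optimal maximizer. For the latter I would use a cleaner route: among all maximizers take one maximizing $\sum_i u_i$ (a compact set, so this exists), and check that each one-sided optimality condition must then hold.
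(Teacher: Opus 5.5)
Your proof is correct. Step~2 is the paper's Lemma~\ref{lm: psi integral} summed over bidders, but Step~1 takes a different route.

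\textbf{The paper's route.} The paper never maximizes $B$ over the whole cube. It fixes $s$ and maximizes over the slice $\{\prod_i u_i\ge s^n\}$; this value is $G(s)$, and the program is concave because the slice is convex. From the Kuhn--Tucker conditions $u_ix_i(u_i-)\le\lambda\prod_j u_j\le u_ix_i(u_i)$ it shows that $\bm\nu(s)$ solves every slice problem, uniquely for $s\ge\overline\psi(0)$. Feasibility then reduces to $G\le 1$, with $B(\bm\nu(s))=G(\max\{s,\overline\psi(0)\})$.

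\textbf{Your route.} You locate only the global maximizer, using separate concavity of $B$ in each coordinate. Your sandwich is the paper's condition with $\lambda=1$, and the common level $\iota=P^*=s^n$ comes out directly. This is shorter and needs neither a Lagrangian nor convexity of the product constraint.

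\textbf{What the slice-wise version buys.} It gives the stronger fact that $\bm\nu(s)$ dominates every cutoff vector with the same product, together with uniqueness. The extremality analysis relies on exactly this: sags are defined through $G$, and Step~1 of the proof of Lemma~\ref{lm: piecewisec and jumps} invokes uniqueness of the slice maximizer. So your argument proves Theorem~\ref{th: feasibility} but would not by itself carry those later proofs.

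\textbf{Small corrections.} First, a global maximizer of $B$ on the cube is automatically coordinatewise optimal, so no selection among maximizers (such as maximizing $\sum_i u_i$) is needed. Second, at $u_i^*=1$ the inequality you need is $P^*\ge x_i(1-)$, which is the left-derivative condition; $x_i(1)\ge P^*_{-i}$ is not the relevant one. With it, the sandwich $u_i^*x_i(u_i^*-)\le P^*\le u_i^*x_i(u_i^*)$ holds verbatim. Third, the identification $u_i^*=\psi_i(P^*)$ uses $x_i(u_i^*)\ge P^*_{-i}>0$. This makes $u\mapsto ux_i(u)$ strictly increasing to the right of $u_i^*$, so the generalized inverse cannot overshoot. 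That is why the case $P^*=0$ must be handled separately, by comparison with $(\psi_1(0),\dots,\psi_n(0))$, as you do.
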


The logic behind the theorem is as follows.  
We show in the appendix that for a given $s$, the vector $\bm\nu(s)$ maximizes the left-hand side of Border’s inequality among all cutoff vectors with $\prod_i u_i \geq s^n$.\footnote{If either $s=0$ or $s<\overline{\psi}(0)$, then any $\bm{u}$ such that $\prod_{i=1}^{n}u_{i}\geq s^n$ and $u_{i}\leq\psi_{i}(0)$ for $i=1,\dots,n$ is optimal.}  
Thus, if the inequality holds at this “hardest’’ point, it automatically holds for every other cutoff vector with the same product.  
Checking feasibility along the principal curve therefore suffices.  

The principal curve parametrizes the cutoffs that put the greatest strain on feasibility, and it does so in a way that depends only on the $\psi$-transforms of the reduced form.%
\footnote{A related observation also appeared in \citet{cai2011constructive}, who studied feasibility of (potentially nonmonotone) reduced forms in settings with finitely many types.}  
Perhaps surprisingly, in our continuous-type setting with monotonicity constraints, the value of $B$ along that curve depends on these transforms only through their geometric average $\overline{\psi}$, as in \eqref{eq: border curve}.  
The explicit formula for $B(\bm\nu(s))$ makes this reduction particularly tractable.  

\begin{example}
\label{ex: feasibility}

To illustrate, consider reduced forms of the power type $x_i(u) = u^{1/\alpha_i - 1}$, where $\alpha_i \in (0,1)$. Direct computations reveal that $\psi_{i}(\iota) = \iota^{\alpha_i}$ and so $\overline{\psi}(\iota) = \iota^{\frac{\sum_{i=1}^n \alpha_i}{n}}$, which gives
\begin{align}
\label{eq: example 2 border}
B(\bm\nu(s))=s^n + \left(1-s^{n/\sum_{i=1}^n \alpha_i}\right)\sum_{i=1}^n \alpha_i.
\end{align}
Applying the theorem, we conclude
\[
\bm x \text{ is feasible } \iff \sum_{i=1}^n \alpha_i \leq 1.
\]
\end{example}

This unidimensional reduction provides a complete and tractable test for feasibility.  
Just as important, it prepares the ground for our next steps, namely the characterization of extremality and optimality that we will be covering in the subsequent sections.  

\section{Extremality}
\label{sec: extremality}

In this section, we study the structure of extremal reduced forms and the structure of corresponding allocations that induce them.

\subsection{Extremal reduced forms}
\label{sec: extremal rf}

Whenever the revenue functions $H_i$ are convex in $x_i$---as they are in many applications with endogenous valuations---the objective in \eqref{abstract problem} is convex.  
Maximizing a continuous convex function over a convex compact set always yields a solution at an extreme point.
In such cases, the optimal mechanism can be implemented by an extremal reduced form.  
Even when the objective is not convex, extremal reduced forms remain essential serving as the ``building blocks'' of the feasible set, since any feasible reduced form can be expressed as a convex combination of them.\footnote{For example, see \citet{kleiner2021extreme}, who explained that every feasible reduced form can be written as an integral over extremal ones due to the Choquet theorem. \label{ftn: Choquet}}  

This perspective was first emphasized in the influential paper \citet{kleiner2021extreme}, which demonstrated the surprising tractability of extreme points of feasible symmetric reduced forms and showed how they illuminate problems in auctions, delegation, and persuasion.  
Here, we take the next natural step in this agenda by studying extreme points of (potentially asymmetric) feasible reduced forms.  
We show that this broader set also admits a tractable description, one that can be directly exploited in economic applications.  

At an intuitive level, extremal reduced forms correspond to ``corners'' of the feasible set, where Border's condition is maximally tight.  
If the inequality were slack anywhere, then the reduced form could be perturbed in different directions and written as a convex combination of other feasible points.  
Our main result in this section establishes that extremality arises precisely when the feasibility constraints along the principal curve bind at every point.

\begin{theorem}
\label{th: extremality}
A feasible reduced form $\bm x$ is extremal if and only if
\[
B(\bm \nu(s)) = 1 \quad \forall s \in [0,1],
\]
or equivalently:
\begin{align}
\label{eq: extremality}
\overline{\psi}(\iota) = \max\left\{\overline{\psi}(0),\sqrt[n]{\iota}\right\}
\quad \forall \iota \in [0,1].
\end{align}
\end{theorem}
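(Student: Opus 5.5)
The plan has three parts: first the equivalence between the two displayed conditions, then the implication ``binding along the curve $\Rightarrow$ extremal'', then its converse via a local perturbation.

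\textbf{Equivalence of the two conditions.} This part is pure calculus on formula \eqref{eq: border curve} of Theorem~\ref{th: feasibility}. Suppose $\overline{\psi}(\iota)=\max\{\overline{\psi}(0),\iota^{1/n}\}$. On $[\overline{\psi}(0)^n,1]$ we have $d\ln\overline{\psi}=d\iota/(n\iota)$, so the integral term equals $1-\overline{\psi}^{-1}(s)=1-s^n$, and the first term equals $s^n$ (or the constant $\overline{\psi}(0)^n$ for small $s$). This gives $B(\bm\nu(s))\equiv 1$.

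Conversely, suppose $B(\bm\nu(s))\equiv1$. I would write $\iota=\overline{\psi}^{-1}(s)$ and read the identity as an equality of Stieltjes measures in $s$. On $s>\overline{\psi}(0)$ it becomes
\[
n s^{n-1}\,ds=n\,\iota\, d\ln\overline{\psi}(\iota).
\]
This forces $\overline{\psi}(\iota)^n=\iota$ on the set where $\overline{\psi}$ strictly increases. Flat pieces of $\overline{\psi}$ would make $B$ jump upward in $s$ with no compensating term, and $B\le 1$ then excludes them. Continuity of $\overline{\psi}$ together with $\overline{\psi}(1)=1$ then yields \eqref{eq: extremality}.

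\textbf{Binding along the curve implies extremality.} Suppose $\bm x=\tfrac12(\bm y+\bm w)$ with $\bm y,\bm w$ feasible. Because $B$ is affine in the reduced form and $B_{\bm x}(\bm\nu(s))=1$, we get $B_{\bm y}(\bm\nu(s))=B_{\bm w}(\bm\nu(s))=1$, where $\bm\nu$ is the principal curve of $\bm x$. Since $B_{\bm y}\le1$ everywhere, each $\bm\nu(s)$ maximizes $B_{\bm y}$ on $\{\prod_i u_i\ge s^n\}$.

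The first-order (Lagrange/tangency) conditions at this maximizer give $\nu_i\, y_i(\nu_i)=c_{\bm y}(s)$, the same value for all $i$, understood via one-sided limits to handle jumps of $y_i$. Differentiating $B_{\bm y}(\bm\nu(s))\equiv1$ along the curve then gives
\[
\sum_i\Big(\frac{s^n}{\nu_i}-y_i(\nu_i)\Big)d\nu_i=(s^n-c_{\bm y}(s))\sum_i \frac{d\nu_i}{\nu_i}=0,
\]
so $c_{\bm y}(s)=s^n$ wherever the curve moves. Hence $y_i(\nu_i(s))=s^n/\nu_i(s)=x_i(\nu_i(s))$ on the range of each $\nu_i$.

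Off that range, $\psi_i$ has flat pieces, which correspond to jumps of $u\,x_i(u)$. There I would use monotonicity of $y_i$ and $w_i$, together with $y_i+w_i=2x_i$ and equality of the tail integrals that is forced at the endpoints, to conclude $y_i=x_i$. The same argument applies to $\bm w$, so $\bm y=\bm w=\bm x$.

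\textbf{Extremality implies binding (contrapositive).} Suppose $B(\bm\nu(s_0))<1$. By continuity the slack is at least some $\delta>0$ for $s$ in a compact interval $S$. By the maximizer property from the proof of Theorem~\ref{th: feasibility}, $B(\bm u)\le 1-\delta$ for every cutoff $\bm u$ with $\prod_j u_j\in S^n$.

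Pick a bidder $i$ and a window $[a,b]$ of $u_i$-values traversed by $\nu_i$ over $S$ on which $x_i$ is not constant; such an $i$ exists because the curve moves. Build a perturbation $g$ supported on $[a,b]$ with $\int_a^b g=0$ and with both $x_i\pm\varepsilon g$ monotone. For example, take $g$ to be a clipped copy of $x_i$ between two levels, minus a matching constant on the upper part of the window placed at a point where $x_i$ increases enough.

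Because $\int g=0$, tail integrals change only for cutoffs $u_i\in[a,b]$. The aim is that these cutoffs have products within the slack region, so that for small $\varepsilon$ both perturbations remain feasible and $\bm x$ is not extremal.

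\textbf{Main obstacle.} This last step is where I expect the difficulty. Cutoffs with $u_i\in[a,b]$ but other coordinates near $0$ or $1$ have products outside $S^n$, so slack is not automatic there. I would try to handle this by bounding the change in $B$ by $\varepsilon\|g\|_1$ while showing that such off-curve cutoffs already have slack bounded away from zero. That bound should follow from the strict monotonicity in $\prod_j u_j$ of the curve-maximized Border value established in Theorem~\ref{th: feasibility}.
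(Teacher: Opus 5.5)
\textbf{The gap is in the direction ``extremal $\Rightarrow$ binding,'' and it is more basic than the obstacle you flag.}

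Your claim that some bidder has $x_i$ non-constant on a slack window ``because the curve moves'' is false. More fundamentally, a single-bidder perturbation usually does not exist at all. You need a nonzero $g$, supported in a window, with $\int g=0$ and $x_i\pm\varepsilon g$ monotone. Such a $g$ exists only if $x_i$ has a non-step piece or at least three jump points there. On a step piece, $g$ must itself be a step function breaking only at jumps of $x_i$, which leaves $k-1$ free values subject to one linear constraint.

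Concrete counterexample: $n=2$ and $x_1=x_2=\tfrac12$ on $[0,1)$. Then
\begin{itemize}
\item $\psi_i(\iota)=\min\{2\iota,1\}$ and $\bm\nu(s)=(s,s)$;
\item $B(\bm\nu(s))=1-s+s^2<1$ on $(0,1)$;
\item $\bm x$ is the average of ``always award to bidder 1'' and ``always award to bidder 2,'' so it is not extremal.
\end{itemize}
Yet each $x_i$ jumps only at $0$ and $1$, so no one-coordinate mean-preserving perturbation exists anywhere. Every witness to non-extremality here must transfer mass between bidders. Its support must also reach the binding cutoffs $\bm\nu(0)=\bm 0$ and $\bm\nu(1)=(1,1)$. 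So your plan of perturbing one coordinate with support strictly inside the slack region cannot work in general.

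\textbf{What the paper does instead.} It uses your kind of one-coordinate perturbation only to show that along a sag each moving $x_i$ is a step function with finitely many jumps (the structure-along-a-sag lemma). It then proves three further facts:
\begin{itemize}
\item each moving coordinate jumps at both ends of the sag, via the lower bound $\underline G$, which is affine in $s^n$;
\item at least two coordinates move when $G=1$ at both ends;
\item shifting mass $\kappa$ from bidder $j$'s last step to bidder $i$'s last step preserves feasibility.
\end{itemize}
For the last point, Border is checked only on the finite grid of jump points, using piecewise linearity. It is extended to all cutoffs by the perturbation lemmas, which use that binding cutoffs are unaffected when total mass is preserved. When the sag starts at $\overline\psi(0)$, a one-signed bump suffices because nothing binds below.

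\textbf{Your proposed remedy for the flagged obstacle also fails.} $G$ is not strictly monotone; it equals $1$ at sag endpoints. What yields uniform slack on $\{u_i\in[a,b]\}$ is uniqueness of the maximizer $\bm\nu(s)$ in the auxiliary lemma, plus compactness.

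\textbf{The other parts are essentially sound, with two corrections.}
\begin{itemize}
\item Your convex-decomposition argument for ``binding $\Rightarrow$ extremal'' is a valid alternative to the paper's route through score allocations and exposedness. It is cleanest via the unconstrained condition $\prod_{j\ne i}\nu_j(s)\le y_i(\nu_i(s))$ at $s^n=u\,x_i(u)$, combined with $y_i+w_i=2x_i$. There is no ``off the range'' case, since $\nu_i$ maps onto $[\psi_i(0),1]$ by continuity.
\item In the equivalence step, $B(\bm\nu(\cdot))$ is continuous, so flat pieces of $\overline\psi$ produce no jumps. The right identity is $B(\bm\nu(s))=s^n+n\int_s^1\overline x$ with $\overline x=\overline\psi^{-1}/u$.
\end{itemize}
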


Theorem \ref{th: extremality} makes it straightforward to verify extremality.  
For example, recall the setting of Example \ref{ex: feasibility}, where $B$ along the principal curve is given by \eqref{eq: example 2 border}.  
Direct calculations show that a reduced form of the power type,
\[
\bm x \text{ is extremal } \iff \sum_{i=1}^n \alpha_i = 1.
\]
The same conclusion follows immediately from \eqref{eq: extremality}, using the fact that $\overline{\psi}(\iota) = \iota^{\frac{\sum_{i=1}^n \alpha_i}{n}}$.  

Although the characterization in Theorem \ref{th: extremality} may at first appear abstract, it has a natural interpretation when viewed through the principal curve.  
For each bidder $i$'s type $u_i = u$, the principal curve specifies a point $\bm\nu(s)$ on the curve that corresponds to that type, that is 
\[
s = \big(\nu_i\big)^{-1}(u).
\]
Extremality requires that the feasibility inequality be tight along this path, meaning that bidder $i$ with type $u$ wins against all opponents whose types lie below the point on the principal curve and loses against all opponents whose types lie above that point.
Formally, this statement can be expressed as\footnote{To see it, recollect that $ux_i(u) = \psi_i^{-1}(u)$ due to the definition of the $\psi$-transform, and $\displaystyle\prod_{j \neq i}\nu_j(s) = s^n/u = \psi_i^{-1}(u)/u$ due to the definition of the principal curve and Equation \eqref{eq: extremality}.}
\begin{align}
\label{x along curve}
x_i(u) = \Pr\left(u_j < \nu_j(s)\;\; \forall j \neq i\right) = \prod_{j \neq i} \nu_j(s), 
\end{align}
Figure \ref{fig:extremality} illustrates this point with a simple numerical example.
In this figure, the black line corresponds to the principal curve with $u_1$ and $u_2$ on the horizontal and vertical axes respectively, constructed from an extremal reduced form.

\begin{figure}[!htb]
\centering
\begin{tikzpicture}[scale=1.15]

\draw (0,0) -- (0,4) -- (4,4) -- (4,0) -- (0,0);

\fill[red!15] (0,1) -- (4,1) -- (4,4) -- (0,4) -- cycle;
\fill[red!15] (1,1) -- (2,2) -- (2,4) -- (1,4) -- cycle;
\fill[red!15] (2,2) -- (3,2) -- (3,4) -- (2,4) -- cycle;
\fill[red!15] (3,4) -- (4,4) -- (3,3) -- cycle;

\fill[gray!20] (1,0) -- (2,0) -- (2,2) -- (1,1) -- cycle;
\fill[gray!20] (2,0) -- (3,0) -- (3,2) -- (2,2) -- cycle;
\fill[gray!20] (3,0) -- (4,0) -- (4,4) -- (3,3) -- cycle;

\draw[ultra thick, black] (1,1) -- (2,2);
\draw[ultra thick, black] (2,2) -- (3,2);
\draw[ultra thick, black] (3,2) -- (3,3);  
\draw[ultra thick, black] (3,3) -- (4,4);

\draw[blue, thick, ->] (2,0) to[bend right=12] (2,1.88);
\draw[blue, thick, ->] (3,0) to[bend right=12] (3.1,2.9);
\draw[red, thick, ->] (2,2) to[bend right=12] (0,2);
\draw[red, thick, ->] (3,3) to[bend right=12] (0,3);
\filldraw[black] (2,2) circle (2pt) node[above, xshift=0.5pt, yshift=0.pt] {$s'$};
\filldraw[black] (3,3) circle (2pt) node[above, xshift=0.5pt, yshift=0.pt] {$s''$};

\draw[dotted] (1,1) -- (1,0) node[below]{$\sfrac{1}{4}$};
\draw[dotted] (2,3) -- (2,0) node[below]{$\sfrac{1}{2}$};
\draw[dotted] (3,3) -- (3,0) node[below]{$\sfrac{3}{4}$};
\draw[dotted] (4,4) -- (4,0) node[below]{$u_1$};

\draw[dotted] (1,1) -- (0,1) node[left]{$\sfrac{1}{4}$};
\draw[dotted] (2,2) -- (0,2) node[left]{$\sfrac{1}{2}$};
\draw[dotted] (3,3) -- (0,3) node[left]{$\sfrac{3}{4}$};
\draw[dotted] (0,4) -- (0,4) node[left]{$u_2$};

\end{tikzpicture}
\caption{
Principal curve (black) induced by an extremal reduced form.
The points $s',s''$ correspond to bidder 1 types $u=\sfrac{1}{2},\sfrac{3}{4}$ and the associated bidder $2$'s cutoffs $\nu_2(s')=\sfrac{1}{2}$, $\nu_2(s'')=\sfrac{3}{4}$.}

\label{fig:extremality}
\end{figure}
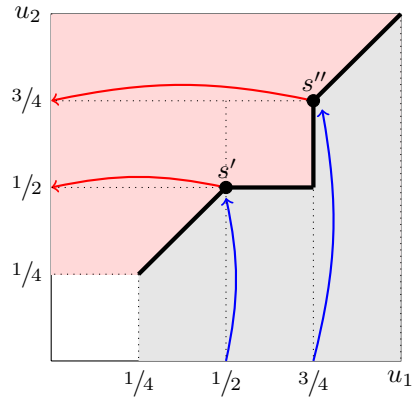

For example, bidder 1 with type $u = \sfrac{1}{2}$ wins with probability $\sfrac{1}{2}$ and with type $u = \sfrac{3}{4}$ wins with probability $\sfrac{3}{4}$.  
Using \eqref{x along curve}, we can recover the reduced form consistent with the principal curve shown in the figure:
\begin{align}
\label{x1 example extremality}
x_1(u)&=u\cdot\mathbf{1}_{[\sfrac{1}{4},\sfrac{1}{2})\cup [\sfrac{3}{4},1]}(u)+\sfrac{1}{2}\cdot\mathbf{1}_{[\sfrac{1}{2},\sfrac{3}{4})}(u),\\
\label{x2 example extremality}
x_2(u) &= \max\{\sfrac{1}{4},u\}\cdot \mathbf{1}_{[0,\sfrac{1}{2})}(u) + \min\{u+\sfrac{1}{4},1\}\cdot \mathbf{1}_{[\sfrac{1}{2},1)}(u).
\end{align}

This representation makes clear that the allocation inducing this extremal reduced form is unique: bidder 1 wins in the gray region, bidder 2 in the red region, and otherwise the good remains unallocated.  
Finally, note that the reduced form defined in \eqref{x1 example extremality}, \eqref{x2 example extremality} satisfy two simple inverse relations:  
$x_1 = x_2^{-1}$ on $x_1>0$ and $x_2 = x_1^{-1}$ on $x_2>0$.  
For the two-bidder case, \citet{yang2025multidimensional} established that this characterization is equivalent to extremality.  
With more than two bidders, no such unidimensional inverse relationship exists: the coordinates of $\bm x$ are genuinely nonlinear.  
Equation~\eqref{x along curve} can therefore be viewed as the natural multidimensional generalization of the two-bidder case.

\subsection{Allocations inducing extremal reduced forms}
\label{sec: score allocations}

The analysis in the previous section showed that extremal reduced forms are entirely determined by their principal curves and that Border’s condition must bind along these curves.
This already indicates that such reduced forms can arise only from highly structured allocation rules.
We now turn to these allocation rules.
Understanding them clarifies the geometry behind extremality and will allow us to revisit---and sharpen---the classical BIC–DIC equivalence.

The BIC–DIC equivalence, first established by \citet{manelli2010bayesian}, states that every extremal reduced form can be implemented in dominant strategies.  
In other words, one can find an allocation $\bm z$ that induces the  reduced form in question such that each $z_i$ is weakly increasing in bidder  $i$’s own type.\footnote{See also \citet{gershkov2013equivalence} and \citet{goeree2023geometric}.}  
Their original proof considers piecewise-constant reduced forms together with an approximation argument.  
Our characterization of extremality provides a more direct route to obtain the BIC–DIC equivalence without relying on limiting arguments. 
We explicitly characterize how to induce extremal reduced form and establish that the allocations inducing them satisfy the aforementioned DIC property as well as two other properties that could not be seen with the earlier approach.
Since explained in Footnote \ref{ftn: Choquet}, any feasible reduced form can be expressed as an integral over extremal reduced forms, the equivalence emerges almost automatically.

To make this structure precise, we introduce the notion of a score allocation.  
\begin{definition}
\label{def: score allocation}
An allocation $\bm z$ is a \textbf{score allocation} if there exists a tuple of strictly increasing, right-continuous functions $\bm q = (q_1,\dots,q_n): [0,1] \to \mathbf{R}^n$ such that
\begin{align}
\label{eq: z score}
z_i(\bm{u}) =
\begin{cases}
\mathbf{1}_{q_i^{-1}(\mathbf{R}_+)}(u_i)
& \text{if } q_i(u_i) > \max_{j \neq i} q_j(u_j),
\\
0 
& \text{otherwise}.
\end{cases}
\end{align}
\end{definition}
Here, $q_i(u_i)$ is the \textbf{score} assigned to bidder $i$ with type $u_i$, and the good is awarded to the bidder with the highest nonnegative score.\footnote{Uniqueness here refers to the allocation rule $\bm z$, not to the score functions $\bm q$ themselves.}  
Score allocations provide a remarkably simple description of the complex feasible set: they are exactly the allocations that induce extremal reduced forms.
Such allocations are also familiar in classical settings.
For example, Myerson’s optimal auction is a score allocation whenever bidders' MVVs are strictly increasing in their types: the mechanism that awards the object to the bidder with the highest nonnegative MVV is exactly of this form.\footnote{As explained in Section \ref{sec: linear}, bidder $i$'s Myerson's  virtual value as a function of their type $\theta_i$ equals $\theta_i - \frac{1-F_i(\theta_i)}{f_i(\theta_i)}$.
In the quantile space, this object can be expressed as $v_i(u)-(1-u)v_i'(u)$, where $v_i$ is the inverse of $F_i$. \label{ftn: virtual value}}

Score allocations induce extremal reduced forms. 
To see it, take the set of scores $\bm q$ and remark that the score allocation $\bm z$ is a unique (pointwise) maximizer
\[
\sum_{i=1}^n\int_{[0,1]^n} q_i(u_i)\tilde z_i(\bm u)d\bm u
\]
in the space of all allocations, where uniqueness is the direct consequence of strict monotonicity of scores.
As a result, the reduced form $\bm x$ induced by that score allocation, i.e., $x_i(u_i)=\mathbf{E}[z_i|u_i]$,
is a unique maximizer of 
\[
\sum_{i=1}^n\int_{[0,1]^n} q_i(u_i)\tilde x_i(\bm u)d\bm u
\]
in the space of all feasible reduced forms.
This shows that $\bm x$ is not merely extremal but even exposed.%
\footnote{Recall that a point in a convex subset of a topological vector space is exposed if some continuous linear functional attains its strict maximum at that point, and every exposed point is extremal, e.g., see \cite{kleiner2021extreme}. 
Here, the objective is a continuous linear functional when the set of CDFs on the unit interval is viewed as a subset of Lebesgue integrable functions endowed with the norm topology.}
Strikingly, the converse is also true as shown in the theorem below.

\begin{theorem}
\label{th: scores}
A feasible reduced form is extremal if and only if it can be induced by a score allocation. 
Moreover, whenever this holds, the score allocation with scores
$q_i(u) = u x_i(u)$ on $[\psi_{i}(0),1]$ and $q_i(u)<0$ on its complement induces it.
\end{theorem}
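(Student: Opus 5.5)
The "if" direction was already given in the text preceding the theorem: a score allocation is the unique pointwise maximizer of $\sum_i\int q_i(u_i)\tilde z_i(\bm u)\,d\bm u$. Hence its reduced form is an exposed point of the feasible set, and therefore an extreme point. The remaining task is the converse together with the explicit choice of scores. So take an extremal feasible $\bm x$. By Theorem \ref{th: extremality}, $\overline\psi(\iota)=\max\{\overline\psi(0),\sqrt[n]{\iota}\}$. Define $q_i(u)=ux_i(u)$ on $[\psi_i(0),1]$ and extend $q_i$ by any strictly increasing negative function below $\psi_i(0)$. I will show that the score allocation $\bm z$ built from these $q_i$ induces $\bm x$.

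\textbf{Step 1 (the scores are admissible).} On $[\psi_i(0),1]$ the function $q_i=\psi_i^{-1}$ is right-continuous and nonnegative. It is positive above $\psi_i(0)$ and may equal zero at $\psi_i(0)$, which is harmless because that is a single point. The difficulty is that $q_i$ is only weakly increasing: it has flat pieces wherever $\psi_i$ jumps, i.e.\ wherever $x_i$ has an interval on which $ux_i$ is constant. Since $ux_i$ is constant on such an interval while $u$ grows, $x_i$ would have to be strictly decreasing there, contradicting monotonicity. So $q_i$ is in fact strictly increasing on its positive region. I will state this carefully, since $x_i$ is a CDF and flatness of $ux_i$ forces $x_i$ to decrease strictly.

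\textbf{Step 2 (computing the induced interim probability).} Fix $u\ge\psi_i(0)$ with $x_i(u)>0$, and let $c=q_i(u)=\psi_i^{-1}(u)$. Bidder $i$ wins exactly when $q_j(u_j)<c$ for every $j\ne i$. Equivalently, up to null sets, $u_j<\psi_j(c)$ when $c>\psi_j^{-1}(\cdot)$ is in range, and with the negative scores handled automatically since $q_j<0<c$ below $\psi_j(0)$. Ties have probability zero because the $q_j$ are strictly increasing and the types are continuous. Independence then gives
\[
\Pr(\text{$i$ wins}\mid u)=\prod_{j\neq i}\psi_j(c)=\frac{\overline\psi(c)^n}{\psi_i(c)}.
\]
Extremality gives $\overline\psi(c)^n=c$ whenever $c\ge\overline\psi(0)^n$, and continuity of $\psi_i$ gives $\psi_i(c)=\psi_i(\psi_i^{-1}(u))=u$. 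Hence $\Pr(\text{$i$ wins}\mid u)=c/u=x_i(u)$, which is exactly \eqref{x along curve}. Below $\psi_i(0)$ the score is negative, so $z_i=0=x_i$.

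\textbf{Step 3 (main obstacle: the boundary region).} The delicate case is $c<\overline\psi(0)^n$, i.e.\ small positive scores at the bottom. Here $\overline\psi$ is flat at level $\overline\psi(0)$, so $\prod_j\psi_j(c)=\overline\psi(0)^n$ need not equal $c$. I plan to show that this region has measure zero. Since $\overline\psi$ is constant on $[0,\overline\psi(0)^n]$ and each $\psi_j$ is increasing, every $\psi_j$ must be constant there. Therefore $\psi_i^{-1}(u)\ge\overline\psi(0)^n$ for all $u>\psi_i(0)$, so $c<\overline\psi(0)^n$ occurs only at $u=\psi_i(0)$, a null set. I would also confirm that $x_i$ is correctly recovered almost everywhere, which suffices because reduced forms are identified up to a.e.\ modification and right-continuity. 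The technical care lies in handling generalized inverses at jumps, namely equalities like $\psi_j(\psi_j^{-1}(u))=u$ and the event $\{q_j(u_j)<c\}=\{u_j<\psi_j(c)\}$ a.e. I expect this bookkeeping to be the main source of difficulty, rather than any conceptual step.
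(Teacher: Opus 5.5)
Your proposal is correct and follows the paper's own route. The ``if'' direction is the exposed-point argument (the paper's Statement III). For the converse you invoke the ``only if'' half of Theorem \ref{th: extremality}, which the paper proves as Statement I without using Theorem \ref{th: scores}, so there is no circularity. You then compute $u\tilde x_i(u)=\prod_j\psi_j(\psi_i^{-1}(u))=\max\{\overline\psi(0)^n,\psi_i^{-1}(u)\}=ux_i(u)$, exactly as in the paper's Statement II.

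Two small remarks. First, the ``difficulty'' in your Step 1 is not real: $\psi_i$ is continuous, and $x_i>0$ on $(\psi_i(0),1]$ already makes $ux_i$ strictly increasing on $[\psi_i(0),1]$. Second, in Step 3 the null-set argument is unnecessary. Since $\psi_i$ is constant on $[0,\overline\psi(0)^n]$, right-continuity of $ux_i$ gives $\psi_i^{-1}(u)\ge\overline\psi(0)^n$ for every $u\ge\psi_i(0)$, including the endpoint, which is how the paper closes the argument.
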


This theorem makes the link between extremality and score allocations completely transparent.  
First, the entire set of extremal reduced forms can be generated simply by working with scores and then computing bidders’  winning probabilities.  
Second, to test whether a given reduced form is extremal, it suffices to construct the corresponding score function---just the type multiplied by the interim winning probability, truncated below $\psi_{i}(0)$ and verify that $\bm z$ defined in \eqref{eq: z score} integrates to the reduced form $\bm x$ that we started with.
Both tasks are strikingly simple compared to the original $n$-dimensional Border's condition.

Although score allocations have appeared in the social choice literature, their connection to extremality has not, to the best of our knowledge, been recognized.  
\citet{mishra2014non} referred to them as \textit{simple utility maximizers} and showed that they can be axiomatized by three basic properties:  
\begin{enumerate}
\item 
\textit{Deterministic.} For every type profile exactly one bidder receives the good, or none if all scores are strictly negative.  
\item 
\textit{Monotone.} Each bidder’s allocation is weakly increasing in their own type.  
\item
\textit{Nonbossy.} A bidder cannot change the allocation of others without also changing their own. Formally,
\[
z_i(u_i,\bm u_{-i}) = z_i(u_i',\bm u_{-i})
\;\Longrightarrow\;
\bm z(u_i,\bm u_{-i}) = \bm z(u_i',\bm u_{-i})
\quad \forall i, \; \bm u_{-i}.
\]
In words, if bidder $i$’s own outcome does not change when their type is varied, then the outcome for all other bidders must also remain unchanged.  
This condition rules out “bossy’’ manipulations where the bidder could leave their own allocation fixed while reshuffling who among their rivals receives the good.  
\end{enumerate}

Taken together, these properties have a sharp geometric implication: the regions of the type space where the auctioneer keeps the good, $\{\bm u \in [0,1]^n : \bm z(\bm u)=0\}$, and the regions where each bidder $i$'s type $u_i \in [0,1]$ wins, the sets $\{\bm u_{-i} \in [0,1]^{n-1} : z_i(u_i,\bm u_{-i})=1\}$ are hyperrectangles aligned with the axes.  
This rectangular structure is exactly what makes score allocations induce extremal reduced forms.
Note that a vast majority of ex post allocations do not admit this rectangular structure, e.g., suppose the auctioneer awards the good to bidder $1$ when $u_1 > \frac{u_2+u_3}{2}$, and otherwise allocates it efficiently among bidders $2$ and $3$.
Clearly, the set of $(u_2,u_3)$ such that bidder $1$'s type $u_1$ wins is not a rectangle.

\section{Optimality}
\label{sec: optimality}

Having characterized feasibility and extremality, we now turn to optimality.
A central difficulty is that the auctioneer’s objective is generally nonlinear in interim winning probabilities, which rules out the convenient pointwise maximization available in Myerson's linear case.
Instead, we reformulate the problem as an optimal control problem that constructs the principal curve continuously.
This reformulation reveals the key feature of the optimal auction---an \textit{equalization of marginal revenues along the optimal principal curve}---and allows us to characterize the optimum in regular settings in Sections \ref{sec: optimum extremal} and \ref{sec: optimum non-extremal}.

\subsection{$\delta$-transform}
\label{sec: delta transform}

Recall that feasibility can be expressed in terms of the geometric mean of the $\psi$-transforms $(\psi_{1},\dots,\psi_{n})$  along the principal curve.
This observation naturally suggests searching for the optimum directly over these transformations that satisfy the feasibility criterion in Theorem \ref{th: feasibility}.
However, two challenges immediately arise.  
First, the objective is still expressed in terms of the original interim winning probabilities $(x_{1},\dots,x_{n})$, so both revenue and feasibility must first be written in the same language.  
Second, working directly with $\psi$-transforms is difficult because each $\psi_i$ must be continuous, weakly increasing, and satisfy the additional structural condition that $\iota/\psi_i(\iota)\in[0,1]$ is also weakly increasing (see Footnote \ref{ftn: feasibility}).  

To resolve both issues simultaneously, we introduce a logarithmic reparameterization of the $\psi$-transform.  
For each $x\in\mathscr{X}$ and $t\geq 0$, let 
\[
\delta(t) = -\ln \psi(e^{-t}),
\]
which we term a \textbf{$\delta$-transform} of $x$ for short, since $\psi$ is derived from $x$.
The properties of the $\psi$-transform established in Section \ref{sec: feasibility}—especially Footnote \ref{ftn: feasibility}—imply that $\delta$ is absolutely continuous with derivative lying between $0$ and $1$. 
Thus, $\delta'(t)$ is well defined almost everywhere and can be interpreted as the elasticity of the $\psi$-transform, 

Just as $\psi$ provides a reparametrization of a CDF $x\in\mathscr{X}$, the function $\delta$ provides an equivalent (and more tractable) reparametrization, and therefore we may work directly with $\delta$ while recovering the underlying CDF when needed.\footnote{Formally, $x\mapsto\delta$ is a bijection between CDFs on the unit interval and absolutely continuous functions on $\mathbf{R}_{+}$ that start at zero and have derivative in $[0,1]$.\label{ftn: bijection}}
To see how the $\delta$-transform encodes CDFs, observe that it defines a parametric curve
\[
t \in \mathbf{R}_+ \mapsto \Big(e^{-\delta(t)},\, e^{\delta(t)-t}\Big)
\]
in the unit square, where the horizontal and vertical axes correspond to type $u$ and probability $x$, respectively.  
For any fixed type $u$, the highest value of the second coordinate attained on this curve when the first coordinate equals $u$ is precisely the value of the CDF at that type $x(u)$, that is
\begin{align}
\label{eq: x from delta}
x(u) = 
e^{\delta(t)-t}\Big|_{t = (\delta)^{-1}(-\ln u)}.
\end{align}
Thus, the $\delta$-transform captures how quickly the cumulative probability encoded in $x$ accumulates as we move along the unit interval, starting from $u=1$ at $t=0$. 
In the appendix, we elaborate more on the geometry of this transform.

\subsection{Optimal control along the principal curve}
\label{sec: oc along pc}

In this section, we first use the $\delta$-transform to place both feasibility and revenue in the same transformed space so that the auctioneer's problem no longer mixes different representations.
We then show that the auctioneer’s problem can be expressed as a textbook optimal control problem with absolutely continuous state variables to which  standard tools from convex analysis and continuous-time control theory apply  directly.  

Given a reduced form $\bm x$ (not necessarily feasible), we claim that the auctioneer’s revenue can be written as
\begin{align}
\label{eq: objective R}
\sum_{i=1}^n\int_0^\infty e^{-t}R_i(\delta_i(t),t)dt,
\end{align}
where $R_i(\delta,t)$ is given by%
\footnote{Since $H_i(x,u)=J_i(x,F_i^{-1}(u))$, where $J$ is twice continuously differentiable and $F_i$ is continuously differentiable with $f_i>0$, the function $R_i$ is well-defined. 
In fact, it is continuously differentiable with $R_i(0,t)=0$ and uniformly bounded $\tfrac{\partial }{\partial \delta}R_i$.}
\begin{align}
\label{eq: R}
 R_i(\delta,t) = \int_0^\delta \frac{\partial}{\partial x}H_i(e^{\tau-t},e^{-\tau})\,d\tau.
\end{align}
Economically, $R_i(\delta,t)$ aggregates marginal contributions 
$\frac{\partial}{\partial x}H_i$ of all types $u=e^{-\tau}$ with 
$\tau\in[0,\delta]$, where each such type is evaluated at allocation 
probability $x=e^{\tau-t}$ along the curve indexed by $t$.  
The outer integral then sums these cumulative marginal values across all such $t$-curves.
To see the logic behind \eqref{eq: objective R}, note that for each bidder $i$, we have
\begin{gather*}
\int_0^1 H_i(x_i(u),u)du 
= \int_0^1 \int_0^1 \mathbf{1}_{[0,ux_i(u)]}(\iota)\frac{\partial}{\partial x}H_i\left(\frac{\iota}{u},u\right)d\iota d\ln u
\\
= \int_0^1 \int_0^1 \mathbf{1}_{[\psi_i(\iota),1]}(u)\frac{\partial}{\partial x}H_i\left(\frac{\iota}{u},u\right)d\ln u d\iota 
=\int_0^\infty e^{-t}R_i(\delta_i(t),t)dt,
\end{gather*}
where we used the definition of the $\psi$- and $\delta$-transforms, and \eqref{eq: R}.

As for feasibility, recollect that $\bm x$ is feasible if and only if Border's constraints hold along the principal curve.
Using \eqref{eq: border curve} from Theorem \ref{th: feasibility}, we can express feasibility of $\bm x$ through its $\delta$-transforms (see the appendix for details) as
\begin{align}
\label{eq: feasibility delta}
\int_0^t e^{-\tau}\sum_{i=1}^n \delta_i'(\tau)d\tau \;\leq\; 1-e^{-\sum_{i=1}^n \delta_i(t)}.
\end{align}
Economically, \eqref{eq: feasibility delta} plays the role of a dynamic feasibility constraint. 
The left-hand side records how much the total winning probability has been allocated up to the point where the principal curve reaches types at quantile $e^{-t}$. 
The right-hand side records how much probability could possibly have been allocated by that point, given the cumulative height $\sum_{i=1}^{n}\delta_i(t)$ of the principal curve. 

Putting everything together—the expression for revenue in terms of $\bm\delta$ and the feasibility condition in the same variables—we obtain the following representation of the auctioneer’s problem.

\begin{proposition}
\label{prop: oc problem}
A reduced form is optimal if and only if its $\delta$-transforms solve the following program:
\begin{align}
\label{oc problem}
\max_{\bm \delta} \; \int_0^\infty e^{-t}\sum_{i=1}^n R_i(\delta_i(t),t)dt \;\;
\text{s.t.}\;\; 
\bm \delta'\in[0,1]^n, \;\;\bm \delta(0)=\bm 0, \;\;\text{and \eqref{eq: feasibility delta}}.
\end{align}
Moreover, there exists an optimal reduced form that is extremal if and only if there exists a solution to \eqref{oc problem} satisfying
\[
\sum_{i=1}^n \delta_i'(t) = \mathbf{1}_{[0,T]}(t) \;\; \text{for some}\;\; T\in[0,\infty].
\]
\end{proposition}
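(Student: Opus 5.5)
The plan is to treat Proposition \ref{prop: oc problem} as a change of variables. The map $x\mapsto\delta$ is a bijection onto absolutely continuous functions with $\delta(0)=0$ and $\delta'\in[0,1]$ (Footnote \ref{ftn: bijection}). Under this map the objective \eqref{revenue u} becomes \eqref{eq: objective R}, and Border's constraint becomes \eqref{eq: feasibility delta}. Once both identifications are justified, the two programs \eqref{abstract problem} and \eqref{oc problem} have the same feasible sets and the same objective values, so their optimizers correspond. The second claim then reduces to translating the extremality criterion of Theorem \ref{th: extremality} into $\delta$-language.

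\textbf{Step 1 (bijection).} First I verify Footnote \ref{ftn: bijection} from Footnote \ref{ftn: feasibility}. Write $\iota=e^{-t}$.
\begin{itemize}
\item Monotonicity of $\psi$ gives $\delta'\ge 0$.
\item Monotonicity of $\iota/\psi(\iota)$ means $t-\delta(t)$ is nondecreasing, so $\delta'\le 1$. In particular $\delta$ is Lipschitz, hence absolutely continuous.
\item $\psi(1)=1$ gives $\delta(0)=0$.
\end{itemize}
The inverse map is $\psi(\iota)=e^{-\delta(-\ln\iota)}$, with $x$ recovered via \eqref{eq: x from delta}.

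\textbf{Step 2 (objective).} I make the displayed Fubini computation rigorous. Write $H_i(x_i(u),u)=H_i(0,u)+\int_0^{x_i(u)}\partial_xH_i(y,u)\,dy$, drop the constant term, and substitute $y=\iota/u$. The set $\{\iota\le ux_i(u)\}$ equals $\{u\ge\psi_i(\iota)\}$ up to null sets, because $\psi_i$ is the right-continuous inverse of $u\mapsto ux_i(u)$. Then set $u=e^{-\tau}$ and $\iota=e^{-t}$. Boundedness of $\partial_xH_i$ justifies exchanging the integrals, and the result is \eqref{eq: R}.

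\textbf{Step 3 (constraint).} I start from Theorem \ref{th: feasibility} and formula \eqref{eq: border curve}. Since $\overline\psi=e^{-\bar\delta/n}$ with $\bar\delta=\sum_i\delta_i$, setting $\iota=e^{-\tau}$ gives
\[
n\int_{\overline\psi^{-1}(s)}^1\iota\,d\ln\overline\psi=\int_0^{t}e^{-\tau}\bar\delta'(\tau)\,d\tau .
\]
Here $t$ corresponds to $\overline\psi^{-1}(s)$, so that $s^n=e^{-\bar\delta(t)}$ whenever $s\ge\overline\psi(0)$.

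As $s$ ranges over $[\overline\psi(0),1]$, the parameter $t$ ranges over all of $\mathbf R_+$, modulo flat pieces of $\bar\delta$. On those flat pieces both sides of the constraint are constant, so nothing is lost. For $s<\overline\psi(0)$ the constraint is implied by its value at $s=\overline\psi(0)$, which corresponds to $t\to\infty$ and follows by continuity. Hence $B(\bm\nu(s))\le1$ for all $s$ is equivalent to \eqref{eq: feasibility delta} for all $t$.

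I expect this bookkeeping to be the main obstacle: the non-invertibility of $\overline\psi$, the endpoint $\overline\psi(0)>0$ corresponding to $\bar\delta(\infty)<\infty$, and the limit $t\to\infty$.

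\textbf{Step 4 (extremality).} By Theorem \ref{th: extremality}, extremality is equivalent to \eqref{eq: extremality}. In $\delta$-language this reads $\bar\delta(t)=\min\{t,\bar\delta(\infty)\}$, i.e. $\bar\delta'=\mathbf 1_{[0,T]}$ with $T=-n\ln\overline\psi(0)\in[0,\infty]$. Conversely, if $\bar\delta'=\mathbf 1_{[0,T]}$, the bijection of Step 1 shows that \eqref{eq: extremality} holds, so the reduced form is extremal.

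Combining this with the correspondence of optimizers established in Steps 1--3 proves the "moreover" statement: an optimal extremal reduced form exists exactly when some solution of \eqref{oc problem} satisfies $\sum_i\delta_i'=\mathbf 1_{[0,T]}$.
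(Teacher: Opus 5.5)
Your proposal is correct and follows essentially the same route as the paper. Like the paper, you combine the $\delta$-bijection with the Fubini identity for the objective, reparametrize \eqref{eq: border curve} through $s=\overline\psi(e^{-t})$ (using that $d\ln\overline\psi$ vanishes on flat pieces) for the constraint, and rewrite \eqref{eq: extremality} as $\sum_i\delta_i(t)=\min\{t,T\}$ with $T=-n\ln\overline\psi(0)$. Your treatment of flat pieces, the endpoint $t\to\infty$, and the Fubini justification spells out what the paper compresses into a line or two.
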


The first part of the proposition formalizes how the abstract maximization problem \eqref{abstract problem} can be rewritten in a transparent way as an optimal control problem along the principal curve.
The second part highlights a sharp structural implication when searching for the optimum among extremal reduced forms.\footnote{As explained in Section~\ref{sec: extremality}, this restriction is without loss of generality when each $H_i$ is convex in $x_i$.}
Our characterization of extremality in Theorem \ref{th: extremality} implies that any extremal reduced form corresponds to a front-loaded allocation path in continuous time in which probability is assigned at the maximal feasible rate up to a cutoff, after which allocation ceases.  
Hence, when solving \eqref{oc problem}, it suffices to search among policies satisfying $\sum_{i=1}^n \delta_i'(t)=1$ until some cutoff time $T \in [0,\infty]$, followed by no further allocation.

The maximization problem that appears in the proposition is an optimal control problem with pure state constraints as defined in Chapter 6 of \citet{seierstad1986optimal}. 
To see this, note that \eqref{eq: feasibility delta} can be written as 
\[
\mu(t)\leq 1 - e^{-\sum_{i=1}^n \delta_i(t)},
\]
where $\mu$ is an auxiliary state variable with dynamics $\mu'(t)=e^{-t}\sum_{i=1}^n a_i(t)$, $\mu(0)=0$, and 
$\bm a=\bm\delta'\in[0,1]^n$ is the control vector. 
The existence and characterization of its solution through the Pontryagin maximum principle directly follow from results in this book.
Here, we focus on a single necessary condition---derived from first principles in the appendix---that is particularly informative about the structure of the optimum.

\begin{proposition}
\label{prop: marginal revenue}
Let $\bm x^*$ be optimal and denote its $\delta$-transforms by $\bm \delta^*$.
Consider an interval $(\underline t,\overline t)$ and two distinct bidders $i \neq j$ such that both $(\delta_i^*)',(\delta_j^*)'$ are bounded away from $0$ and $1$ on that interval.
Then, their marginal revenues are equalized:
\[
\frac{\partial }{\partial \delta} R_i(\delta_i^*(t),t) = \frac{\partial }{\partial \delta} R_j(\delta_j^*(t),t) \quad \forall t \in (\underline t,\overline t).
\]
\end{proposition}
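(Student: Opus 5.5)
The plan is a first-order perturbation argument inside the control problem \eqref{oc problem}. By Proposition~\ref{prop: oc problem}, optimality of $\bm x^*$ means that $\bm\delta^*$ maximizes the objective \eqref{eq: objective R} subject to $\bm\delta'\in[0,1]^n$, $\bm\delta(0)=\bm 0$ and \eqref{eq: feasibility delta}. The key structural observation is that \eqref{eq: feasibility delta} depends on the tuple $\bm\delta$ only through the aggregates $\sum_k \delta_k$ and $\sum_k\delta_k'$. So any reallocation of allocation speed between bidders $i$ and $j$ that leaves $\delta_i'+\delta_j'$ unchanged pointwise also leaves the feasibility constraint unchanged. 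Since all paths start at zero, $\delta_i+\delta_j$ is then unchanged as well.

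Concretely, fix a smooth function $\eta$ with compact support in $(\underline t,\overline t)$. For $\varepsilon\in\mathbf{R}$ define
\[
\delta_i^\varepsilon=\delta_i^*+\varepsilon\eta,\qquad \delta_j^\varepsilon=\delta_j^*-\varepsilon\eta,\qquad \delta_k^\varepsilon=\delta_k^*\ \ (k\neq i,j).
\]
Because $\eta$ has compact support, we have $\eta(0)=0$ and $\int\eta'=0$, so the perturbation is localized and does not propagate to later times. By hypothesis, $(\delta_i^*)'$ and $(\delta_j^*)'$ lie in $[c,1-c]$ on the interval for some $c>0$. Hence for $|\varepsilon|<c/\sup|\eta'|$ both perturbed derivatives stay in $[0,1]$, for either sign of $\varepsilon$. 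The initial condition is preserved, and \eqref{eq: feasibility delta} holds with identical left- and right-hand sides. Thus $\bm\delta^\varepsilon$ is admissible for all small $|\varepsilon|$, and $\varepsilon=0$ is an interior maximizer of
\[
\Phi(\varepsilon)=\int_0^\infty e^{-t}\big[R_i(\delta_i^*+\varepsilon\eta,t)+R_j(\delta_j^*-\varepsilon\eta,t)\big]dt,
\]
up to a constant coming from the other bidders.

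Next I would differentiate $\Phi$ at $0$. The footnote after \eqref{eq: R} gives that $R_i$ is continuously differentiable with uniformly bounded $\partial_\delta R_i$. Since $\eta$ is bounded with compact support, dominated convergence justifies differentiating under the integral, which gives
\[
0=\Phi'(0)=\int_{\underline t}^{\overline t} e^{-t}\Big[\tfrac{\partial}{\partial\delta}R_i(\delta_i^*(t),t)-\tfrac{\partial}{\partial\delta}R_j(\delta_j^*(t),t)\Big]\eta(t)\,dt .
\]
Because this holds for every such test function $\eta$, the fundamental lemma of the calculus of variations yields that the bracket vanishes almost everywhere on $(\underline t,\overline t)$. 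The bracket is continuous in $t$, because $\partial_\delta R_i$ is continuous and $\delta_i^*$ is absolutely continuous. So the equality holds for every $t$ in the interval, which is the claim.

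The only delicate point I expect is making sure that the feasibility constraint is genuinely invariant under the perturbation, rather than merely satisfied to first order. This is what allows perturbations of both signs and hence yields an equality instead of an inequality. That invariance follows from the aggregate form of \eqref{eq: feasibility delta} derived from Theorem~\ref{th: feasibility}. The remaining steps, admissibility via the interior bounds on the derivatives and the smoothness needed for differentiation under the integral, are routine.
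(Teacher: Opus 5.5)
Your proposal is correct and follows essentially the same argument as the paper. Both use an opposite-sign, smooth, compactly supported perturbation of $\delta_i^*$ and $\delta_j^*$ that stays admissible for small $\varepsilon$ of either sign because the derivatives are bounded away from $0$ and $1$, and both then conclude via the fundamental lemma of the calculus of variations and continuity of the bracket. Your explicit remark that \eqref{eq: feasibility delta} depends only on $\sum_k\delta_k$ and $\sum_k\delta_k'$, so the perturbation leaves it exactly unchanged, spells out a step the paper leaves implicit.
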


Recall that $\bm \delta^*$ defines the principal curve, and that intervals on which some $(\delta_i^*)'(t)$ hits the bounds $\{0,1\}$ correspond to jumps or flat segments of the associated interim winning probability—both arising from binding monotonicity constraints and thus from ironing.  
Proposition \ref{prop: marginal revenue} identifies what happens on segments where ironing is \textit{not} required.
\begin{quote}
\textit{Whenever the optimal principal curve moves simultaneously in coordinates $i$ and $j$, the movement must follow a direction along which their marginal revenues coincide.}
\end{quote}
If one direction provided a strictly larger marginal return, the auctioneer would reallocate an infinitesimal amount of movement toward that coordinate and away from the other.  
Thus, on regions in which the interim winning probabilities are strictly increasing and continuous, the principal curve traces an ``indifference path'' of marginal revenues.  
As we explain next, this condition generalizes the familiar Myersonian rule---namely, that the object is awarded to the bidder with the highest MVV on the regular region---to nonlinear environments.

The marginal revenue equalization property in Proposition \ref{prop: marginal revenue} therefore ``almost'' characterizes the optimum in settings where ironing is not required.  
The only missing element is the trajectory of the cumulative height $\sum_{i=1}^{n}\delta_i(t)$ of the principal curve, which is constrained by feasibility through \eqref{eq: feasibility delta}.  
To make further progress and determine this trajectory at the optimum, we first study environments in which the optimum is extremal, and thus it satisfies $\sum_{i=1}^{n}\delta_i(t) = t$ along the principal curve, as implied by Proposition \ref{prop: oc problem}.  
This class includes both Myerson’s linear model and many auction environments with endogenous valuations.
We then extend our analysis to environments in which the optimum is non-extremal as happens when bidders' have CRA preferences with positive risk-aversion.

\section{Optimal auctions: the case of extremality}
\label{sec: optimum extremal}

Before turning to nonlinear settings, it is instructive to revisit the
classical Myerson's model through the lens of our transformed optimal-control representation.

\subsection{Linear utilities}
\label{sec: optimum linear}

Recall that bidders’ types in the quantile space $\bm u$ are related to their actual types $\bm \theta$ via $\theta_i = v_i(u_i)$, where  $v_i=F_i^{-1}$.
Thus, their revenue functions can be written as
\begin{align}
\label{eq: MVV linear}
H_i(x,u) = \left(\underbrace{v_i(u)-(1-u)v_i'(u)}_{=\zeta_i(u)}\right)x.
\end{align}
In this expression, $\zeta_i$ is bidder $i$’s MVV expressed in the quantile space.
The transformation in \eqref{eq: R} reparametrizes this as
\begin{align}
\label{eq: R myerson}
R_i(\delta,t) \;=\; \int_{0}^{\delta}\zeta_i(e^{-\tau})\,d\tau,
\end{align}
and $R_i$ is independent of $t$ precisely because $H_i$ is linear in its first argument.

We now use Propositions \ref{prop: oc problem} and \ref{prop: marginal revenue} to explain how our approach can be used to find the optimal auction when MVVs are strictly increasing.
In this regular environment, it is natural to conjecture that the optimal
principal curve moves strictly in all coordinates.  
If so, then Proposition \ref{prop: marginal revenue} implies that the marginal revenues of all bidders must coincide on the entire principal curve.  
In addition, the second part of Proposition \ref{prop: oc problem} implies that feasibility binds pointwise along the principal curve, that is $\sum_{i=1}^n \delta_i(t) = t$.
These two observations lead us to study the following system of $n+1$ equations in $n+1$ variables:
\begin{align}
\label{eq: oc foc}
\frac{\partial}{\partial \delta} R_i(\delta_i,t) \;=\; p, \; \sum_{i=1}^n \delta_i \;=\; t,
\end{align}
where $p$ can be interpreted as the common marginal revenue.
It is easy to see that this system admits a unique solution, and its solution turns out to characterize Myerson's optimal auction exactly.

\begin{proposition}
\label{prop: myerson}

In Myerson’s linear model, assume bidders’ MVVs  are strictly increasing.
For every $t \geq 0$, there exists a unique solution $(\bm\delta^{\sharp}(t), p^{\sharp}(t))$
to the system \eqref{eq: oc foc}, and define the cutoff time
\[
T = \inf\{t \geq 0 : p^{\sharp}(t) \leq 0\}.
\]
Then, the auctioneer’s problem admits an optimal reduced form $\bm x^*$ whose $\delta$-transforms follow this candidate path until the cutoff and then remain constant:\footnote{In fact, this is the only optimal reduced form because the optimal control problem in \eqref{oc problem} admits at most one solution due to the strict concavity of each $R_i$ in its first argument.}
\begin{align}
\label{eq: optimal delta}
\delta_i^*(t) =
\begin{cases}
\delta_i^{\sharp}(t) & t \leq T,
\\
\delta_i^{\sharp}(T) & \text{otherwise}.
\end{cases}
\end{align}
\end{proposition}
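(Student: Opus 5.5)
The plan is to work entirely inside the control problem \eqref{oc problem} of Proposition \ref{prop: oc problem}, where $\frac{\partial}{\partial\delta}R_i(\delta,t)=\zeta_i(e^{-\delta})$ by \eqref{eq: R myerson}. Since $\zeta_i$ is strictly increasing and $\tau\mapsto e^{-\tau}$ is strictly decreasing, each map $\delta\mapsto\zeta_i(e^{-\delta})$ is continuous and strictly decreasing on $\mathbf{R}_+$.

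\textbf{Existence and uniqueness of the candidate.} For a given level $p$, define $\delta_i(p)$ as the solution of $\zeta_i(e^{-\delta})=p$, using the generalized inverse and truncating at $0$ when $p\ge\zeta_i(1)$. Then $S(p)=\sum_i\delta_i(p)$ is continuous and weakly decreasing, and strictly decreasing wherever it is positive. Hence $S(p)=t$ has a unique root $p^\sharp(t)$, which settles the first claim. Two minor points need care:
\begin{itemize}
\item the reading of \eqref{eq: oc foc} at boundary values, where some $\delta_i=0$ (there the condition is an inequality in the Kuhn--Tucker sense);
\item a domain issue for $t$ large, where $p$ may need to go below $\inf\zeta_i$.
\end{itemize}
The domain issue matters only after $T$, since $\zeta_i(0)$ may be negative.

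I would also check that each $\delta_i^\sharp$ has derivative in $[0,1]$. As $t$ rises, $p^\sharp$ falls, so every $\delta_i^\sharp$ rises. Because $\sum_i\delta_i^\sharp=t$ and every summand is nondecreasing, each increment of $\delta_i^\sharp$ is at most the increment of $t$. So $\delta_i^\sharp$ is $1$-Lipschitz and nondecreasing, and $\bm\delta^*$ is admissible.

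\textbf{Feasibility.} On $[0,T]$ we have $\sum_i\delta_i^*(t)=t$, so the left side of \eqref{eq: feasibility delta} equals $1-e^{-t}$, which is exactly the right side. After $T$ the left side is frozen at $1-e^{-T}$ and the right side is also $1-e^{-T}$. Feasibility therefore holds with equality throughout, consistent with the extremal structure in Proposition \ref{prop: oc problem}.

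\textbf{Optimality.} This is the main step. Since $R_i$ does not depend on $t$ and is concave in $\delta$, I would not invoke the Pontryagin principle. Instead, take any feasible $\bm\delta$. The first step is to show that feasibility forces $\sum_i\delta_i(t)\le t$. To see this, integrate the left side of \eqref{eq: feasibility delta} by parts to get $e^{-t}D(t)+\int_0^te^{-\tau}D(\tau)d\tau\le1-e^{-D(t)}$, where $D=\sum_i\delta_i$. Then argue by a comparison/Gronwall-type argument, using the convexity inequality $1-e^{-D}\le D$, that $D(t)>t$ is impossible. The cleanest route may simply be to use that \eqref{eq: feasibility delta} came from $B(\bm\nu)\le 1$, together with $B\ge$ the product term.

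Given $D(t)\le t$, the objective is $\int e^{-t}\sum_iR_i(\delta_i(t))dt$, so it suffices to show that $\bm\delta^*(t)$ maximizes $\sum_iR_i(\delta_i)$ pointwise over $\{\delta\ge0,\ \sum_i\delta_i\le t\}$. This is a concave program. Its KKT conditions are equal marginals $\zeta_i(e^{-\delta_i})=p$ on positive coordinates, with multiplier $p\ge0$ on the budget. For $t\le T$ we have $p^\sharp(t)>0$ and the budget binds, which gives exactly $\bm\delta^\sharp(t)$. For $t>T$ the unconstrained maximizer $\bm\delta^\sharp(T)$, which has zero marginals, is affordable and hence optimal. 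Pointwise optimality then integrates to global optimality. The main obstacle I anticipate is making the bound $D(t)\le t$ rigorous from \eqref{eq: feasibility delta}, together with the truncation cases where $\delta_i^\sharp=0$.
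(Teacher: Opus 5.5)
Your construction of $(\bm\delta^\sharp,p^\sharp)$, the $1$-Lipschitz argument, and the check that $\bm\delta^*$ satisfies \eqref{eq: feasibility delta} with equality are all fine. The optimality step, however, rests on a false claim: \eqref{eq: feasibility delta} does \emph{not} force $D(t)=\sum_i\delta_i(t)\le t$.

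A counterexample: take $n=2$ and the allocation $z_1=z_2=\tfrac12$, so $x_1=x_2=\tfrac12$ on $[0,1)$. This reduced form is clearly feasible. Yet $\psi_i(\iota)=\min\{2\iota,1\}$, so $\delta_i(t)=\max\{t-\ln 2,0\}$ and $D(t)=2(t-\ln 2)>t$ for $t>2\ln 2$. One can also verify \eqref{eq: feasibility delta} directly: it reads $1-2e^{-t}\le 1-4e^{-2t}$, which holds for $t\ge\ln 2$.

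The constraint is intertemporal: the principal curve may overtake the diagonal at low quantiles because it lagged behind it at high ones. Consequently:
\begin{itemize}
\item no Gronwall-type argument can give $D\le t$;
\item the bound $B\ge\prod_i\nu_i$ only yields the vacuous $\prod_i\nu_i\le 1$;
\item the set $\{\bm\delta\ge\bm 0,\ \sum_i\delta_i\le t\}$ is not a relaxation of the time-$t$ slice of the feasible set;
\item so pointwise maximization in $t$ does not bound the value of \eqref{oc problem}.
\end{itemize}

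You in fact had the right inequality in hand. From your integrated-by-parts form, $\int_0^s e^{-t}D(t)\,dt\le 1-e^{-D(s)}-e^{-s}D(s)$. The right side is concave in $D(s)$ and maximized at $D(s)=s$, which gives $\int_0^s e^{-t}(D(t)-t)\,dt\le 0$ for every $s$. This integrated version is all that feasibility delivers.

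To use it, apply your concavity/KKT step at $\bm\delta^*$:
\begin{itemize}
\item for $t<T$, $\sum_i[R_i(\delta_i(t))-R_i(\delta_i^*(t))]\le p^\sharp(t)(D(t)-t)$;
\item for $t\ge T$, the same difference is $\le 0$.
\end{itemize}
Now write $p^\sharp(t)=\int_{(t,T]}\mu(ds)$, where $\mu=-dp^\sharp$ is a nonnegative measure because $p^\sharp$ decreases. If $T=\infty$, add the nonnegative constant $\lim_{t\to\infty}p^\sharp(t)$. Exchanging integrals gives $\int_0^T e^{-t}p^\sharp(t)(D(t)-t)\,dt\le 0$.

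So the monotonicity of $p^\sharp$ is precisely what turns the integrated constraint into the bound you need. This is the paper's route: it obtains the proposition as a special case of Theorem \ref{th: optimum extremal}, whose proof attaches the multiplier $\lambda=-(p^\sharp)'$ to \eqref{eq: feasibility delta}, uses $\mathscr{C}(\bm\delta)\ge 0$, and linearizes at $\bm\delta^*$.

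A repair closer to your plan uses linearity instead. The objective is a continuous linear functional on a convex compact set, so it attains its maximum at an extreme point. By Theorem \ref{th: extremality}, extremal reduced forms satisfy $\sum_i\delta_i(t)=\min\{t,-n\ln\overline{\psi}(0)\}\le t$. Restricted to them, your pointwise argument is valid. This shortcut depends on the objective being linear (or convex) in $\bm x$, and would not handle the general regular setting of Theorem \ref{th: optimum extremal}.
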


By Proposition \ref{prop: oc problem}, each optimal reduced forms corresponds to a solution to the optimal control problem in that proposition.
Building on this observation, Proposition \ref{prop: myerson} identifies the candidate optimum by solving, at each
$t$, the equalization of marginal revenues together with the aggregate
constraint $\sum_{i=1}^n\delta_i=t$.  
Once these functions $(\bm\delta^{\sharp},p^{\sharp})$ are obtained, the solution to the control problem is constructed by ``freezing'' $\bm\delta^{\sharp}$ at the first time $T$ when the common marginal revenue becomes negative.  
To verify that this candidate is indeed optimal, we use a dual certificate explicitly constructing dual variables to the constraint that appear in the optimal control problem.

The optimal reduced form $\bm x^*$ is then recovered from $\bm\delta^*$ identified in this proposition by inverting $(\delta_1^*,\dots,\delta_n^*)$ as in \eqref{eq: x from delta}. 
To see the implications of this construction in the linear model, note that the mapping
\[
t \in \mathbf{R}_+ \;\mapsto\; \left(e^{-\delta_1^*(t)},\dots,e^{-\delta_n^*(t)}\right)
\]
defines a continuous, strictly decreasing curve in the $\bm u$-space, which starts at $(1,\dots,1)$ and goes all the way to  $\left(e^{-\delta_1^*(T)},\dots,e^{-\delta_n^*(T)}\right)$. 
This curve coincides with the principal curve associated with $\bm x^*$, up to a reparametrization of its time index.

To recover the optimal principal curve $\bm \nu^*$ in the original parametrization, we use \eqref{eq: oc foc} and the fact bidder $i$'s marginal revenue coincides with that bidder's MVV, i.e.,  
$\frac{\partial}{\partial\delta} R_i = \zeta_i(e^{-\delta})$, as appears in \eqref{eq: R myerson}.
It follows that the optimal principal curve equalizes bidders’ MVVs pointwise,
\[
\zeta_1\left(\nu_1^*(s)\right)=\cdots=\zeta_n\left(\nu_n^*(s)\right)\quad\forall s\in[0,1].
\]
Combining this observation with \eqref{x along curve} implies that the optimal allocation can be implemented as a score allocation.
Specifically, bidder $i$’s score is given by
\begin{align}
\label{eq: optimal score}
q_i^*(u_i) &= p^{\sharp}\left((\delta_i^{\sharp})^{-1}(-\ln u_i)\right)\\
&= \zeta_i(u_i),
\end{align}
where the equality in the second line is due to \eqref{eq: R myerson} and \eqref{eq: oc foc}.
This confirms that our construction reproduces exactly the allocation rule originally identified in \citet{myerson1981optimal}.

\subsection{Main result}
\label{sec: optimum extreme}

We now apply our tools to nonlinear environments.
A key observation is that, under an appropriate regularity condition, the optimal reduced form can still be constructed from the same auxiliary static system \eqref{eq: oc foc} that characterizes the Myersonian optimum in the linear case.
Formally, we say that an environment is \textbf{regular} if  
\begin{enumerate}
\item[(A)] each $R_i$ is strictly concave in its first argument;  
\item[(B)] for each time $t \geq 0$, the solution $(\bm \delta^{\sharp}(t),p^{\sharp}(t))$ to \eqref{eq: oc foc} is unique and is such that the path $\bm \delta^{\sharp}$ is strictly increasing and the common marginal revenue $p^{\sharp}$ is strictly decreasing. 
\end{enumerate}
In Myerson's linear model, Condition (A) means that bidders’ virtual valuations are strictly increasing.  
In this case, we can invert $\frac{\partial}{\partial \delta}R_i(\delta_i,t)=p$ directly and solve for $\delta_i$ as a function of time $t$ and the common marginal revenue $p$, which is then pinned down by $\sum_{i=1}^n \delta_i=t$.  
Condition (B) holds trivially in the linear case because each $R_i$ is independent of $t$.  
In general nonlinear environments, Condition (B) does not follow from Condition (A).
Thus, our definition of regularity is a genuine extension of Myersonian regularity to nonlinear settings.
The appendix provides primitive sufficient conditions guaranteeing (A)–(B).

Then, we obtain the following characterization of the optimum.

\begin{theorem}
\label{th: optimum extremal}

Consider a regular environment, and define a cutoff time $T$ as in Proposition \ref{prop: myerson}.
If for each bidder $i$, the function $\delta_i^\dagger:[T,\infty) \to \mathbf{R}$ defined by $\frac{\partial}{\partial \delta}R_i(\delta_i^{\dagger}(t),t) = 0$ is weakly decreasing, then the auctioneer’s problem admits an optimal reduced form $\bm x^*$, which is extremal, and whose $\delta$-transforms $\bm \delta^*$ take the form in Equation \eqref{eq: optimal delta}.
Furthermore, this reduced form can be induced by the score allocation with scores $\bm q^*$ of the form in Equation \eqref{eq: optimal score}. 
\end{theorem}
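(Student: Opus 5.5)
The plan is to work entirely inside the optimal control problem of Proposition \ref{prop: oc problem} and to certify optimality of the candidate $\bm\delta^*$ from \eqref{eq: optimal delta} by a Lagrangian (dual) argument, exactly as was sketched for Proposition \ref{prop: myerson}, but now accounting for the $t$-dependence of $R_i$. First I check admissibility. By regularity (B), $\bm\delta^\sharp$ is strictly increasing with $\sum_i\delta_i^\sharp(t)=t$. Hence each $(\delta_i^\sharp)'\in[0,1]$ a.e. I still need absolute continuity, which I would get from the implicit function theorem applied to \eqref{eq: oc foc} using (A). Freezing after $T$ keeps $\bm\delta^*(0)=\bm 0$ and $(\bm\delta^*)'\in[0,1]^n$. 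Moreover $\sum_i(\delta_i^*)'=\mathbf{1}_{[0,T]}$, so \eqref{eq: feasibility delta} holds: it binds for $t\le T$ and is slack afterwards, since its left side is constant while its right side is constant too. By the second part of Proposition \ref{prop: oc problem}, any such solution corresponds to an extremal reduced form.

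The core step is an upper bound on the objective for every feasible $\bm\delta$. Since $\delta_i'\ge 0$ and \eqref{eq: feasibility delta} holds, feasible paths satisfy $\sum_i\delta_i(t)\le t$. I would derive this by comparing $\int_0^t e^{-\tau}\sum\delta_i'\,d\tau\ge e^{-t}\sum\delta_i(t)$ with the bound $1-e^{-\sum\delta_i(t)}$, or directly from the Border bound. Using concavity (A), for each $t$ and each $i$,
\[
R_i(\delta_i(t),t)\le R_i(\delta_i^*(t),t)+\tfrac{\partial}{\partial\delta}R_i(\delta_i^*(t),t)\,(\delta_i(t)-\delta_i^*(t)).
\]
For $t\le T$ all marginal revenues equal $p^\sharp(t)\ge 0$. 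Summing over $i$ therefore gives a pointwise bound of $p^\sharp(t)\bigl(\sum_i\delta_i(t)-t\bigr)\le 0$, so the candidate is pointwise maximal on $[0,T]$.

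For $t>T$ the frozen point $\delta_i^*(t)=\delta_i^\sharp(T)$ no longer equalizes marginal revenues. Here I use the hypothesis on $\delta_i^\dagger$. At $t=T$ we have $p^\sharp(T)=0$, so $\delta_i^\dagger(T)=\delta_i^\sharp(T)$. Since $\delta_i^\dagger$ is weakly decreasing, $\delta_i^\dagger(t)\le\delta_i^*(t)$ for $t\ge T$. By concavity this means $\tfrac{\partial}{\partial\delta}R_i(\delta,t)\le 0$ for all $\delta\ge\delta_i^*(t)$, so raising $\delta_i$ above the frozen level cannot help. However, a feasible $\delta_i$ could lie below $\delta_i^*(t)$, where the marginal revenue may be positive. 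This step is the main obstacle, because the pointwise argument fails there and the dynamic link matters. Feasible paths are nondecreasing, and on $[0,T]$ the candidate is already optimal. So I would prove a pathwise inequality: any $\delta$ with $\delta_i(t)<\delta_i^*(t)$ at some $t>T$ also has a smaller value at times $\tau\le T$. I would try to build this with a multiplier measure on the monotonicity constraint $\delta_i'\ge 0$, supported near $T$, which transfers the positive marginal value at $t>T$ back to $T$. Failing that, I would fall back on a direct perturbation argument: replace $\delta$ by $\max\{\delta,\delta^*\}$ on $[T,\infty)$ suitably projected to stay feasible, and show the revenue weakly rises. The monotonicity of $\delta^\dagger$, together with the sign of $p^\sharp$ on $[0,T]$, is what should make this projection revenue-improving.

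Finally, the score representation follows as in the linear case. Extremality and Theorem \ref{th: scores} give a score allocation. Along the principal curve of $\bm x^*$, all bidders share the common value $p^\sharp$ at the parameter $t=(\delta_i^\sharp)^{-1}(-\ln u_i)$. By \eqref{x along curve}, bidder $i$ wins exactly against opponents below the curve, so ranking bidders by $q_i^*(u_i)=p^\sharp((\delta_i^\sharp)^{-1}(-\ln u_i))$ reproduces the allocation. This $q_i^*$ is strictly increasing because $p^\sharp$ is strictly decreasing, and it is negative precisely for types beyond the cutoff $T$.
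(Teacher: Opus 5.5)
Your admissibility check and the final score-representation step are fine; the latter matches the paper's Step 2. The optimality certificate, however, has two genuine gaps, and the second one cannot be closed.

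\textbf{First gap.} The claim that feasibility forces $\sum_i\delta_i(t)\le t$ is false.

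Take $n=2$ and $x_1=x_2\equiv\tfrac12$ on $[0,1)$, i.e., always sell to a uniformly random bidder. Then $\psi_i(\iota)=\min\{2\iota,1\}$ and $\delta_i(t)=\max\{t-\ln 2,0\}$, so $\sum_i\delta_i(t)=2t-2\ln 2>t$ for $t>2\ln 2$. (The inequality $e^{-t}S\le 1-e^{-S}$ that you would extract does not imply $S\le t$.) Hence your pointwise bound $p^\sharp(t)\bigl(\sum_i\delta_i(t)-t\bigr)\le 0$ on $[0,T]$ is unavailable, and \eqref{eq: feasibility delta} must be used as a dynamic constraint.

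What is true is the integrated statement $\int_0^s e^{-t}\bigl(\sum_i\delta_i(t)-t\bigr)dt\le 0$ for every $s$. To see it, integrate the left side of \eqref{eq: feasibility delta} by parts and use that $S\mapsto -e^{-S}-e^{-s}S$ is maximized at $S=s$. Now write $p^\sharp(t)=\int_t^T\lambda(\tau)d\tau$ with $\lambda=-(p^\sharp)'\ge 0$ and exchange the order of integration. This gives $\int_0^T e^{-t}p^\sharp(t)\bigl(\sum_i\delta_i(t)-t\bigr)dt\le 0$, which is what your tangent-line bound needs.

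This is exactly the paper's device. It dualizes \eqref{eq: feasibility delta} with multiplier $\lambda$ on $[0,T)$. After integration by parts, the dualized constraint contributes $-e^{-t}(e^{t-S}+S)\lambda(t)-e^{-t}S\max\{p^\sharp(t),0\}$, whose first part is concave in $S=\sum_i\delta_i$ and peaks at $S=t$. The Lagrangian's pointwise gradient at $\bm\delta^*$ is then $\omega_i(t)=\frac{\partial}{\partial\delta}R_i(\delta_i^*(t),t)-\max\{p^\sharp(t),0\}$, which vanishes on $[0,T)$.

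\textbf{Second gap.} On $[T,\infty)$ the obstacle you flag is real, but your diagnosis is reversed, and it cannot be overcome under the stated hypothesis.

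Since $\delta_i^*(t)=\delta_i^\sharp(T)=\delta_i^\dagger(T)\ge\delta_i^\dagger(t)$, strict concavity gives $\frac{\partial}{\partial\delta}R_i(\delta_i^*(t),t)<0$ wherever $\delta_i^\dagger(t)<\delta_i^\dagger(T)$. The frozen level already sits on the decreasing side of $R_i(\cdot,t)$, so slightly lower levels are pointwise better. Monotonicity only ties such a decrease to one just before $T$, where it is almost free because $p^\sharp(T)=0$.

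Concretely, replace $\delta_i^*$ by $\min\{\delta_i^*,\delta_i^\sharp(T-\eta)\}$. This truncates $x_i$ slightly above the cutoff type, lowers $x_i$ pointwise, and so preserves feasibility. With $\epsilon=\delta_i^\sharp(T)-\delta_i^\sharp(T-\eta)$, it gains to first order $-\epsilon\int_T^\infty e^{-t}\frac{\partial}{\partial\delta}R_i(\delta_i^\sharp(T),t)\,dt>0$. The loss on $[T-\eta,T]$ is only $o(\epsilon)$.

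That integral equals $u\,[H_i(x,u)-H_i(0,u)]$ at the cutoff type $u=e^{-\delta_i^\sharp(T)}$, $x=x_i^\sharp(u)$. So the correct exclusion margin involves $H_i$ itself, not $\frac{\partial}{\partial x}H_i$, and your projection onto $\max\{\bm\delta,\bm\delta^*\}$ cannot be revenue-improving in general.

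An explicit counterexample to the statement: take $n=1$ and $H(x,u)=(2u-1)x+\tfrac{u}{2}x^2$, which is convex in $x$.
\begin{itemize}
\item Then $\frac{\partial}{\partial\delta}R(\delta,t)=2e^{-\delta}-1+e^{-t}$, so (A) and (B) hold with $p^\sharp(t)=3e^{-t}-1$ and $T=\ln 3$.
\item $\delta^\dagger(t)=\ln\frac{2}{1-e^{-t}}$ is strictly decreasing, so the hypothesis holds.
\item The candidate is the posted price at quantile $1/3$, which earns $4/9$; the posted price at quantile $2/5$ earns $9/20$.
\end{itemize}
With two symmetric bidders and the same $H$, the candidate's reserve quantile $\sqrt2-1$ is likewise beaten by $\sqrt6-2$.

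The paper's own proof has the same hole. Its chain shows $\int_0^\infty e^{-t}\omega_i\delta_i\,dt\le 0$ for feasible $\delta_i$, but the gradient bound needs $\int_0^\infty e^{-t}\omega_i(\delta_i-\delta_i^*)\,dt\le 0$. Here $\int_T^\infty e^{-t}\omega_i\delta_i^*\,dt=\delta_i^\sharp(T)\int_T^\infty e^{-t}\frac{\partial}{\partial\delta}R_i(\delta_i^\sharp(T),t)\,dt$ is strictly negative as soon as $\delta_i^\dagger$ is not constant on $[T,\infty)$.

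Either argument (yours, once repaired as above, or the paper's) does establish the theorem when every $\delta_i^\dagger$ is constant on $[T,\infty)$. That covers the linear model, the multiplicative case $H_i=a_i(u)h(x)$ behind Proposition \ref{prop: optimum vs myerson}, and all cases with $T=\infty$ such as Example \ref{ex: ev revisited}. It does not go through under mere weak monotonicity of $\delta_i^\dagger$.
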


Theorem \ref{th: optimum extremal} explicitly constructs the optimal auction in regular nonlinear environments under an additional monotonicity condition that ensures extremality of the optimum.
This added condition guarantees that once bidder $i$’s marginal revenue falls to zero, it never becomes profitable to resume allocating probability to that bidder at lower quantiles.
It is straightforward to verify that this condition holds whenever $H_i$ is convex in its first argument; consequently, the theorem applies to a broad class of environments with endogenous valuations.  
It is strictly more general than convexity of $\frac{\partial }{\partial x}H_i$ and it holds vacuously whenever one bidder's marginal revenue is nonnegative, i.e., $T=\infty$, so that there no exclusion is optimal.
Strikingly, despite the presence of nonlinearities that render standard approaches inapplicable, the resulting characterization of the optimum coincides formally with that obtained in the linear model.

From an economic perspective, the theorem delivers a closed-form expression for the optimal scores $\bm q^*$.  
Bidder $i$’s score $q_i^*$ is given by the common marginal revenue $p^{\sharp}(t)$ evaluated at the time corresponding to that bidder’s type, namely
$t = (\delta_i^{\sharp})^{-1}(-\ln u_i)$.  
We refer to these scores as \textbf{principal virtual values} (\textbf{PVVs} henceforth) to distinguish them from MVVs.

The allocation based on PVVs awards the good to the bidder who lies higher on the optimal principal curve.
In linear environments, PVVs coincide with MVVs, as shown in the previous section.  
In nonlinear environments, however, bidder $i$’s PVV---unlike MVV---incorporates information about other bidders.  
This reflects the fact that PVVs are constructed to equalize marginal revenues across tied bidders.
To see it, remark that whenever the optimal interim allocation $x_i^*(u_i)$ is strictly positive, bidder $i$’s PVV is simply the marginal revenue of that bidder in the optimal mechanism, that is
\footnote{This representation can be obtained by combining the definition of $R_i$ with \eqref{eq: oc foc}, and noting that $x_i^*(u_i)u_i = e^{-(\delta_i^{\sharp})^{-1}(-\ln u_i)}$ (due to \eqref{eq: x from delta}) along the principal curve.}
\[
q_i^*(u_i) =  \frac{\partial}{\partial x}H_i\left(x_i^*(u_i),u_i\right).
\]

In the context of the endogenous-valuation model, the marginal revenue $\frac{\partial}{\partial x}H_i$ internalizes the effect of interim winning probabilities on bidders’ investment incentives. 
Equalizing their marginal revenues therefore ensures that winning probability is allocated so as to balance allocative efficiency against the distortionary effects of excessive duplication of investment costs.
In this sense, the optimal mechanism disciplines ex ante investment by reallocating probability until all active bidders contribute equally at the margin.

We now revisit our motivating example through the lens of Theorem \ref{th: optimum extremal}, and then further elaborate on the economic content of the optimal mechanism when bidders' valuations are endogenous in the section that follows.

\begin{example}
\label{ex: ev revisited}
We now use Theorem \ref{th: optimum extremal} to verify the optimality of the score allocation described in the example in Section \ref{sec: example}. 
In the parametric specification of this example, bidder $i$'s revenue function in the quantile space is given by $H_i(x,u) = u^{1/\beta_i}x^2$, which yields
\[
\frac{\partial}{\partial \delta }R_i(\delta,t) = 2e^{-(1/\beta_i-1)\delta}e^{-t}.
\]
Clearly, $R_i$ is strictly concave provided that $\beta_i \in (0,1)$.

Direct calculations show that \eqref{eq: oc foc} admits a unique solution of the form: 
\[
\delta_i^{\sharp}(t) = \kappa {(1/\beta_i-1)^{-1}} t,
\;\;
p^{\sharp}(t) = 2e^{-\left(1+\kappa\right)t},
\]
where $\kappa$ is the normalization constant so that $\sum_{i=1}^n\delta_i^{\sharp}(t)=t$.
This shows that the environment is regular and that the cutoff time defined in Theorem \ref{th: optimum extremal} is infinite. 
As a result, the additional monotonicity condition imposed in the theorem holds trivially, hence $\bm \delta^* = \bm \delta^{\sharp}$.

Using \eqref{eq: optimal score}, we can now recover the corresponding PVVs:
\[
q_i^*(u_i) = 2u_i^{\left(1+1/\kappa\right)(1/\beta_i-1)}.
\]
Since the exponent in the first bracket is strictly positive, bidder $i$'s PVV exceeds $j$'s if and only if $u_i^{1/\beta_i-1} > u_j^{1/\beta_j-1}$.
Returning to the original type space, this shows that the optimal allocation awards the good to the bidder with the highest value of $(F_i(\theta_i)\big)^{1/\beta_i-1}$, as claimed.
\end{example}

\subsection{Application to discrimination in the endogenous-valuation model}
\label{sec: EV applications}

In Example \ref{ex: ev revisited}, we have argued that the optimal mechanism may be ``more asymmetric'' than what the auction based on MVVs may suggest. 
We now show using Theorem \ref{th: optimum extremal} that this phenomenon holds beyond that single example and is driven by the strict convexity of bidders' revenue functions in their interim winning probability.

Consider a setting with endogenous valuations where the reduced-form utility of every bidder $i$ is equal to $\omega_i(x_i,\theta_i)=\theta_i h(x_i)$ where $h',h''>0$. As explained in Section \ref{sec: EVs}, bidder $i$'s revenue function $J_i$ takes the form:
\[
J_i(x_i,\theta_i) = \left(\theta_i-\frac{1-F_i(\theta_i)}{f_i(\theta_i)}\right)h(x_i),
\]
where the term in brackets is bidder $i$'s MVV, which is assumed to be strictly increasing to avoid dealing with ironing that does not add much economic insight.
This setting generalizes that of Example~\ref{ex: ev revisited} and is chosen because Myerson's virtual values arise naturally as scaling factors of MVVs even though the utilities are nonlinear.\footnote{The primitive utility over actions-type $(a_i,\theta_i)$ pairs that gives rise to such an $\omega_i$ can be recovered via $\min_x (\theta_i h(x)+a_i)/x$ when the investment costs are normalized to $C_i(a_i)=a_i$, see \citet{scoringcompanion}.}
As a result, we can naturally compare bidders' strength within this framework in terms of their MVVs and contrast the optimal auction here to Myerson's auction that allocates the good to the bidder with the nonnegative highest MVV.

One natural order that has been employed to rank bidders' strength (e.g., see \citet{carroll2019robustly}) is the \emph{hazard ratio order}.
A cumulative distribution function $F$ dominates another cumulative distribution function $G$ in this order denoted $F \geq_{hr} G$ when their cumulative survival functions satisfy
\[
(1-F(\theta'))(1-G(\theta)) \geq (1-F(\theta))(1-G(\theta')) \;\;\forall \theta \leq \theta',
\]
see Chapter 1 in \citet{shaked2007stochastic}.
As explained in this book, this ranking implies $\frac{1-F}{f} \geq \frac{1-G}{g}$ over the common support of these two distributions.
Intuitively, high types are more likely under $F$ than under $G$ conditional on any interval of the form $[\theta,\infty)$, and so a bidder whose types are drawn from $F$ should receive a higher information rent than the one whose types are drawn from $G$.
As a result, if there are two distinct bidders $i,j$ with the same value $\theta_i=\theta_j$ and $F_i \geq_{hr} F_j$, then Myerson's auction will allocate the good more often to the weaker bidder $j$.

As shown in the previous section, to determine the optimum, one needs to work in the quantile space of bidders' types.
The hazard ratio order however is silent about bidders' strength in the quantile space, where their MVVs can be expressed using $v_i = F_i^{-1}$ analogously to \eqref{eq: MVV linear} as
\begin{align}
\label{eq: zeta}
\zeta_i(u) = v_i(u)\left(1-(1-u)\frac{v_i'(u)}{v_i(u)}\right).
\end{align}
This happens because even if there are two distinct bidders $i,j$ with the same quantile $u_i=u_j$ and $F_i \geq_{hr} F_j$ so that $v_i \geq v_j$, the second term in \eqref{eq: zeta} can be lower for bidder $i$ than bidder $j$ when $F_i$ is much more unequal than $F_j$.
In our comparison of the optimal and Myerson's auctions, we use one more standard stochastic ordering allowing to rank bidders' strengths in the quantile space as well.

Following Chapter 4 in \citet{shaked2007stochastic}, we say that a cumulative distribution function $F$ dominates another cumulative distribution function $G$ in the \emph{star order} denoted $F \geq_{*} G$ when their quantile functions satisfy
\[
F^{-1}(u')G^{-1}(u) \geq F^{-1}(u)G^{-1}(u')\;\;\forall u \leq u'.
\]
The star order means $G$ that is more equal relative to its mean than $F$.
There are many parametric families of distributions for which $F \geq_{hr} G$ automatically implies that $G \geq_{*} F$, e.g.,  this is true for (1) Normal, Logistic, Gumbel distributions ordered by a location parameter, (2) Exponential distributions ordered by a scale parameter, (3) Power distributions ordered by a shape parameter, as well as (4) distributions of the form in Example \ref{ex: ev revisited}.

Going back to our auction framework, if for two bidders $i$ and $j$, in addition to $F_i \geq_{hr} F_j$, we have $F_j \geq_{*} F_i$, then not only bidder $i$'s values, $v_i$, are stochastically higher but also per-unit information rents, $(1-u)\frac{v_i'}{v_i}$, are lower.
This means that the stronger bidder $i$ is allocated the good more often in the quantile space as opposed to bidder $j$.
Equipped with these two approaches to rank bidders' strengths, we can now state our result formally.

\begin{proposition}
\label{prop: optimum vs myerson}
Consider the model as above, and assume that the environment is regular and that bidders' MVVs are strictly increasing, and that 
\[
F_1 \geq_{hr} F_i \geq_{hr} F_n, \;\;F_n \geq_{*} F_i \geq_{*} F_1 \;\;\forall i.
\]
Then, the optimal reduced form $\bm x^*$ and the reduced form induced by Myerson's auction $\bm x^{\rm M}$ satisfy:
\begin{enumerate}
\item 
The strongest bidder $1$ is always favored by the optimum relative to Myerson's auction, $x_1^* \geq x_1^{\rm M}$;
\item 
The weakest bidder $n$ is always discriminated against by the optimum relative to Myerson's auction, $x_n^* \leq x_n^{\rm M}$.
\end{enumerate}
\end{proposition}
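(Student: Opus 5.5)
The plan is to compare the two principal curves directly in $\delta$-coordinates. In this model $H_i(x,u)=\zeta_i(u)h(x)$, so by \eqref{eq: R} the marginal revenue is
\[
\tfrac{\partial}{\partial\delta}R_i(\delta,t)=\zeta_i(e^{-\delta})\,h'(e^{\delta-t}).
\]
$H_i$ is convex in $x$, so Theorem \ref{th: optimum extremal} applies. Hence $\bm\delta^*$ solves $\zeta_i(e^{-\delta_i})h'(e^{\delta_i-t})=p$ with $\sum_i\delta_i=t$ up to the cutoff $T^*$. By Proposition \ref{prop: myerson}, Myerson's $\bm\delta^{\rm M}$ solves $\zeta_i(e^{-\delta_i})=p$ with $\sum_i\delta_i=t$ up to its cutoff $T^{\rm M}$.

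The first step is to translate the claim $x_1^*\ge x_1^{\rm M}$ into a statement about the curves. By \eqref{eq: x from delta}, at quantile $u$ of bidder $1$ we have $x_1(u)=e^{-t}/u$, where $t$ is the first time at which $\delta_1(t)=-\ln u$. Along an extremal path $t=\delta_1+\sum_{j\ne1}\delta_j$. So $x_1^*(u)\ge x_1^{\rm M}(u)$ is equivalent to the following: when each curve is parametrized by the level $d=\delta_1$, every $\delta_j^*(d)$ lies weakly below $\delta_j^{\rm M}(d)$. This is enough because the exponent is the sum. Symmetrically, $x_n^*\le x_n^{\rm M}$ is equivalent to: parametrized by $d=\delta_n$, the other coordinates are weakly above their Myerson counterparts. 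Below the cutoffs one reduced form may be zero, and there I will compare $T^*$ with $T^{\rm M}$ separately. I plan to use that $h'$ is increasing and both $p$'s are decreasing, so the exclusion level of bidder $1$ is no higher under the optimum.

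The second step extracts the stochastic-order information in quantile space. From \eqref{eq: zeta}, $\zeta_i=v_i\bigl(1-(1-u)v_i'/v_i\bigr)$. The hazard-ratio order gives $v_1\ge v_i\ge v_n$ pointwise. The star order says $v_i/v_1$ is decreasing, hence $(1-u)v_i'/v_i$ is ordered with bidder $1$ lowest and bidder $n$ highest. Together these give $\zeta_1\ge\zeta_i\ge\zeta_n$ wherever they are positive. I expect more from the star order: the log-derivative comparison $\zeta_1'/\zeta_1$ versus $\zeta_j'/\zeta_j$, or at least a single-crossing property of $\zeta_j(e^{-\delta})/\zeta_1(e^{-d})$. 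From the first fact, on Myerson's curve $\zeta_1(e^{-d})=\zeta_j(e^{-\delta_j^{\rm M}})$ forces $\delta_j^{\rm M}\le d$. This says bidder $1$ is further along, which in turn gives $h'(e^{d-t})\ge h'(e^{\delta_j^{\rm M}-t})$. So at Myerson's point, bidder $1$'s marginal revenue in the nonlinear model weakly exceeds that of every other bidder, and the mirror image holds for bidder $n$.

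The main obstacle is the third step: turning this marginal-revenue imbalance into the global curve comparison $\delta_j^*(d)\le\delta_j^{\rm M}(d)$. Time $t$ enters the optimal equalization through every coordinate via $t=\sum_k\delta_k$, so a naive pairwise contradiction argument does not close. From $\delta_j^*>\delta_j^{\rm M}$ one only gets $\delta_j^*>d$. My first attempt is a comparative-statics or homotopy argument. I would consider the family $\zeta_i(e^{-\delta_i})\,h'(e^{\delta_i-t})^{\lambda}=p$, $\sum_i\delta_i=t$, for $\lambda\in[0,1]$, which interpolates between Myerson ($\lambda=0$) and the optimum ($\lambda=1$). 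Regularity (A)--(B) makes the solution unique and differentiable in $\lambda$. I would then differentiate in $\lambda$ at a fixed level $d$ of $\delta_1$. The derivative of $\delta_j$ should have the sign of $\ln h'(e^{\delta_j-t})-\ln h'(e^{d-t})$ minus a common adjustment. Since $\delta_j\le d$ along the whole path (established inductively using the ordering of $\zeta$'s), this sign is nonpositive. Bidder $n$ sits at the opposite extreme of both orders, so the same argument with inequalities reversed gives part 2.
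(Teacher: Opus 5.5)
Your first two steps are sound, up to two small slips. First, $F_i\geq_* F_1$ means $v_i/v_1$ is \emph{increasing}; that is what gives $v_i'/v_i\ge v_1'/v_1$, and hence bidder $1$'s lower per-unit rent. Second, coordinatewise domination $\delta_j^*(d)\le\delta_j^{\rm M}(d)$ is sufficient for $x_1^*\ge x_1^{\rm M}$, not equivalent to it.

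The genuine gap is the third step, which carries the whole proof and is only announced. Differentiating the fixed-$d$ system in $\lambda$, the sign of $\partial_\lambda\delta_j$ is that of
\[
\ln h'(e^{\delta_j-t})-\ln h'(e^{d-t})+\lambda(\epsilon_1-\epsilon_j)\,\partial_\lambda t,
\]
where $\epsilon_k$ is the elasticity of $h'$ at $e^{\delta_k-t}$. This correction is not common across $j$, and its sign depends on $h$. It also involves $\partial_\lambda t$, whose sign at fixed $d$ is essentially the conclusion you want. Uniqueness and differentiability along the family would need Conditions (A)--(B) for every intermediate $\lambda$, which are not assumed. The inductive preservation of $\delta_j\le d$ is asserted rather than proved. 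So nothing in the proposal actually establishes $\delta_j^*(d)\le\delta_j^{\rm M}(d)$.

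The missing observation is that the pairwise contradiction you dismissed does close, once all marginal revenues are evaluated at a common time. Write $\rho_k(\delta,t)=\zeta_k(e^{-\delta})h'(e^{\delta-t})$, which is strictly decreasing in $\delta$ by Condition (A). The paper works at a fixed $t<T$. Since the $\delta$-transform is order-preserving, $x_1^*\ge x_1^{\rm M}$ is equivalent to $\delta_1^*\ge\delta_1^{\rm M}$ pointwise in $t$. If $\delta_1^*(t)<\delta_1^{\rm M}(t)$, then $\sum_k\delta_k^*(t)=\sum_k\delta_k^{\rm M}(t)=t$ gives some $i$ with $\delta_i^*(t)>\delta_i^{\rm M}(t)$, and
\[
\rho_1(\delta_1^*(t),t)>\rho_1(\delta_1^{\rm M}(t),t)\ge\rho_i(\delta_i^{\rm M}(t),t)>\rho_i(\delta_i^*(t),t).
\]
This contradicts equalization at the optimum. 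The middle inequality is your Step 2 ($\delta_1^{\rm M}\ge\delta_i^{\rm M}$, with nonnegative common MVV) plus $h''>0$.

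Your $d$-parametrization can be rescued by the same device, because Myerson's condition $\zeta_j(e^{-\delta_j^{\rm M}(d)})=\zeta_1(e^{-d})$ does not involve $t$. At the optimum's time $t^*=t^*(d)$,
\[
\rho_j(\delta_j^{\rm M}(d),t^*)=\zeta_1(e^{-d})\,h'(e^{\delta_j^{\rm M}(d)-t^*})\le\rho_1(d,t^*)=\rho_j(\delta_j^*(d),t^*),
\]
hence $\delta_j^*(d)\le\delta_j^{\rm M}(d)$ for every $j\ne1$, with no homotopy.

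Finally, the cutoffs need no separate comparison. Since $h'>0$, the common marginal revenue vanishes exactly when every $\zeta_i(e^{-\delta_i})=0$, so both auctions freeze at the same $T$ and the same point $\bm\delta(T)$. Part 2 is the mirror image with the inequalities reversed.
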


Proposition \ref{prop: optimum vs myerson} compares the interim winning probabilities of the strongest and weakest bidders across two auctions---optimal and Myerson's---and shows that the optimum allocates more often the strongest bidder at the expense of the weakest one.
Intuitively, this happens in order to decrease duplication of costs that necessarily arises when bidders make investments at the ex ante stage.
It is instructive to unpack the argument behind this proposition to see the economic mechanism at play as well as the role of its assumptions. 
Below, we sketch the argument and intuition for the first part of Proposition \ref{prop: optimum vs myerson} and relegate all remaining details to the appendix.

As explained in the previous section, Myerson's auction equates bidders' MVVs along its principal curve, meaning that the $\delta$-transforms of $\bm x^{\rm M}$ satisfy
\begin{align*}
\zeta_1\left(e^{-\delta_1^{\rm M}}\right) = \zeta_i\left(e^{-\delta_i^{\rm M}}\right) \;\;\forall i \neq 1.
\end{align*}
Since the strongest bidder's distribution $F_1$ is more equal in the sense of the star order than any other bidder $i$'s distribution $F_i$, their MVVs satisfy $\zeta_1 \geq \zeta_i$ whenever $\zeta_i \geq 0$.
As a result, bidder $1$'s $\delta$-transform corresponding to Myerson's auction is the highest one.

By equating MVVs, Myerson's auction completely ignores the curvature of bidders' revenue functions, which is due to the fact that interim winning probabilities only affect bidders' ex ante investment decisions, leaving bidder $1$ with the highest marginal revenue along its principal curve:
\begin{align}
\label{eq: optimum ev myerson}
\zeta_1\left(e^{-\delta_1^{\rm M}}\right)h'\left(e^{\delta_1^{\rm M}-t}\right) \geq \zeta_i\left(e^{-\delta_i^{\rm M}}\right)h'\left(e^{\delta_i^{\rm M}-t}\right) \;\;\forall i \neq 1.
\end{align}
This suggests that the auctioneer can raise expected revenue by reallocating more winning probability to bidder $1$.
Since the cumulative height along the principal curve is fixed at $t$ , i.e., $\sum_{i=1}^n \delta_i^{\rm M}(t) = t$,
the only such reallocation must be at the expenses of other weaker bidders.
But this precisely what is the optimum does---it equates PVVs along its principal curve internalizing the curvature in $h$ so that
\begin{align}
\label{eq: optimum ev equalization}
\zeta_1\left(e^{-\delta_1^{*}}\right)h'\left(e^{\delta_1^{*}-t}\right) = \zeta_i\left(e^{-\delta_i^{*}(t)}\right)h'\left(e^{\delta_i^{*}-t}\right) \;\;\forall i \neq 1,
\end{align}
and the cumulative height stays the same: $\sum_{i=1}^n \delta_i^{*}(t) = t$.
Since bidders' marginal revenue is strictly concave (due to regularity), we should have $\delta_1^{*} \geq \delta_1^{\rm M}$, which is equivalent to the first point in Proposition \ref{prop: optimum vs myerson}.

As can be seen from our argument, the only place where the star ordering is used is to conclude that bidder $1$'s is favored in the quantile space in Myerson's auction, which is the same as $\delta_1^{\rm M} \geq \delta_i^{\rm M}$.
So, the conclusion of this proposition holds under much weaker conditions; in fact, $x_1^*(u) \geq x_1^{\rm M}(u)$ for all high enough $u$ without any extra structure.
On the other hand, not much can be said at this level of generality about the bidders other than the strongest and the weakest one because relative to Myerson's auction, the optimum simultaneously reallocates interim winning probability from them (to the strongest bidder) and to them (from the weakest bidder).
Which force dominates depends on those intermediate bidders' types and fine details of the model.

\section{Optimal auctions: going beyond extremality}
\label{sec: optimum non-extremal}

Thus far, we have constructed the optimal auction in regular environments in which the optimum is extremal, so that the cumulative height satisfies $\sum_{i=1}^n \delta_i(t) = t$ along the corresponding principal curve.  
As shown in Theorem \ref{th: extremality}, non-extremal reduced forms violate this property.
Nevertheless, such violations can be optimal for the auctioneer even under regularity when $H_i$ exhibits satiation with respect to their interim winning probabilities.
In these environments, rather than excluding bidders from trade once marginal revenue becomes nonpositive, the auctioneer may optimally continue allocating the good while keeping $\sum_{i=1}^n \delta_i(t) < t$ at the bottom of the principal curve.

To illustrate the economic force behind non-extremality, consider an IPV model with CRA preferences and a single bidder whose certainty equivalent is quadratic, and whose type is uniformly distributed on the unit interval, so that $H(x,u) = xu - (1-u)x^2$. 
With a single bidder, the feasibility constraint is vacuous, and the problem reduces to maximizing $\int_0^1 H(x(u),u)$ in the space of CDFs. 
Due to strict concavity and supermodularity, the optimum $x^*$ can be obtained using the pointwise first-order condition that gives
\[
x^*(u) = \min\left\{\frac{1}{2}\frac{u}{1-u},1\right\}.
\]
For low types, the optimal allocation assigns winning probability strictly between zero and one.
This interior solution reflects the incentive value of reducing information rents by exploiting bidder’s risk aversion.
We refer the reader to GMSZ-CRA for an excellent treatment of such settings in the $x$-space and additional economic intuition, and highlight their implications for the optimal control in the $\delta$-space.

Direct calculations show that $\frac{\partial}{\partial \delta}R(\delta,t) = e^{-\delta}-2(1-e^{-\delta})e^{\delta-t}$ so that the auctioneer's optimal control problem in Proposition \ref{prop: oc problem} can be restated as
\[
\max_{\delta}\int_0^{\infty} e^{-t}R(\delta(t),t)dt\;\;\text{s.t.}\;\;\delta' \in [0,1],\;\;\delta(0)=0.
\]
Relaxing this problem by replacing $\delta'(t) \in [0,1]$ with the weaker requirement of $\delta(t) \in [0,t]$, one can again solve such relaxation using the pointwise first-order conditions and obtain $\delta^*(t) = t$ with $\frac{\partial}{\partial \delta} R(\delta^*(t),t) > 0$ below a certain threshold and $\delta^*(t)<t$ with $\frac{\partial}{\partial \delta} R(\delta^*(t),t) = 0$ above that threshold. 
This latter region corresponds precisely to types for which the induced interim winning probability lies strictly between zero and one.
Thus, in contrast to the extremal case studied above---where the optimal policy freezes once marginal revenue becomes non-positive---here, the principal curve continues to move while the winning probability is smoothly reduced.

The same phenomenon can arise with multiple (possibly asymmetric) bidders.
In such cases, the optimal mechanism may again be non-extremal, with probability allocated at a strictly lower rate than the feasibility frontier allows.
Nevertheless, the structure uncovered in the extremal case remains informative: optimal allocations continue to be organized by marginal revenue equalization, but are implemented through more flexible allocation rules.
In what follows, we formalize this idea using fractional score allocations, which generalize score-based mechanisms by allowing the auctioneer to scale winning probabilities while preserving the ranking induced by scores.
\begin{definition}
\label{def: fractional score allocation}
An allocation $\bm z$ is a \textbf{fractional score allocation} if there exists a tuple of scores $\bm q$ in the sense of Definition \ref{def: score allocation} and a tuple of right-continuous functions $\bm r = (r_1,\dots,r_n): [0,1] \to [0,1]^n$
such that
\begin{align}
\label{eq: z fractional score}
z_i(\bm{u}) =
\begin{cases}
r_i(u_i)
& \text{if } q_i(u_i) > \max_{j \neq i} q_j(u_j),
\\
0 
& \text{otherwise}.
\end{cases}
\end{align}    
\end{definition}
Fractional score allocations naturally generalize score allocations by allowing bidder $i$'s \textbf{winning fraction} $r_i(u_i)$ to be different from $\mathbf{1}_{q_i^{-1}(\mathbf{R}_+)}(u_i)$.
That is, the ranking of bidders is still determined by scores, but the probability with which the highest-ranked bidder receives the object may be strictly less than one.
Equipped with this definition, we can now characterize the optimal mechanism beyond extremality

\begin{theorem}
\label{th: optimum non-extremal}

Consider a regular environment, and define a cutoff time $T$ as in Proposition \ref{prop: myerson}.
If for each bidder $i$, the function $\delta_i^\dagger:[T,\infty) \to \mathbf{R}$ defined by $\frac{\partial}{\partial \delta}R_i\left(\delta^{\dagger}(t),t\right) = 0$ is weakly increasing with the derivative of at most one, then the auctioneer’s problem admits an optimal reduced form $\bm x^*$ whose $\delta$-transforms $\bm \delta^*$ coincide with $\bm \delta^{\sharp}$ on $[0,T)$ and $\bm \delta^{\dagger}$ on $[T,\infty)$.%
\footnote{Since $p^{\sharp}(T)=0$, we have $\bm \delta^{\sharp}(T)=\bm \delta^{\dagger}(T)$, and so $\bm \delta^*$ is absolutely continuous due to uniqueness of both $\bm \delta^\sharp$, $\bm \delta^\dagger$ under regularity.}
Furthermore, this reduced can be induced by the fractional score allocation with scores $\bm q^*$ of the form in Equation \eqref{eq: optimal score} and fractions $\bm r^*$ given by    
\begin{align}
\label{eq: optimal fraction}
r_i^*(u_i) =
\begin{cases}
1 & u_i \geq e^{-\delta_i^*(T)},
\\
\frac{x_i^{\dagger}(u_i)}{x_i^{\sharp}(u_i)} & \text{otherwise},
\end{cases}
\end{align}
where $x_i^{\sharp}$ is the interim winning probability induced by the score allocation with scores $q^*$, and $x_i^{\dagger}$ is obtained by reverting $\delta_i^{\dagger}$ using \eqref{eq: x from delta}.\footnote{As a convention, we set $0/0 = 0$ in \eqref{eq: optimal fraction}. \label{ftn: fractional score}} 
\end{theorem}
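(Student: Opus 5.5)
The plan is to work entirely inside the optimal control formulation of Proposition~\ref{prop: oc problem}, following the verification strategy used for Proposition~\ref{prop: myerson} and Theorem~\ref{th: optimum extremal}. First I check that the candidate $\bm\delta^*$ ($\bm\delta^\sharp$ on $[0,T)$, $\bm\delta^\dagger$ on $[T,\infty)$) is admissible. It is absolutely continuous with $\bm\delta^*(0)=\bm 0$. On $[0,T)$, regularity (B) gives $\bm\delta^\sharp$ increasing. Since $\sum_i\delta_i^\sharp(t)=t$ and each coordinate is nondecreasing, each $(\delta_i^\sharp)'\in[0,1]$. On $[T,\infty)$ the hypothesis gives $(\delta_i^\dagger)'\in[0,1]$ directly. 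For \eqref{eq: feasibility delta}: on $[0,T]$ the path is extremal, so equality holds, exactly as in Theorem~\ref{th: extremality}. On $[T,\infty)$ the constraint must be propagated. Writing $S(t)=\sum_i\delta_i^*(t)$, the constraint reads $\int_0^t e^{-\tau}S'(\tau)\,d\tau\le 1-e^{-S(t)}$. Equality holds at $T$. Differentiating, the left side grows at rate $e^{-t}S'$ and the right side at rate $e^{-S}S'$. Since $S\le t$ (because $S'\le n$ is not enough, I will use $S(T)=T$ together with $S'$... ), I need $S(t)\le t$. This should follow because each $\delta^\dagger_i$ lies weakly below the frozen-then-continued extremal path. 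More carefully, the marginal revenue at $\delta_i^\dagger$ is zero while $p^\sharp<0$ after $T$, and concavity (A) places $\delta^\dagger_i(t)$ above $\delta_i^\sharp(T)$ but with total $\le t$. If this fails I would directly verify $e^{-S}\ge e^{-t}$ via $S'\le 1$ summed appropriately, which is the first technical point to settle.

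Second, I prove optimality by a dual certificate. I introduce a multiplier $\lambda\ge0$ (a measure) on the state constraint and form the Lagrangian $\int e^{-t}\sum_iR_i(\delta_i,t)\,dt-\int[\mu(t)-1+e^{-S(t)}]\,d\lambda(t)$. I choose $\lambda$ supported on $[0,T]$, where the constraint binds. I calibrate it so that after integration by parts the pointwise coefficient on $\delta_i(t)$ equals $e^{-t}\big(\partial_\delta R_i(\delta_i,t)-p^\sharp(t)\big)$ on $[0,T]$, and $e^{-t}\partial_\delta R_i(\delta_i,t)$ on $(T,\infty)$. This mirrors the construction behind Theorem~\ref{th: optimum extremal}, where $p^\sharp$ plays the role of a shadow price; $p^\sharp$ decreasing (regularity (B)) is what makes the implied multiplier nonnegative. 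By strict concavity (A), the Lagrangian relaxed to the constraint $\delta_i'\in[0,1]$ is maximized pointwise by $\partial_\delta R_i=p^\sharp$ on $[0,T]$ and by $\partial_\delta R_i=0$ after $T$. The pointwise maximizer is exactly $\bm\delta^*$ and is admissible, so complementary slackness gives optimality. The main obstacle is this step: showing that the tail $[T,\infty)$ requires no multiplier, i.e.\ that dropping \eqref{eq: feasibility delta} there is harmless because the unconstrained pointwise optimum already satisfies it. It also requires that the boundary term at $T$ from integration by parts vanishes, which uses $p^\sharp(T)=0$.

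Finally, for implementation, I show that the fractional score allocation with scores $\bm q^*$ from \eqref{eq: optimal score} and fractions \eqref{eq: optimal fraction} induces $\bm x^*$. Above the points $e^{-\delta_i^*(T)}$ the allocation coincides with the score allocation of Theorem~\ref{th: optimum extremal}, giving $x_i^\sharp=x_i^*$ there by \eqref{x along curve}. Below, the ranking is unchanged and only bidder $i$'s own payout is scaled by $r_i^*(u_i)$. Since $r_i^*$ depends only on $u_i$, the interim probability becomes $r_i^*(u_i)\,x_i^\sharp(u_i)=x_i^\dagger(u_i)$, using the $0/0$ convention. It remains to check that $x_i^\dagger\le x_i^\sharp$, so that $r_i^*\in[0,1]$, and that $\bm x^\dagger$ recovered via \eqref{eq: x from delta} is the reduced form with $\delta$-transform $\bm\delta^*$. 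The former follows from $\delta_i^\dagger\le\delta_i^\sharp$ pointwise after $T$, which is the same comparison as used in the feasibility step.
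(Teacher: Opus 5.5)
Your proposal is correct and follows essentially the paper's route. The paper reuses the dual certificate from Theorem~\ref{th: optimum extremal} with $\lambda=-(p^\sharp)'$ on $[0,T)$ and $\lambda=0$ afterwards, and observes that the gradient coefficients $\omega_i$ now vanish identically because $\frac{\partial}{\partial\delta}R_i(\delta_i^\dagger(t),t)=0$ on $[T,\infty)$. It then implements $\bm x^*$ by scaling $x_i^\sharp$ by $r_i^*$, using $\delta_i^\dagger\le\delta_i^\sharp$ (from (A) and $p^\sharp<0$ after $T$) to ensure $r_i^*\in[0,1]$.

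That same comparison also closes the tail-feasibility step you flagged, whereas the fallback via $S'\le 1$ is not available, since only each $(\delta_i^\dagger)'\le 1$. The comparison gives $S(t)=\sum_i\delta_i^\dagger(t)\le\sum_i\delta_i^\sharp(t)=t$ for $t\ge T$. Hence the slack $1-e^{-S(t)}-\int_0^t e^{-\tau}S'(\tau)\,d\tau$ is zero at $T$ and has derivative $S'(t)\bigl(e^{-S(t)}-e^{-t}\bigr)\ge 0$, so it stays nonnegative. The paper leaves this verification implicit.
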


Theorem \ref{th: optimum non-extremal} extends the characterization of optimal auctions to regular environments in which the optimal reduced form is not extremal.
The additional monotonicity condition imposed on the function  $\delta_i^\dagger$ ensures that, once bidder $i$’s marginal revenue reaches zero, it becomes optimal to progressively reduce the rate at which winning probability is assigned to that bidder.
It is easy to see that this condition holds whenever $H_i$ is concave in its first argument and supermodular, and thus Theorem \ref{th: optimum non-extremal} applies to the large class of auction models with CRA preferences introduced in GMSZ-CRA.

Economically, the optimum takes the form of a fractional score allocation with scores constructed from \eqref{eq: oc foc} and fractions given by \eqref{eq: optimal fraction}.%
\footnote{The optimal allocation is not unique unless it is extremal; implementing it through fractional score allocations is one convenient way to induce the optimal reduced form.}
This structure implies that bidder $i$’s interim winning probability satisfies
\[
x_i^*(u) =
\begin{cases}
x_i^{\sharp}(u) & u \geq e^{-\delta_i^*(T)},\\
x_i^{\dagger}(u) & \text{otherwise},
\end{cases}
\]
where $x_i^{\sharp}$ is the probability induced by the extremal score allocation and  
$x_i^{\dagger}$ is obtained from $\delta_i^{\dagger}$ via \eqref{eq: x from delta}.  
It follows directly from the first-order condition defining $\delta_i^{\dagger}$ that $x_i^{\dagger}(u)$ coincides with the pointwise maximizer of $H_i$ for types below the threshold, exactly as in the single-bidder benchmark used to motivate the result.

The mechanism therefore operates in two distinct regimes.
At high quantiles, it allocates probability as in the extremal case, equalizing marginal revenues across bidders along the principal curve.  
Once the common marginal revenue reaches zero, the allocation smoothly transitions to the bidder-specific optimum $x_i^{\dagger}$ while preserving the ranking induced by scores.  
Consequently, unlike the extremal case---where allocation ceases entirely once marginal revenue becomes non-positive---low types may continue to receive the good with strictly 
positive probability.

We elaborate on this point in the next section, where we examine the optimal pattern of discrimination in the CRA-preferences model.

\subsection{Application to discrimination in the CRA-preferences model}
\label{sec: CRA applications}

In this section, we consider the CRA-preferences model and use Theorem \ref{th: optimum non-extremal} to study how the optimal mechanism favors bidders as a function of their risk attitudes.
It turns out that the optimal allocation discriminates against more risk-averse bidders at high types and against less risk-averse bidders at low types.

To make this comparison as transparent as possible, we assume that all bidders share the same type distribution $F$ with inverse $v = F^{-1}$.
Bidders belong to one of two groups: risk-neutral and risk-averse.
A risk-neutral bidder has a linear certainty equivalent equal to $x$, whereas a risk-averse bidder has a strictly convex certainty equivalent $g(x) \leq x$ satisfying $g(0)=0$, $g(1)=1$, $g'>0$, and $g''>0$.
There are $k$ risk-neutral bidders, indexed by $i=1,\dots,k$, and $n-k$ risk-averse bidders, indexed by $i=k+1,\dots,n$.

Because the environment is symmetric within each group, the optimal mechanism is group-symmetric as well.
Hence, it suffices to characterize the interim winning probabilities (and their $\delta$-transforms) for a representative risk-neutral bidder, denoted $i=1$, and a representative risk-averse bidder, denoted $i=n$.
As explained above, the extremal candidate is obtained by solving \eqref{eq: oc foc} for $\delta_1^{\sharp}$, $\delta_n^{\sharp}$, and the common marginal revenue $p^{\sharp}$:
\begin{align}
\label{eq: cra revenue}
p = v(e^{-\delta_1})-v'(e^{-\delta_1})(1-e^{-\delta_1}) = v(e^{-\delta_n})-v'(e^{-\delta_n})(1-e^{-\delta_n})g'(e^{\delta_n-t}),
\end{align}
and 
\begin{align}
\label{eq: cra budget}
t= k \delta_1 + (n-k) \delta_n.
\end{align}
Once this solution is constructed, we can similarly obtain the continuation paths $\delta_1^\dagger$, $\delta_n^\dagger$ and invoke Theorem \ref{th: optimum non-extremal} to recover the optimal interim winning probabilities $x_1^*$, $x_n^*$.

We are interested in comparing $x_1^*$ and $x_n^*$.
Since they are determined through the marginal revenue equalization condition, the optimal pattern of discrimination depends on comparing $\frac{\partial}{\partial x}H_i$, and thus on comparing $x'=1$ with $g'$.
In principle, the difference $1-g'(x)$ may change sign arbitrarily many times for intermediate values of $x$.
To gain traction and deliver a global result, we henceforth assume that $x-g(x)$ is unimodal, attaining its maximum at some $m\in(0,1)$.
This assumption holds, for instance, in the quadratic specification $g(x)=\alpha x^2+(1-\alpha)x$, where $m=\frac{1}{2}$ regardless of $\alpha\in(0,1]$.
As shown by GMSZ-CRA, the quadratic model is consistent with many important models of non-expected utility, such as versions of the disappointment-aversion theories of \cite{loomes1986disappointment} and \cite{jia2001generalized}, \cite{kHoszegi2006model}'s loss-averse utility, and modified mean-variance preferences \citep{blavatskyy2010modifying}.%
\footnote{The unimodality condition also holds for $g(x)=\frac{x}{1+\alpha(1-x)}$, which corresponds to Gul’s disappointment-averse preferences \citep{gul1991theory} with linear utility over outcomes.}
Under this assumption, the direction of discrimination becomes monotone in types as shown in the proposition below.

\begin{proposition}    
\label{prop: CRA}
Consider the model as above, and assume that the environment is regular.
If each $H_i$ is supermodular and $\frac{\partial}{\partial x}H_i(m,m^{\frac{1}{n-1}}) > 0$, then $x_1^*(u) \gtreqless x_n^*(u)$ if and only if $u \lesseqgtr m^{\frac{1}{n-1}}$.
\end{proposition}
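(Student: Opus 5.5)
The plan is to reduce the comparison of $x_1^*$ and $x_n^*$ to a comparison of the $\delta$-transforms $\delta_1^*(t)$ and $\delta_n^*(t)$ along the optimal principal curve. I would locate the unique time at which the two coincide and then show single crossing. Because the environment is group-symmetric, the principal curve is described by the two coordinates $\nu_1=e^{-\delta_1}$ and $\nu_n=e^{-\delta_n}$. By \eqref{x along curve}, a bidder of type $u$ met at curve time $s$ wins with probability $\prod_{j\neq i}\nu_j(s)$. Hence, at a given quantile $u$, the ordering of $x_1^*(u)$ and $x_n^*(u)$ is governed by which group is further along the curve, that is, by the sign of $\delta_1^*-\delta_n^*$ at the relevant times. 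I will first establish this translation as a monotone correspondence using \eqref{eq: x from delta}, so that ``$\delta_1^*(t)\gtrless\delta_n^*(t)$'' maps into a pointwise ordering of $x_1^*(u)$ versus $x_n^*(u)$ in the direction asserted in the proposition.

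Next, I will find the crossing point. Suppose $\delta_1=\delta_n=\delta$ at some $t$ on the extremal segment $[0,T)$. Then \eqref{eq: cra budget} gives $t=n\delta$, so the common quantile is $u=e^{-\delta}$ and the winning probability is $x=e^{\delta-t}=u^{n-1}$. Substituting into \eqref{eq: cra revenue}, the two marginal revenues agree at equal $\delta$ exactly when $g'(u^{n-1})=1$. Unimodality of $x-g(x)$ makes $1-g'$ change sign once, at $x=m$, so the unique crossing quantile is $u=m^{1/(n-1)}$. The positivity assumption $\frac{\partial}{\partial x}H_i(m,m^{\frac{1}{n-1}})>0$ guarantees that this crossing lies where $p^\sharp>0$, i.e.\ in the regime where $\bm\delta^*=\bm\delta^\sharp$. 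It therefore genuinely occurs on the extremal part of the curve described by Theorem~\ref{th: optimum non-extremal}.

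For single crossing, away from the crossing point I will use the sign of $1-g'(x)$ together with strict concavity (regularity (A)) to compare the solutions of \eqref{eq: cra revenue}. Fix $t$ and view the common level $p$ as pinned down by \eqref{eq: cra budget}. The factor $g'(e^{\delta_n-t})$, being below or above $1$, shifts bidder $n$'s marginal-revenue curve relative to bidder $1$'s. By strict monotonicity in $\delta$, this forces $\delta_n^\sharp-\delta_1^\sharp$ to take a definite sign on each side of the crossing time. Below $T$, on the fractional segment, both $\delta_i^\dagger$ solve $\frac{\partial}{\partial\delta}R_i=0$, that is, each is the pointwise maximizer of $H_i$. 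Supermodularity of $H_i$ makes these maximizers monotone in $u$, and the same $1-g'$ sign comparison (now with $x<m$ at low quantiles) preserves the ordering already in force just before $T$. Throughout, I will carry signs explicitly so that the final statement reads $x_1^*\ge x_n^*$ exactly for $u\le m^{1/(n-1)}$.

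The main obstacle is the first step: translating the ordering of the $\delta$'s at equal times into an ordering of $x$'s at equal quantiles. The two bidders reach a given quantile $u$ at different curve times, with different opponent cutoffs. I would handle this by comparing $u\,x_i^*(u)=e^{-(\delta_i^*)^{-1}(-\ln u)}$ across $i$, which reduces the question to comparing the inverses $(\delta_1^*)^{-1}$ and $(\delta_n^*)^{-1}$. An ordering of the $\delta$'s yields the opposite ordering of their inverses, and I will check carefully that the resulting direction matches the one claimed.
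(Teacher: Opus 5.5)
Your overall route is the paper's. You locate the crossing time $t^o=-\frac{n}{n-1}\ln m$ from $\delta_1^\sharp=\delta_n^\sharp$, $e^{\delta-t}=m$ and $g'(m)=1$. You place $t^o$ before $T$ using $p^\sharp(t^o)=\frac{\partial}{\partial x}H_i(m,m^{\frac{1}{n-1}})>0$. You sign $\delta_1-\delta_n$ on either side by marginal-revenue equalization, strict concavity and the sign of $1-g'$, handle $[T,\infty)$ through $\bm\delta^\dagger$, and translate back with \eqref{eq: x from delta}.

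The genuine problem is the direction you commit to. Read literally, the printed statement pairs $x_1^*>x_n^*$ with $u<m^{\frac{1}{n-1}}$, and that is false. The paper's own proof ends with $x_1^*(u)>x_n^*(u)$ for $u>m^{\frac{1}{n-1}}$ and $x_1^*(u)<x_n^*(u)$ for $u<m^{\frac{1}{n-1}}$. The discussion after the proposition says the risk-averse bidder is favored at low quantiles. Example~\ref{ex: CRA} with $k=2$ gives $x_1^*(0.9)\approx 0.83>x_n^*(0.9)\approx 0.76$ but $x_1^*(0.6)\approx 0.32<x_n^*(0.6)\approx 0.41$. Your plan to force the translation ``in the direction asserted in the proposition'' and conclude ``$x_1^*\ge x_n^*$ exactly for $u\le m^{1/(n-1)}$'' can therefore only be completed with a sign error.

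Your own ingredients show this. For $t\in(0,t^o)$ one has $e^{\delta_n^\sharp(t)-t}>m$ (justified below), so $g'>1$ and $\frac{\partial}{\partial\delta}R_1(\delta_1^\sharp,t)=p^\sharp=\frac{\partial}{\partial\delta}R_n(\delta_n^\sharp,t)<\frac{\partial}{\partial\delta}R_1(\delta_n^\sharp,t)$, whence $\delta_1^\sharp>\delta_n^\sharp$. Your inverse comparison then yields $ux_1^*(u)=e^{-(\delta_1^*)^{-1}(-\ln u)}>e^{-(\delta_n^*)^{-1}(-\ln u)}=ux_n^*(u)$ whenever $-\ln u<t^o/n$, i.e.\ for $u>m^{\frac{1}{n-1}}$, because early times are high quantiles. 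Your post-$T$ reasoning ($x<m$, so $g'<1$, so bidder $n$ has the larger marginal revenue at a given $\delta$) likewise favors the risk-averse bidder at low quantiles, contradicting your stated conclusion. Relatedly, the fractional segment is $t\ge T$, not ``below $T$''; keeping the time--quantile orientation straight is exactly what fixes the sign.

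The step you leave vague is why the argument of $g'$ lies above $m$ before $t^o$ and below it afterwards. This is the content of the paper's Step~1 and should be made explicit. Under (B) each $\delta_i^\sharp$ is strictly increasing and $\sum_i\delta_i^\sharp(t)=t$, so $\delta_i^\sharp(t)-t$ is strictly decreasing and equals $\ln m$ only at $t^o$. Alternatively, argue by contradiction: if $\delta_1^\sharp\le\delta_n^\sharp$ at some $t<t^o$, then \eqref{eq: cra budget} gives $\delta_n^\sharp\ge t/n$, hence $\delta_n^\sharp-t>\ln m$, and the comparison above forces $\delta_1^\sharp>\delta_n^\sharp$ (symmetrically on $(t^o,T)$).

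Three further points need tightening:
\begin{itemize}
\item For existence of the crossing, verify directly that $\delta_1=\delta_n=t^o/n$ solves \eqref{eq: oc foc} at $t^o$ and invoke uniqueness in (B). Your ``exactly when'' computation divides by $1-e^{-\delta}$, so it excludes other crossings only for $t>0$; at $t=0$ all $\delta_i$ vanish.
\item \eqref{x along curve} holds only for extremal reduced forms, so on $[T,\infty)$ rely on \eqref{eq: x from delta}, as you ultimately do, not on the opponent-cutoff heuristic.
\item Strict inequalities cannot hold at $u=1$, where both allocations equal one, nor where both allocations vanish (e.g.\ $u=0$ in Example~\ref{ex: CRA}). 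The corrected ``if and only if'' should be read away from those quantiles.
\end{itemize}
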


The proposition shows that the optimal mechanism tilts in favor of the more risk-averse bidder at low quantiles and in favor of the less risk-averse bidder at high quantiles.
Intuitively, when types are low, allocating probability to risk-averse bidders generates relatively smaller information rents.
For high types, the concern shifts toward allocative efficiency, and the mechanism favors bidders whose valuations respond more strongly to additional probability—namely, the less risk-averse ones.
This reversal of priorities across the type distribution is a direct consequence of marginal revenue equalization along the optimal principal curve.

\begin{example}
\label{ex: CRA}

Consider the quadratic specification with $\alpha=1$ and the uniform type distribution so that $g(x)=x^2$ and $v(u)=u$.
To illustrate, fix $n=3$.\footnote{It is routine to verify that the environment is regular for all $n$, $k$ (see the appendix for the discussion of regularity in terms of primitives) and that the added assumptions in Proposition \ref{prop: CRA} are satisfied.}

When $k=2$ (two risk-neutral bidders and one risk-averse bidder), the optimal reduced form satisfies
\[
x_1^*(u)
=\frac{2u+2u^2-1}{2u+1/u} \cdot \mathbf{1}_{[1/2,1]}(u),
\;
x_n^*(u)
=\frac{u}{2(1-u)} \cdot \mathbf{1}_{[0,1/3)}(u)
+\frac{2-u^2-\sqrt{3-2u^2}}{2(1-u)^2} \cdot \mathbf{1}_{[1/3,1]}(u).
\]
Here, $T$ is finite, and low types of the risk-averse bidder receive their (unconstrained) optimal interim winning probability,
$x_n^*(u)=\frac{u}{2(1-u)}$, as predicted by Proposition \ref{prop: CRA}. 

When $k=1$ (one risk-neutral bidder and two risk-averse bidders), the optimal reduced form satisfies
\[
x_1^*(u)
=\frac{12u^2-8u+(2u-1)^{3/2}\sqrt{10u-1}+1}{8u^2} \cdot \mathbf{1}_{[1/2,1]}(u),
\;
x_n^*(u)
=\frac{u+1}{2(1-u)+2/u}.
\]
In contrast to the previous case, $T$ is infinite and the optimum is extremal.

In both cases, the tipping point in Proposition \ref{prop: CRA} equals $m
^{\frac{1}{2}}=\frac{1}{\sqrt{2}} \approx 0.707$, and risk-averse bidders are favored for $u<\frac{1}{\sqrt{2}}$ and disadvantaged for $u>\frac{1}{\sqrt{2}}$.
\end{example}

\section{Final remarks}
\label{sec: conclusion}

In this paper, we develop new tools to study asymmetric auction settings with unidimensional types, unit demands, and a single good for sale.

We believe that these tools extend well beyond our specific framework and can help analyze other economically significant problems.
One immediate application is information design under privacy constraints in the spirit of \citet{he2021private}, who establishes a connection between such information structures and Border-type feasibility in auction settings.
Other potential applications include persuasion, delegation, and related problems studied in \citet{kleiner2021extreme} through the lens of symmetric Border-type feasibility.

Extending our approach to feasibility and extremality to more general auction settings is another fruitful direction for future research.
Concretely, our method for identifying the principal curve of tightest feasibility constraints delivers a sharp characterization of feasibility and an explicit description of extremality.
It may extend to settings with multiunit demands or even multidimensional types.
Border-type feasibility characterizations are known in some of these environments (e.g., see \citet{che2013generalized}), and our approach may help make them more tractable and operational.

Our approach to optimality via optimal control may also prove valuable in broader principal--agent models and in persuasion.
Even within our auction framework, we focus on regular environments in which ironing is not needed, and we illustrate how the results apply to settings with endogenous valuations and non-EU preferences.
This focus is largely due to space constraints.
Further developing the economics of optimal mechanisms in these applications and extending the analysis beyond regular environments are natural next steps.

\bibliographystyle{abbrvnat}

\bibliography{references}

\clearpage
\appendix

\section{Appendix}
\label{sec: appendix}

\subsection{Feasibility}
\label{sec: appendix feasibility}

This section proves Theorem \ref{th: feasibility} and collects basic properties of the principal curve—such as continuity and monotonicity—that are used later. The proof proceeds in two steps. 
Lemma \ref{lm: psi integral} establishes an integral identity for the $\psi$-transform that we use to obtain \eqref{eq: border curve}. 
Lemma \ref{lm: aux G} is key: it shows that the principal curve characterizes the tightest Border constraints by studying an auxiliary maximization problem.
Combining these lemmas yields Theorem \ref{th: feasibility}.

\subsubsection{Preliminaries}

\begin{lemma}
\label{lm: psi integral}

Let $x \in \mathscr{X}$. Then, its $\psi$-transform satisfies 
\[
\int_{\psi(\iota)}^{1}x(u)du=\int_{\iota}^{1}sd\ln\psi(s)\;\;\forall \iota\in[0,1].
\]    
\end{lemma}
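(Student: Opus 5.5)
The plan is to prove the identity by rewriting both sides as integrals over the region under the graph of the product map $\phi(u)=u\,x(u)$, and then swapping the order of integration.

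First, recall the facts we need about $\psi$. By construction, $\psi$ is the right-continuous generalized inverse of $\phi$, so the following equivalence holds up to boundary conventions that affect only null sets:
\[
\phi(u)\le s \iff u \le \psi(s).
\]
Also, $\psi$ is continuous, weakly increasing, and satisfies $\psi(1)=1$. The quantity $s/\psi(s)$ is weakly increasing and lies in $[0,1]$ (Footnote \ref{ftn: feasibility}), so $\ln\psi$ is of bounded variation on any $[\iota,1]$ with $\iota>0$. Hence the Stieltjes integral $\int_\iota^1 s\,d\ln\psi(s)$ is well defined, with the interpretation $\ln\psi=-\infty$ whenever $\psi=0$.

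Second, write the left-hand side in layer-cake form, using the same device as in the derivation of \eqref{eq: objective R}. For $u>0$,
\[
x(u)=\frac{\phi(u)}{u}=\int_0^1 \mathbf{1}\{s<\phi(u)\}\,\frac{ds}{u}.
\]
Therefore
\[
\int_{\psi(\iota)}^1 x(u)\,du=\int_0^1\!\!\int_{\psi(\iota)}^1 \mathbf{1}\{s<\phi(u)\}\,d\ln u\,ds .
\]
By the inverse relation, $s<\phi(u)$ holds if and only if $u>\psi(s)$, up to null sets. The inner integral thus becomes
\[
\int_{\max\{\psi(\iota),\psi(s)\}}^1 d\ln u=-\ln\max\{\psi(\iota),\psi(s)\}.
\]
Splitting the outer integral at $s=\iota$ and using that $\psi$ is monotone gives
\[
\int_{\psi(\iota)}^1 x\,du=-\iota\ln\psi(\iota)-\int_\iota^1 \ln\psi(s)\,ds .
\]

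Third, integrate the right-hand side by parts:
\[
\int_\iota^1 s\,d\ln\psi(s)=\big[s\ln\psi(s)\big]_\iota^1-\int_\iota^1\ln\psi(s)\,ds=-\iota\ln\psi(\iota)-\int_\iota^1\ln\psi(s)\,ds,
\]
since $\ln\psi(1)=0$. This matches the expression from the second step. Integration by parts is legitimate because $\psi$ is continuous, so there are no atoms to worry about.

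The main obstacle is the set of edge cases. When $\iota<\phi$-values near zero or $\psi(\iota)=0$, both sides may be infinite or require a limiting argument. In particular, for $\iota=0$ with $\psi(0)>0$, we have $-\iota\ln\psi(\iota)=0$, which is harmless. If $\psi(0)=0$, then $\iota\ln\psi(\iota)\to0$ because $\psi(\iota)\ge\iota$, so the quantity stays finite. The remaining work is to justify the a.e. inverse equivalence at flat pieces and jumps of $\phi$. I would handle this by noting that the set where $\phi(u)=s$ has positive measure for at most countably many $s$, so it does not affect the double integral. Tonelli's theorem applies because the integrand is nonnegative.
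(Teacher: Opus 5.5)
Your proof is correct, and it takes a genuinely different route from the paper.

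The paper argues by substitution. It splits $[\psi(0),1)$ minus the jump set of $x$ into countably many intervals on which $u\mapsto ux(u)$ is continuous and strictly increasing. On each such interval it writes $x(u)\,du=\psi^{-1}(u)\,d\ln u$ and changes variables $u=\psi(s)$. It then notes that the $s$-intervals corresponding to jumps of $x$ are exactly where $\psi$ is flat, so $d\ln\psi$ puts no mass on them.

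You instead reduce both sides to the common closed form $-\iota\ln\psi(\iota)-\int_\iota^1\ln\psi(s)\,ds$. The left side gets there via the layer-cake identity and Tonelli, which is the same device the paper uses in the main text to derive \eqref{eq: objective R}. The right side gets there via Stieltjes integration by parts.

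What each route buys:
\begin{itemize}
\item The paper's route is more transparent geometrically: it shows directly that jumps of $x$ correspond to flat pieces of $\psi$ and contribute nothing.
\item Your route is more robust. It never needs the decomposition into countably many continuity intervals. That decomposition is not literally available when the jump set of $x$ is dense, because the complement of a dense countable set contains no interval; it would need a pushforward-measure version of the substitution instead.
\end{itemize}

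Two small clean-ups.
\begin{itemize}
\item The a.e.\ inverse relation is easiest to justify pointwise in $s$. For each $s$,
\[
\{u>\psi(s)\}\subseteq\{u:\phi(u)>s\}\subseteq\{u\ge\psi(s)\},
\]
so the sections differ by at most the single point $\psi(s)$. That point is $du/u$-null because $\psi(s)\ge s>0$. Your remark about level sets $\{\phi=s\}$ of positive measure is aimed at the wrong sets. In fact $\phi$ is strictly increasing wherever $x>0$, so its only flat piece is at level $0$.
\item Neither side can be infinite, since the left side is at most $1$. The only real edge case is therefore $\iota=0$ with $\psi(0)=0$, and your bound $\psi(\iota)\ge\iota$ already handles it.
\end{itemize}
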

\begin{proof}
Let $D$ be the set of discontinuities of $x$ excluding a possible atom at $\psi(0)$, and recollect that $[\psi(0),1)\setminus D$ can be rewritten as a countable union of ordered intervals of the form $[a^{l},b^{l})$.
On each such interval, $ux$ is strictly increasing and continuous, thus $\psi$ is strictly increasing on the image of this interval.
So, we obtain
\begin{align*}
\int_{\psi(\iota)}^{1}x(u)du 
= & \sum_{l}\int_{a^{l}}^{b^{l}}\mathbf{1}_{(\psi(\iota),1]}(u)x(u)du\\
= & \sum_{l}\int_{a^{l}}^{b^{l}}\mathbf{1}_{(\psi(\iota),1]}\psi^{-1}(u)d\ln u\\
= & \sum_{l}\int_{\psi^{-1}(a^{l})}^{\psi^{-1}(b^{l})}\mathbf{1}_{(\iota,1]}(s)sd\ln\psi(s)\\
= & \int_{\iota}^{1}sd\ln\psi(s).
\end{align*}
The second line invokes the definition of $\psi$ and its properties.
Specifically, on any continuity interval $[a^l,b^l) \subseteq [\psi(0),1)$, we have $\psi^{-1}(u) = ux(u)$, hence $x(u)du=\psi^{-1}(u)d\ln u$.
The third line is due to the change of variables of integration,
i.e., $u=\psi(s)$. 
Finally, the last line follows from the fact that $\psi$ is constant on the complement of intervals of the
form $[\psi^{-1}(a^{l}),\psi^{-1}(b^{l}))$.
\end{proof}

\begin{lemma}
\label{lm: aux G}

Let $\bm x$ be a reduced form. 
Denote its $\psi$-transforms by $\bm \psi$, and set $\overline{\psi} = \sqrt[n]{\prod_{i=1}^n \psi_i}$. 
Consider the following maximization problem parameterized by $s\in[0,1]$:
\begin{align}
\label{eq: G defined}
G(s)=\max_{\bm{u}\in[0,1]^{n}}\;s^{n}+\sum_{i=1}^{n}\int_{u_{i}}^{1}x_{i}(u)du\;\;\text{s.t.}\;\;\prod_{i=1}^{n}u_{i}\geq s^{n}.
\end{align}
Then: 
\begin{enumerate}
\item 
The value of the principal curve at $s$, which is given by $\nu_{i}(s)=\psi_i\Bigl(\overline{\psi}^{-1}(s)\Bigr)$, is optimal in \eqref{eq: G defined} and uniquely optimal whenever $s \in [\overline{\psi}(0),1]$, $s>0$.
\item
The principal curve itself is continuous, constant on $[0,\overline{\psi}(0))$ and strictly increasing in at least one coordinate on the complement of that set.
\end{enumerate}
\end{lemma}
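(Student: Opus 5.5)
The plan is to solve \eqref{eq: G defined} by Lagrangian/concavity arguments in logarithmic coordinates. Substitute $u_i=e^{-y_i}$. The constraint becomes the linear inequality $\sum_i y_i\le -n\ln s$. Each term becomes
\[
\phi_i(y_i)=\int_{e^{-y_i}}^1 x_i(u)\,du .
\]
Its right derivative in $y_i$ is $e^{-y_i}x_i(e^{-y_i})$, which is the map $u\mapsto ux_i(u)$ read along decreasing $u$. Because $ux_i(u)$ is weakly increasing in $u$, this derivative is weakly decreasing in $y_i$. So each $\phi_i$ is concave and weakly increasing in $y_i$. The problem is therefore the maximization of a separable concave, monotone objective under a single budget constraint $\sum_i y_i\le Y:=-n\ln s$.

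First-order (KKT) conditions then characterize optima. There is a multiplier $\lambda\ge 0$ such that, for each $i$, $\lambda$ lies between the left and right derivatives of $\phi_i$ at $y_i$. In the original coordinates, $u_i$ must satisfy
\[
\lim_{v\downarrow u_i} v x_i(v)\ \ge\ \lambda\ \ge\ u_i x_i(u_i^-) ,
\]
with the appropriate one-sided limits. This says exactly that $u_i\in\{u:\ \psi_i(\lambda^-)\le u\le \psi_i(\lambda)\}$. Since $\psi_i$ is the continuous generalized inverse of $ux_i$, this reads $u_i=\psi_i(\lambda)$.

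The budget binds whenever $\lambda>0$. That happens precisely when $s$ is large enough that we cannot place every $u_i\le\psi_i(0)$, where all derivatives vanish. Binding gives $\prod_i\psi_i(\lambda)=s^n$, i.e. $\overline{\psi}(\lambda)=s$, so $\lambda=\overline\psi^{-1}(s)$ and $u_i=\nu_i(s)$. Concavity makes KKT sufficient, so $\bm\nu(s)$ is optimal.

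When $s<\overline\psi(0)$ or $s=0$, $\lambda=0$ is admissible. Every $\bm u$ with $u_i\le\psi_i(0)$ and product at least $s^n$ is optimal, and $\bm\nu(s)=\bm\psi(0)$ is among them, consistent with the footnote after Theorem~\ref{th: feasibility}. One technical point needs care: $\overline\psi$ may have flat pieces, so $\overline\psi^{-1}$ must be the right-continuous inverse. On a flat piece of $\overline\psi$ every $\psi_i$ is flat, because each $\psi_i$ is monotone and their product is constant. Hence $\bm\nu$ does not depend on which $\lambda$ in the flat piece is chosen.

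Uniqueness for $s\ge\overline\psi(0)$, $s>0$ is the main obstacle, since the $\phi_i$ are not strictly concave: $ux_i$ has flat parts, where $x_i$ jumps. I would argue as follows. Any optimum satisfies KKT with some $\lambda>0$, so $u_i\in[\psi_i(\lambda^-),\psi_i(\lambda)]$. But $\psi_i$ is continuous, so this interval is the single point $\psi_i(\lambda)$, and the product constraint then pins down $\lambda$ up to a flat piece of $\overline\psi$. Continuity of $\psi_i$ therefore delivers uniqueness: nonstrict concavity of $\phi_i$ only creates non-uniqueness of the multiplier, not of $\bm u$. I still need to check the boundary cases. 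These are $u_i=0$, where $\phi_i$ is finite and $y_i=\infty$ is excluded because $s>0$, and $u_i=1$. I also need to confirm that a strictly positive multiplier exists whenever $s>\overline\psi(0)$, and at $s=\overline\psi(0)$ by direct inspection.

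Part 2 follows from the construction. Each $\psi_i$ is continuous and $\overline\psi^{-1}$ can jump only across flat pieces of $\overline\psi$, where every $\psi_i$ is constant, so $\bm\nu=\bm\psi\circ\overline\psi^{-1}$ is continuous. On $[0,\overline\psi(0))$ we have $\overline\psi^{-1}(s)=0$, so $\bm\nu$ is constant there. For $s_1<s_2$ in $[\overline\psi(0),1]$ we get $\prod_i\nu_i(s_2)=s_2^n>s_1^n=\prod_i\nu_i(s_1)$. Since each coordinate is weakly increasing, at least one coordinate must increase strictly.
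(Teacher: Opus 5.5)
Your proposal is correct and follows the paper's proof. The first-order conditions place each $u_i$ at $\psi_i$ of the multiplier level, and complementary slackness together with the geometric mean pins that level down as $\overline{\psi}^{-1}(s)$, up to flat pieces of $\overline{\psi}$ on which every $\psi_i$ is constant, which gives uniqueness; the case $\lambda=0$ covers $s\le\overline{\psi}(0)$, and part 2 uses the same product argument. The only difference is your substitution $u_i=e^{-y_i}$: it makes the product constraint linear, so it justifies the concavity/KKT step more cleanly than the paper's Lagrangian in $u$ (your $\lambda$ is the paper's $\lambda s^n$); note only that $e^{-y_i}x_i(e^{-y_i})$ is the left rather than the right derivative in $y_i$, though your displayed KKT bracket already has the correct ordering.
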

\begin{proof}
To begin, remark that, for each $s\in(0,1)$, the problem determining $G$ is concave and can be solved through the following necessary (and sufficient due to concavity) first-order conditions:
\[
\lambda\prod_{j=1}^{n}u_{j}\in[x_{i}(u_{i}-)u_{i},x_{i}(u_{i})u_{i}],
\]
where $\lambda$ is the Lagrange multiplier on the constraint. 

If $\lambda>0$, then the constraint binds, so $s^{n}=\prod_{i=1}^{n}u_{i}$. 
It follows that the first-order conditions can be rewritten as
\[
u_{i}=\psi_{i}(\lambda s^{n}).
\]
Taking the geometric mean across bidders gives $s=\overline{\psi}(\lambda s^{n})$, and therefore $s\geq\overline{\psi}(0)$ due to monotonicity of $\overline{\psi}$.

On the other hand, if $\lambda=0$, the first-order conditions require 
\[
u_{i}\in[0,\psi_{i}(0)],
\]
which implies $s\leq\overline{\psi}(0)$.

Equipped with this description of the first-order conditions, we now prove the first part of the lemma distinguishing between various values of the parameter $s$.

\textit{Case $s \in (\overline{\psi}(0),1)$.}
As shown above, we must have $\lambda>0$ so that $s=\overline{\psi}(\lambda s^{n})$. 
The largest Lagrange multiplier with this property can be obtained by taking the generalized inverse of $\overline{\psi}$, that is $\lambda s^{n}=\overline{\psi}^{-1}(s)$, which implies that
\[
\nu_{i}(s)=\psi_{i}\Big(\overline{\psi}^{-1}(s)\Big)
\]
is a solution to the maximization problem that defines $G$ for $s>\overline{\psi}(0)$. 
Since each $\psi_{i}$ is constant on every interval on which $\overline{\psi}$ is constant, if $\tilde\lambda \in (0,\lambda)$ so that $s=\overline{\psi}(\tilde \lambda s^{n})$, then $\nu_{i}(s) = \psi_{i}(\tilde\lambda s^{n})$, which shows that $\bm \nu(s)$ is a unique maximizer.

\textit{Case $s \in (0,\overline{\psi}(0)]$.}
By construction, $\nu_i(s) =\psi_{i}(0)$ satisfies the first-order conditions with $\lambda = 0$, thus it is optimal.
Clearly, this point is unique one with that property for $s = \overline{\psi}(0)$.

\textit{Case $s = 1$.}
Remark that the only feasible point in the auxiliary problem is 
\[
\bm \nu(1) = (1,...,1), 
\] 
and thus it is uniquely optimal.

\textit{Case $s = 0$.}
Note that the constraint in the auxiliary problem has no bite. By construction, $x_i = 0$ on $[0,\psi_i(0))$, as a result
\[
\bm \nu(0) = (\psi_1(0),...,\psi_n(0))
\]
maximizes the objective in \eqref{eq: G defined} when the constraint is ignored.

As for the second part of the lemma, both the continuity and constancy of the principal curve on $[0,\overline{\psi}(0))$ directly follow from its definition and the description above.
Finally, take $\underline{s}<\overline{s}$ in $[\overline{\psi}(0),1]$.
By construction, the principal curve is weakly increasing, and thus $\bm \nu(\underline{s}) \leq \bm \nu(\overline{s})$.
Using the fact the constraint in the auxiliary problem holds as equality at both points, we obtain 
\[
\prod_{i=1}^n \nu_i(\underline{s}) = \underline{s}^n < \overline{s}^n = \prod_{i=1}^n \nu_i(\overline{s}),
\]
which shows that $\nu_i(\underline{s}) < \nu_i(\overline{s})$ for at least one coordinate establishing strict monotonicity of the principal curve on $[\overline{\psi}(0),1]$.
\end{proof}

\subsubsection{Proof of Theorem \ref{th: feasibility}}

\begin{proof}

In \eqref{eq: G defined}, $s^n$ is fixed in the objective, so decreasing $u_i$ increases $\int_{u_i}^1 x_i(u)du$ due to monotonicity of $x_i$.
As a result, an optimizer can be taken with the product constraint binding, which gives
\[
G(s)=\max_{\bm{u}\in[0,1]^{n}}B(\bm{u}) \;\text{s.t.}\;\prod_{i=1}^{n}u_{i}=s^{n}
\]
for each $s\in[0,1]$. 
As a result, a reduced form $\bm x$ is feasible if and only if the function $G$ never exceeds $1$.
As explained in Lemma \ref{lm: aux G}, $G$ is strictly increasing on $[0,\overline{\psi}(0))$ and the value of Border's constraint along the principal curve satisfies
\begin{align}
\label{eq: B and G relationship}
B(\bm \nu(s)) = G\Big(\max\{s,\overline{\psi}(0)\}\Big).
\end{align}
Taking these observations together, we conclude that a reduced form $\bm x$ is feasible if and only if the function $B$ never exceeds $1$ along the principal curve.

To see the second part of the theorem, remark that 
\begin{align*}
\sum_{i=1}^n \int_{\nu_i(s)}^{1}x_i(u)du
= & \sum_{i=1}^n \int_{\overline{\psi}^{-1}(s)}^{1}\iota d\ln\psi_i(\iota)
\\
= & n \int_{\overline{\psi}^{-1}(s)}^{1}\iota d\ln\overline{\psi}(\iota),
\end{align*}
where the first line is due to the definition of the principal curve and Lemma \ref{lm: psi integral}, whereas the second follows from the definition of $\overline{\psi}$ as the geometric average of individual $\psi$-transforms.
Then, Equation \eqref{eq: border curve} ensues from combining this integral representation of $\sum_{i=1}^n \int_{\nu_i(s)}^{1}x_i(u)du$ with $\prod_{i=1}^n \nu_i(s)= \Big(\max\{s,\overline{\psi}(0)\}\Big)^n$.
\end{proof}

\subsection{Extremality}
\label{sec: appendix extremality}

In this section, we study extremality of reduced forms and prove Theorems \ref{th: extremality}, \ref{th: scores}. 
We show two theorems in conjunction to each other establishing three statements in the following order:
\begin{enumerate}
\item[] 
\textit{Statement I.}
If a feasible reduced form $\bm x$ is extremal, then Border's constraint binds along the principal curve, or equivalently $\eqref{eq: extremality}$ is verified.
\item[] 
\textit{Statement II.}
If a feasible reduced form $\bm x$ satisfies \eqref{eq: extremality}, then it can be induced by scores as described in Theorem~\ref{th: scores}.
\item[]
\textit{Statement III.}
If a feasible reduced form $\bm x$ can be induced by scores as described in Theorem \ref{th: scores}, then it is extremal.
\end{enumerate}
Statement $I$ corresponds to the "only if" direction of Theorem \ref{th: extremality}, and the other two statements establish Theorem \ref{th: scores} provided that Statement I had been shown.
Taken together, $I$-$III$ imply the "if" direction of Theorem \ref{th: extremality}.

The most challenging part of the analysis is the first statement, which says Border's constraint binds along the principal curve for extremal reduced forms.
In view of Lemma \ref{lm: aux G} and the proof of Theorem \ref{th: feasibility}, this is equivalent to the function $G$, which is defined in \eqref{eq: G defined}, to satisfy
\[
G(s) = 1\;\;\forall s \in [\overline{\psi}(0),1].
\]
This condition is equivalent to non-existence of a \emph{sag}, that is an open interval $(\underline{s},\overline{s})$ in $[\overline{\psi}(0),1]$ on which $G$ is strictly less than $1$.

For each feasible (potentially non-extremal) reduced form $\bm x$, if the set of sags is non-empty, then, since $G(1)=1$ and $G<1$ on $[0,\overline{\psi}(0))$, 
there are two mutually exclusive and exhaustive cases:
\begin{enumerate}
\item
There exists a sag so that $G$ equals $1$ at its endpoints.
\item
There exists a sag that starts at $\overline{\psi}(0)$ and ends at $\overline{s}$ so that $G$ equals $1$ on $[\overline{s},1]$.
\end{enumerate}
We shall show both cases are inconsistent with extremality of $\bm{x}$, thereby establishing that no extremal reduced form admits a sag.
To this end, we first state and prove six auxiliary lemmas that record various properties of elements of the feasible set and its extreme points. 
With these tools in hand, we then prove the two theorems stated in Section \ref{sec: extremality}.

\subsubsection{Preliminaries}

Lemmas \ref{lm: perturbations top} and \ref{lm: perturbations top and bot} show that certain modifications of a feasible reduced form leave (some of) Border’s constraints unchanged. For brevity, when we need to emphasize the dependence on the reduced form, we write $B_{\bm x}$ for Border's constraint in \eqref{border} corresponding to $\bm x$.

The other three lemmas in this section study sags of extremal reduced forms.
Specifically, Lemma \ref{lm: piecewisec and jumps} establishes that $x_i$ is piecewise-constant with finitely many jumps along a sag of the principal curve, and Lemmas \ref{lm: jumps endpoints top}, \ref{lm: jumps endpoints bot} discuss jumps at the endpoints of the sag.
Finally, Lemma \ref{lm: persons gap} shows that any sag corresponds to a non-trivial portion of the principal curve.
Throughout this analysis, we use the following notation.
Given a reduced form $\bm x$, and two cutoffs $\underline{u}<\overline{u}$, define the set of jumps of $x_i$ in $[\underline{u},\overline{u}]$ by
\[
D_{i}(\underline{u},\overline{u})=\left\{u\in[\underline{u},\overline{u}]:x_i(u-)\neq x_i(u)\right\} 
\]
with convention $x_i(0-)=0$. We extend the definition of $D_{i}$ to degenerate intervals by setting 
\[
D_{i}(u,u)=\{u\}.
\]
 
\begin{lemma}[Perturbations at the bottom of the type space]
\label{lm: perturbations top}
Let $\bm{x}$ be a feasible reduced form that satisfies $B_{\bm{x}}(\overline{\bm{u}})=1$ for some $\overline{\bm{u}}\in[0,1]^{n}$,
$\prod_{i=1}^{n}\overline{u}_{i}>0$. 
Consider $\bm{\Delta}=(\Delta_{1},...,\Delta_{n})$ with $\Delta_i:[0,1]\to\mathbf{R}$ such that $\bm{x}+\bm{\Delta}$ is a reduced form, $\bm{\Delta}\neq0$, $\Delta_{i}(u)=0$ on $(\overline{u}_{i},1]$ for each $i$.
Then, $\bm{x}+\bm{\Delta}$ is feasible if and only if 
\begin{align}
B_{\bm{x}+\bm{\Delta}}(\bm{u})\leq1\;\;\forall\bm{u}\;\;\text{s.t.}\;\;\bm{0}\leq\bm{u}\leq\overline{\bm{u}}.
\label{eq: Border reduced top}
\end{align}
\end{lemma}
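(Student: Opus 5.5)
The "only if" direction is immediate, since feasibility means $B_{\bm x+\bm\Delta}\le 1$ on all of $[0,1]^n$. For the converse, I take an arbitrary cutoff vector $\bm u$ and project it coordinatewise onto the box $[\bm 0,\overline{\bm u}]$. Set $\bm w=\bm u\wedge\overline{\bm u}$, so $w_i=\min\{u_i,\overline u_i\}$, and let $S=\{i:u_i>\overline u_i\}$. The plan is to show that
\[
B_{\bm x+\bm\Delta}(\bm u)\le B_{\bm x+\bm\Delta}(\bm w)+\bigl(B_{\bm x}(\bm u)-B_{\bm x}(\bm w)\bigr),
\]
and then to bound the bracket by something nonpositive. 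Combined with \eqref{eq: Border reduced top} applied at $\bm w$, this gives $B_{\bm x+\bm\Delta}(\bm u)\le 1$.

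First I compare the two sides. The product terms in $B_{\bm x+\bm\Delta}$ and $B_{\bm x}$ coincide. For the integral terms, note that for $i\in S$ the perturbation $\Delta_i$ vanishes on $(\overline u_i,1]\supseteq[u_i,1]$, so
\[
\int_{u_i}^1 (x_i+\Delta_i)=\int_{u_i}^1 x_i .
\]
For $i\notin S$ we have $u_i=w_i$, so that coordinate contributes identically to $B(\bm u)$ and $B(\bm w)$. Subtracting therefore gives an exact identity:
\[
B_{\bm x+\bm\Delta}(\bm u)-B_{\bm x+\bm\Delta}(\bm w)=B_{\bm x}(\bm u)-B_{\bm x}(\bm w).
\]
The identity uses only the support condition on $\bm\Delta$. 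It does not depend on $\bm x+\bm\Delta$ being monotone, since monotonicity is given.

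The main step is to show $B_{\bm x}(\bm u)\le B_{\bm x}(\bm w)$ using the binding point $\overline{\bm u}$. Here I exploit the lattice structure of Border's function, namely supermodularity: $B(\bm a\vee\bm b)+B(\bm a\wedge\bm b)\ge B(\bm a)+B(\bm b)$. The integral part of $B$ is additively separable, so it is modular. The product $\prod_i u_i$ is supermodular on $[0,1]^n$ because it is nonnegative and increasing in each coordinate with increasing differences. Applying this with $\bm a=\bm u$ and $\bm b=\overline{\bm u}$, and using $\bm u\wedge\overline{\bm u}=\bm w$, gives
\[
B_{\bm x}(\bm u)-B_{\bm x}(\bm w)\le B_{\bm x}(\bm u\vee\overline{\bm u})-B_{\bm x}(\overline{\bm u})=B_{\bm x}(\bm u\vee\overline{\bm u})-1\le 0.
\]
The final equality uses the hypothesis $B_{\bm x}(\overline{\bm u})=1$, and the last inequality is feasibility of $\bm x$.

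Putting the pieces together, $B_{\bm x+\bm\Delta}(\bm u)\le B_{\bm x+\bm\Delta}(\bm w)\le 1$ by \eqref{eq: Border reduced top}, since $\bm 0\le\bm w\le\overline{\bm u}$. The only delicate point is verifying supermodularity of the product on the cube, which is a routine check: for $n$ variables, increasing differences of a product of nonnegative increasing coordinates. The assumption $\prod_i\overline u_i>0$ and $\bm\Delta\ne 0$ are not needed for this argument; they presumably matter in later uses of the lemma.
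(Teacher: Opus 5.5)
Your proof is correct and is essentially the paper's argument. In its main case ($k<n$) the paper compares $\bm u$ with exactly your two points: $\bm u\vee\overline{\bm u}$, where it uses feasibility of $\bm x$ together with $B_{\bm x}(\overline{\bm u})=1$, and $\bm u\wedge\overline{\bm u}$, where it uses the hypothesis on the box. Its key inequality $(a_1-\overline a_1)(a_2-\overline a_2)\le 0$ is precisely the supermodularity of $\prod_i u_i$ for that pair. Your lattice formulation just folds the paper's separate cases ($\prod_i u_i=0$ and $\bm u\ge\overline{\bm u}$) into one computation.
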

\begin{proof}
The “only if” direction is immediate, so we focus on the converse.
Specifically, suppose that \eqref{eq: Border reduced top} holds. 
We now show that $\bm{x}+\bm{\Delta}$ is feasible, meaning that the inequality in \eqref{eq: Border reduced top} holds for all $\bm u$ other than $\bm{0}\leq\bm{u}\leq\overline{\bm{u}}$.
To this end, we distinguish between several cases.

\textit{Case $\prod_{i=1}^{n}u_{i}=0$.} In this case,
\[
B_{\bm{x}+\bm{\Delta}}(\bm{u}) 
\leq \sum_{i=1}^n \int_{0}^1 (x_i(u)+\Delta_i(u))du 
= B_{\bm{x}+\bm{\Delta}}(\bm{0}) 
\leq 1,
\]
where the first inequality holds because $\prod_{i=1}^n u_i = 0$, $x_{i}+\Delta_{i} \geq 0$, and the last inequality follows from \eqref{eq: Border reduced top}.

Conversely, suppose $\bm{u}\in[0,1]^{n}$ is such that $\prod_{i=1}^{n}u_{i}>0$
and $u_{i}>\overline{u}_{i}$ precisely for the first $k \neq 0$ coordinates,
i.e., $i=1,...,k$---after relabeling if necessary.  

\textit{Case $k=n$.} Since $\bm{u}\geq\overline{\bm{u}}$, we have $\int_{u_i}^1 \Delta_i(u)du=0$ for each $i\in N$, and hence 
\[
B_{\bm{x}+\bm{\Delta}}(\bm{u})=B_{\bm{x}}(\bm{u})\leq 1.
\]

\textit{Case $k<n$.} 
Set $a_1=\prod_{i=1}^{k}u_{i}$, $\overline{a}_1=\prod_{i=1}^{k}\overline{u}_{i}$,
$a_2=\prod_{i=k+1}^{n}u_{i}$, and $\overline{a}_2=\prod_{i=k+1}^{n}\overline{u}_{i}$.

Since $\bm{x}$ is feasible and $B_{\bm{x}}(\overline{\bm{u}})=1$,
we have 
\[
B_{\bm{x}}\left(u_{1},...,u_{k},\overline{u}_{k+1},...,\overline{u}_{n}\right)\leq B_{\bm{x}}(\overline{\bm{u}}),
\]
which expands to
\begin{align}
a_1 \cdot \overline{a}_2 + \sum_{i=1}^{k}\int_{u_{i}}^{1}x_{i}(u)du
\leq
\overline{a}_1 \cdot \overline{a}_2 + \sum_{i=1}^{k}\int_{\overline{u}_{i}}^{1}x_{i}(u)du.
\label{eq: top aux 1}
\end{align}
Next, consider evaluating $B_{\bm{x}+\bm{\Delta}}$ at $\left(\overline{u}_{1},...,\overline{u}_{k},u_{k+1},...,u_{n}\right)$. 
Using $\Delta_{i}(u)=0$ for $u>\overline{u}_{i}$, we obtain
\begin{align}
\overline{a}_1 \cdot a_2
+\sum_{i=1}^{k}\int_{\overline{u}_{i}}^{1}x_{i}(u)du
+\sum_{i=k+1}^{n}\int_{u_{i}}^{1}x_{i}(u)du
+\sum_{i=1}^{n}\int_{u_{i}}^{1}\Delta_{i}(u)du
\leq 1,
\label{eq: top aux 2}
\end{align}
due to \eqref{eq: Border reduced top}.

Since $a_1 \geq \overline{a}_1$ and $a_2 \leq \overline{a}_2$, we have
$(a_1-\overline{a}_1)\cdot (a_2-\overline{a}_2)\leq 0$, i.e.,
\[
a_1 \cdot a_2 \leq 
\overline{a}_1 \cdot a_2 + a_1 \cdot \overline{a}_2 -\overline{a}_1 \cdot \overline{a}_2.
\]
Combining this inequality with \eqref{eq: top aux 1} and \eqref{eq: top aux 2}, we conclude that 
\begin{align*}
B_{\bm{x}+\bm{\Delta}}(\bm{u}) 
&= a_1 \cdot a_2 + \sum_{i=1}^n \int_{u_i}^1 x_i(u)du + \sum_{i=1}^n \int_{u_i}^1 \Delta_i(u)du
\\
&\leq \overline{a}_1 \cdot a_2 + a_1 \cdot \overline{a}_2 -\overline{a}_1 \cdot \overline{a}_2 + \sum_{i=1}^n \int_{u_i}^1 x_i(u)du + \sum_{i=1}^n \int_{u_i}^1 \Delta_i(u)du
\\
&\leq \overline{a}_1 \cdot a_2 + \sum_{i=1}^k \int_{\overline{u}_i}^1 x_i(u)du + \sum_{i=k+1}^n \int_{u_i}^1 x_i(u)du + \sum_{i=1}^n \int_{u_i}^1 \Delta_i(u)du
\\
&\leq 1,
\end{align*}
as claimed.
\end{proof}

where the first line follows from the definitions of $B_{\bm x + \bm \Delta}$, the second one uses 

\begin{lemma}[Perturbations in the middle of the type space]
\label{lm: perturbations top and bot}
Let $\bm{x}$ be a feasible reduced form that satisfies $B_{\bm{x}}(\underline{\bm{u}})=B_{\bm{x}}(\overline{\bm{u}})=1$ for some $\underline{\bm{u}},\overline{\bm{u}}\in[0,1]^{n}$ such that $\prod_{i=1}^{n}\underline{u}_{i}>0$,
$\underline{\bm{u}}\neq\overline{\bm{u}}$, $\underline{\bm{u}}\leq\overline{\bm{u}}$. 
Consider $\bm{\Delta}=(\Delta_{1},...,\Delta_{n})$ with $\Delta_i:[0,1]\to\mathbf{R}$
such that $\bm{x}+\bm{\Delta}$ is a reduced form, $\bm{\Delta}\neq0$,
$\Delta_{i}(u)=0$ on $[0,\underline{u}_{i})\cup(\overline{u}_{i},1]$ for each $i$, and
\[
\sum_{i=1}^n\int_{\underline{u}_{i}}^{\overline{u}_{i}}\Delta_{i}(u)du=0.
\]
Then, $\bm{x}+\bm{\Delta}$ is feasible if and only if 
\begin{align}
B_{\bm{x}+\bm{\Delta}}(\bm{u})\leq1\;\;\forall\bm{u}\;\;\text{s.t.}\;\;\underline{\bm{u}}\leq\bm{u}\leq\overline{\bm{u}}.
\label{eq: Border reduced top and bot}
\end{align}
\end{lemma}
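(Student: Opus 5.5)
The "only if" direction is immediate, so I focus on the converse: assume \eqref{eq: Border reduced top and bot} and show $B_{\bm x+\bm\Delta}(\bm u)\le 1$ for every $\bm u\in[0,1]^n$. The plan is to reduce to Lemma \ref{lm: perturbations top} by treating the two binding points separately. First, note that since $\Delta_i$ vanishes on $[0,\underline u_i)$ and integrates to zero in aggregate, we have $\sum_i\int_{u_i}^1\Delta_i=0$ whenever $\bm u\le\underline{\bm u}$. In particular $B_{\bm x+\bm\Delta}(\underline{\bm u})=B_{\bm x}(\underline{\bm u})=1$, and for $\bm u\le \underline{\bm u}$ we get $B_{\bm x+\bm\Delta}(\bm u)=B_{\bm x}(\bm u)\le 1$. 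Similarly $B_{\bm x+\bm\Delta}(\bm u)=B_{\bm x}(\bm u)$ for $\bm u\ge\overline{\bm u}$.

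For a general $\bm u$, I project onto the box $[\underline{\bm u},\overline{\bm u}]$ coordinatewise using the same supermodular-product trick as in Lemma \ref{lm: perturbations top}. The idea is to split the coordinates into three groups: $I_-=\{i:u_i<\underline u_i\}$, $I_0=\{i:\underline u_i\le u_i\le\overline u_i\}$, and $I_+=\{i:u_i>\overline u_i\}$. Let $\bm u^\circ$ be the clamp of $\bm u$ to the box. I would then prove the key inequality
\[
B_{\bm x+\bm\Delta}(\bm u)-B_{\bm x+\bm\Delta}(\bm u^\circ)\le \big[B_{\bm x}(\bm u^{-})-B_{\bm x}(\bm u^{-}_\circ)\big]+\big[B_{\bm x}(\bm u^{+})-B_{\bm x}(\bm u^{+}_\circ)\big],
\]
where the comparison points are chosen so that the right-hand brackets are $\le 0$ by feasibility of $\bm x$ together with tightness at $\underline{\bm u}$ and $\overline{\bm u}$. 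Concretely, I would proceed in two steps. In the first step I lower the $I_+$ coordinates to $\overline u_i$, using tightness at $\overline{\bm u}$ exactly as in \eqref{eq: top aux 1}: compare $B_{\bm x}$ at $\overline{\bm u}$ with $\overline{\bm u}$ modified on $I_+$, and use $(a_1-\overline a_1)(a_2-\overline a_2)\le0$. In the second step I raise the $I_-$ coordinates to $\underline u_i$, using tightness at $\underline{\bm u}$: compare $B_{\bm x}(\underline{\bm u})=1$ with $B_{\bm x}$ at $\underline{\bm u}$ modified on $I_-$ (both of which are $\le 1$), and use the analogous product inequality. The $\Delta$ terms are unaffected in the first step since $\Delta_i=0$ above $\overline u_i$. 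In the second step, lowering $u_i$ below $\underline u_i$ adds $\int_{u_i}^{\underline u_i}\Delta_i=0$, so the $\Delta$ contributions coincide. It therefore remains to bound the $x$-integrals and product terms, after which the conclusion follows from $B_{\bm x+\bm\Delta}(\bm u^\circ)\le1$, which is given by hypothesis.

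The main obstacle is the product term when both $I_-$ and $I_+$ are nonempty, since two rectangle-type inequalities must be chained. I would handle it by applying the $(a_1-\overline a_1)(a_2-\overline a_2)\le0$ argument twice, treating the product over $I_0$ as a common positive factor (using $\prod\underline u_i>0$). For the $I_-$ step I would use the reversed inequality $(a_1-\underline a_1)(a_2-\underline a_2)\le0$ with $a_1\le\underline a_1$ on $I_-$ and the remaining coordinates $\ge\underline{\bm u}$. I also need to check that the intermediate comparison points involve $B_{\bm x}$ only at points where feasibility applies, which holds everywhere. The case $\prod u_i=0$ is covered automatically because $I_-$ then contains that coordinate.
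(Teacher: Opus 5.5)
Your proposal is correct and follows essentially the paper's route: two chained rectangle (product-supermodularity) inequalities anchored at the binding points $\overline{\bm u}$ and $\underline{\bm u}$. In each clamping step the $\bm\Delta$-terms are unchanged because $\Delta_i$ vanishes outside $[\underline u_i,\overline u_i]$. The only difference is packaging: the paper removes the coordinates above $\overline{\bm u}$ by citing Lemma \ref{lm: perturbations top} and handles $\prod_i u_i=0$ separately via the zero-sum condition, whereas you inline both steps, so the $\prod_i u_i=0$ case is absorbed into your $I_-$ step.
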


\begin{proof}
In view of Lemma \ref{lm: perturbations top}, the reduced form  $\bm{x}+\bm{\Delta}$ is feasible if and only if \eqref{eq: Border reduced top} holds.
We thus need to show that \eqref{eq: Border reduced top} is satisfied if and only if \eqref{eq: Border reduced top and bot} holds.
The "only if" direction is immediate, so we focus on the converse, again distinguishing between several cases.

\textit{Case $\prod_{i=1}^{n}u_{i}=0$.} In this case,
\[
B_{\bm{x}+\bm{\Delta}}(\bm{u}) 
\leq \sum_{i=1}^n \int_{0}^1 (x_i(u)+\Delta_i(u))du 
= B_{\bm{x}}(\bm{0}) 
\leq 1,
\]
where the first inequality holds because $\prod_{i=1}^{n}u_{i}=0$, $x_{i}+\Delta_{i} \geq 0$, the equality follows from $\sum_{i=1}^n\int_{0}^{1}\Delta_i(u)du=0$, and the last inequality follows from feasibility of $\bm{x}$.

Now, suppose $\bm{u}\leq\overline{\bm{u}}$ is such that $\prod_{i=1}^{n}u_{i}>0$ and $u_{i}<\underline{u}_{i}$ precisely for the first $k \neq 0$ coordinates, i.e., $i=1,...,k$---after relabeling if necessary.

\textit{Case $k=n$.} 
Since $\bm{u}\leq\underline{\bm{u}}$, it follows that  $\sum_{i=1}^{k}\int_{u_{i}}^{1}\Delta_{i}(u)du=0$, and hence 
\[
B_{\bm{x}+\bm{\Delta}}(\bm{u})=B_{\bm{x}}(\bm{u})\leq 1.
\]

\textit{Case $k<n$.} 
The argument is identical to the one in the proof of Lemma \ref{lm: perturbations top} but replaces $\overline{\bm{u}}$
with $\underline{\bm{u}}$.
Setting $a_1=\prod_{i=1}^{k}u_{i}$, $\underline{a}_1=\prod_{i=1}^{k}\underline{u}_{i}$,
$a_2=\prod_{i=k+1}^{n}u_{i}$, $\underline{a}_2=\prod_{i=k+1}^{n}\underline{u}_{i}$, and repeating the same steps as before, we can obtain
\begin{gather*}
a_1 \cdot \underline{a}_2+\sum_{i=1}^{k}\int_{u_{i}}^{1}x_{i}(u)du
\leq
\underline{a}_1 \cdot \underline{a}_2+\sum_{i=1}^{k}\int_{\underline{u}_{i}}^{1}x_{i}(u)du,\\
\underline{a}_1 \cdot a_2
+\sum_{i=1}^{k}\int_{\underline{u}_{i}}^{1}x_{i}(u)du
+\sum_{i=k+1}^{n}\int_{u_{i}}^{1}x_{i}(u)du
+\sum_{i=1}^{n}\int_{u_{i}}^{1}\Delta_{i}(u)du
\leq 1,\\
a_1 \cdot a_2
\leq
a_1 \cdot \underline{a}_2+\underline{a}_1 \cdot a_2-\underline{a}_1 \cdot \underline{a}_2.
\end{gather*}
Combining these inequalities yields the desired result: 
\[
B_{\bm{x}+\bm{\Delta}}(\bm{u})\leq 1.
\]
\end{proof}

\begin{lemma}[Structure along a sag]
\label{lm: piecewisec and jumps}
Let $\bm{x}$ be an extremal reduced form.
Consider a sag $(\underline{s},\overline{s})$.
Then, for
each $i$ such that $\nu_{i}(\underline{s})<\nu_{i}(\overline{s})$,
the function $x_{i}$ is piecewise-constant on $[\nu_{i}(\underline{s}),\nu_{i}(\overline{s})]$ with finitely elements in $D_{i}\big(\nu_{i}(\underline{s}),\nu_{i}(\overline{s})\big)$, i.e., $x_i$ takes finitely many values.
\end{lemma}

\begin{proof}
Let $i$ be such that $\nu_{i}(\underline{s})<\nu_{i}(\overline{s})$.
There are two mutually exclusive possibilities: either all (weak) lower contour sets of $x_i$ are open in $\big(\nu_{i}(\underline{s}),\nu_{i}(\overline{s})\big)$, i.e., 
\begin{align}
\label{eq: openness}
\left\{u \in \big(\nu_{i}(\underline{s}),\nu_{i}(\overline{s})\big)| x(u) \leq x(\underline{u})\right\} \text{ is open }\;\forall \underline{u} \in \big(\nu_{i}(\underline{s}),\nu_{i}(\overline{s})\big),
\end{align}
or there exists at least one such lower contour set that is not open.
In what follows, we first prove that the second case cannot occur given that $\bm x$ is extremal in two logical steps, and then complete the proof in one additional step.

\textit{Step 1.}
Towards a contradiction, suppose that \eqref{eq: openness} is violated at $\underline{u}$, i.e., the associated lower contour set equals $\big(\nu_{i}(\underline{s}),\underline{u}\big]$.
This means that there exists a closed interval $[\underline{u},\overline{u}]$ compactly contained in $\big(\nu_{i}(\underline{s}),\nu_{i}(\overline{s})\big)$ such that $x_{i}(u) > x_i(\underline{u})$ for all $u \in (\underline{u},\overline{u}]$.
Denote the value in 
\[
\max_{\bm{u}\in[0,1]^{n}}B(\bm{u})\;\;\text{s.t.}\;\;u_{i}\in[\underline{u},\overline{u}].
\]
by $\omega$.
Since $\bm{x}$ is feasible, we have $\omega \leq 1$. We now show that $\omega$ is strictly less than one. 

Suppose, for contradiction, that $\omega=1$. Let $\bm u$ a maximizer that attains $\omega$ and set $s \in [0,1]$ so that $s^{n}=\prod_{i=1}^{n}u_{i}$.
By definition of the auxiliary problem in \eqref{lm: aux G}, $G(s)$ is an upper bound on $\omega$.
This bound is tight because $\bm x$ is assumed to be feasible, i.e., $G \leq 1$, and $\omega = 1$.
Since $G$ is strictly less than one on $[0,\overline{\psi}(0))$, we must have $s \geq \overline{\psi}(0)$, thus
\[
B(\bm{u}) = G(s) = B(\bm \nu (s))
\]
due to \eqref{eq: B and G relationship} in the proof of Theorem \ref{th: feasibility}.

Since $G$ is strictly less than one on $(\underline{s},\overline{s})$, we cannot have $s \in (\underline{s},\overline{s})$.
By monotonicity of the principal curve,
\[
\nu_i(s) \leq \nu_i(\underline{s}) < \underline{u}
\]
whenever $s \leq \underline{s}$, and
\[
\nu_i(s) \geq \nu_i(\overline{s}) > \overline{u}
\]
whenever $s \geq \overline{s}$.
In either case, we obtain  $\nu_{i}(s)\not\in [\underline{u},\overline{u}]$. 
As a result, $\bm{u}\neq \bm \nu (s)$ contradicting the uniqueness of the optimizer stated in Lemma \ref{lm: aux G}.
Therefore, we conclude that $\omega < 1$, as claimed.

\textit{Step 2.}
We now construct a perturbation $\bm \Delta:[0,1]\to\mathbf{R}^n$ so that $\bm x \pm \varepsilon \bm \Delta$ is a feasible reduced form provided that $\varepsilon > 0$ is small enough.
The existence of such perturbation contradicts extremality of $\bm x$, thereby ruling out the existence of an non-open lower contour set.

First of all, for each $j \neq i$, set $\Delta_j = 0$.
As shown in the proof of Theorem 1 in \citet{kleiner2021extreme},  
there exists a number $x^* \in [x_i(\underline{u}),x_i(\overline{u})]$ and $[\underline{u}^*,\overline{u}^*]$ contained in that interval so that 
$\Delta_i$ defined by
\begin{align*}
\Delta_{i}(u) & = 
\begin{cases}
x_i(u)-x_i(\underline{u}) & u\in[\underline{u},\underline{u}^*),\\
x^*-x_i(u) & u\in[\underline{u}^*,\overline{u}^*),\\
x_i(u)-x_i(\overline{u}) & u\in[\overline{u}^*,\overline{u}],\\
0             & u\not\in[\underline{u},\overline{u}]
\end{cases}
\end{align*}
is such that $\Delta_i\neq 0$, $x_i \pm \Delta_i$ is in $\mathscr{X}$, and $\int_{\underline{u}}^{\overline{u}}\Delta_i(u)du=0$. 
In fact, \citet{kleiner2021extreme} provided a closed-form expression for the endpoints of the subinterval used above:
\begin{align*}
\underline{u}^*  = & \inf_{u\in[\underline{u},\overline{u}]}u\;\;\text{s.t.}\;\;2x_i(u)\geq x_i(\underline{u})+x^*,\\
\overline{u}^*  = & \inf_{u\in[\underline{u},\overline{u}]}u\;\;\text{s.t.}\;\;2x_i(u)\geq x_i(\overline{u})+x^*
\end{align*}
and proved that the value of $x^*$ making $\int_{\underline{u}}^{\overline{u}}\Delta_i(u)du=0$ (provided the endpoints chosen as above) is well-defined defined.

Set
\[
\overline{\varepsilon} =\frac{1-\omega}{\max_{u_{i}\in[\underline{u},\overline{u}]}\left|\int_{u_{i}}^{\overline{u}}\Delta_i(u)du\right|}>0
\]
and choose $\varepsilon \in (0,1)$ such that $\varepsilon \leq \overline{\varepsilon}$. 
Then, for every $\bm{u}\in[0,1]^{n}$ with $u_{i}\in[\underline{u},\overline{u}]$, we have
\[
B_{\bm{x} \pm \varepsilon \bm {\Delta}}(\bm{u}) \leq  B_{\bm{x}}(\bm{u})+\varepsilon\left|\int_{u_{i}}^{1}\Delta_i(u)du\right| \leq  B_{\bm{x}}(\bm{u})+\left(1-\omega\right).
\]
Since $B_{\bm{x}}(\bm{u})\leq\omega$ for every $\bm{u}\in[0,1]^{n}$ with $u_{i}\in[\underline{u},\overline{u}]$ and $B_{\bm{x} \pm  \varepsilon \bm {\Delta}}(\bm{u})=B_{\bm{x}}(\bm{u})\leq1$ for every $\bm{u}\in[0,1]^{n}$ with $u_{i}\not\in [\underline{u},\overline{u}]$, the reduced form $\bm{x}\pm\varepsilon \bm {\Delta}$ is feasible. 
However, since $\bm{x}$ is assumed to be an extreme point of the feasible set, this contradicts extremality. 
Therefore, no such interval $[\underline{u},\overline{u}]$ can exist.

To sum up, we have shown all (weak) lower contours sets of $x_i$ are open in $\big(\nu_{i}(\underline{s}),\nu_{i}(\overline{s})\big)$.
This means that $x_i$ is piecewise-constant and its set of jumps $D_{i}\big(\nu_{i}(\underline{s}),\nu_{i}(\overline{s})\big)$ is well-ordered in reverse, i.e., every its nonempty subset admits the largest element, and so we can meaningfully talk about consecutive jumps in that set.

\textit{Step 3.}
To conclude the proof, we need to show that it is finite.
By the way of contradiction, suppose $x_i$ admits three consecutive jumps points in $\big(\nu_{i}(\underline{s}),\nu_{i}(\overline{s})\big)$ denoted $\underline{u} < u^* < \overline{u}$. 
Similarly to the previous step, we rule out this possibility by explicitly constructin a perturbation $\bm{\Delta}=(\Delta_{1},...,\Delta_{n}):[0,1]\to\mathbf{R}^{n}$  so that $\bm x \pm \varepsilon \bm \Delta$ is a feasible reduced form provided that $\varepsilon > 0$ is small enough.

First of all, for each $j \neq i$, set $\Delta_j = 0$.
For coordinate $i$, set
\begin{align}
\label{eq: kappa jumps}    
\Delta_i(u) = \kappa \cdot\left(\frac{\mathbf{1}_{[\underline{u},u^*)}(u)}{u^*-\underline{u}} - \frac{\mathbf{1}_{[u^*,\overline{u})}(u)}{\overline{u}-u^*}\right)
\end{align}
for some $\kappa>0$ small enough to ensure $\Delta_i\neq0$ and  $x_{i}\pm\Delta_i\in \mathscr{X}$, e.g., any choice such that
\begin{align*}
\frac{\kappa}{u^*-\underline{u}} \leq x_i(\underline{u})-x_i(\underline{u}-),
\;\frac{\kappa}{u^*-\underline{u}} + \frac{\kappa}{\overline{u}-u^*}\leq x_i(u^*)-x_i(u^*-),
\;\frac{\kappa}{\overline{u}-u^*} \leq x_i(\overline{u})-x_i(\overline{u}-)
\end{align*}
will do.

Then, applying the same argument as in the previous construction, we conclude that the perturbed reduced form $\bm{x} \pm \varepsilon \bm{\Delta}$ is feasible for sufficiently small $\varepsilon>0$.
This contradicts extremality of $\bm{x}$, and hence no such triple jump structure can exist.
\end{proof}

\begin{lemma}[Discontinuity at the top of a sag]
\label{lm: jumps endpoints top}
Let $\bm{x}$ be an extremal reduced form. 
Consider a sag $(\underline{s},\overline{s})$ satisfying  $G(\overline{s})=1$.  
Then,
\[
\nu_i(\overline{s}) \in D_i\big(\nu_i(\underline{s}),\nu_i(\overline{s})\big) \;\;\forall i,
\]
i.e., $x_{i}$ is discontinuous at $\nu_i(\overline{s})$ whenever $\nu_i(\underline{s})<\nu_i(\overline{s})$.
\end{lemma}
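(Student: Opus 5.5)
We argue by contradiction, mirroring Step 3 of Lemma~\ref{lm: piecewisec and jumps}. Fix $i$ with $\nu_i(\underline{s})<\nu_i(\overline{s})$ and suppose $x_i$ is continuous at $\overline{u}:=\nu_i(\overline{s})$. By Lemma~\ref{lm: piecewisec and jumps}, $x_i$ is piecewise constant with finitely many jumps on $[\nu_i(\underline{s}),\overline{u}]$. Continuity at $\overline{u}$ therefore forces $x_i$ to be constant on some interval $[u^*,\overline{u}]$, where $u^*<\overline{u}$ is the last jump of $x_i$ in the sag, or $u^*=\nu_i(\underline{s})$ if there is none.

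The first step uses the tightness $G(\overline{s})=1$. Because $\overline{s}$ is the upper endpoint of a sag, $\overline{s}>\overline{\psi}(0)$. Hence $\bm\nu(\overline{s})$ is the unique maximizer of $B$ on the surface $\prod_j u_j=\overline{s}^n$ (Lemma~\ref{lm: aux G}), and the first-order conditions give
\[
\lambda\,\overline{s}^n\in\big[\nu_i(\overline{s})\,x_i(\nu_i(\overline{s})-),\;\nu_i(\overline{s})\,x_i(\nu_i(\overline{s}))\big]
\]
with $\lambda>0$. If $x_i$ is continuous at $\overline{u}$, this interval collapses, and $\overline{u}x_i(\overline{u})=\overline{\psi}^{-1}(\overline{s})$. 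On the other hand, flatness of $x_i$ on $[u^*,\overline{u}]$ means $ux_i(u)$ is strictly increasing and linear there. So $\psi_i$ is strictly increasing on the corresponding range of $\iota$. Consequently $\nu_i(s)=\psi_i(\overline{\psi}^{-1}(s))$ moves continuously into $(u^*,\overline{u})$ as $s\uparrow\overline{s}$, provided $\overline{\psi}^{-1}$ is continuous from the left at $\overline{s}$. This is the case because $\overline{\psi}$ is strictly increasing wherever $\psi_i$ is.

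The second step is the perturbation. Take a small interval $[\underline{u}',\overline{u}]$ inside $[u^*,\overline{u}]$ and raise $x_i$ on it. To keep the perturbation mass-neutral, compensate by lowering $x_i$ at its last jump $u^*$, using a two-block $\Delta_i$ as in \eqref{eq: kappa jumps}. If there is no jump in the sag, compensate through the bottom of the sag instead, using Lemma~\ref{lm: perturbations top and bot} with $\underline{\bm u}=\bm\nu(\underline{s})$ and $\overline{\bm u}=\bm\nu(\overline{s})$. A flat segment plus a jump at $u^*$ leaves room for $x_i\pm\Delta_i$ to stay monotone only if $\Delta_i$ is constant on $[u^*,\overline{u}]$ and changes sign across $u^*$. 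Lemma~\ref{lm: perturbations top and bot} then reduces feasibility of $\bm x\pm\varepsilon\bm\Delta$ to Border constraints on the box $\bm\nu(\underline{s})\le\bm u\le\bm\nu(\overline{s})$.

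The main obstacle is controlling these constraints near the top corner. There $B_{\bm x}$ approaches $1$, because $G(\overline{s})=1$, so the slack argument of Step 2 of Lemma~\ref{lm: piecewisec and jumps} does not apply directly. The first thing to try is to choose $\Delta_i$ so that $\int_{u_i}^{\overline{u}}\Delta_i=0$ for all $u_i$ near $\overline{u}$. This could be done by supporting the perturbation strictly below $\overline{u}$ and exploiting that $G<1$ on the sag. One then bounds $\max B$ over $u_i$ in the support of $\Delta_i$ by some $\omega<1$, via the uniqueness argument of Step 1 of Lemma~\ref{lm: piecewisec and jumps}.

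The remaining issue is the two neighboring intervals. Continuity at $\overline{u}$ rules out the right-end block of \eqref{eq: kappa jumps}, since that block needs the jump of $x_i$ at its right end to absorb the decrease. The fallback is to use the linear growth of $ux_i(u)$ on $[u^*,\overline{u}]$ instead. This should show directly that $B$ along coordinate $i$ near $\overline{u}$ has a strictly positive one-sided derivative, contradicting $\bm\nu(\overline{s})$ being a maximizer with binding value $1$ reached from inside the sag. Whichever route succeeds yields a nontrivial feasible perturbation, contradicting extremality, so $\nu_i(\overline{s})\in D_i$.
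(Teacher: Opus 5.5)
There is a genuine gap. The proposal never settles on an argument that closes, and the perturbation route you lead with does not work as set up. Suppose $x_i\equiv c$ on $[u^*,\overline u]$ and $x_i$ is continuous at $\overline u$. Then any $\Delta_i$ with $x_i\pm\varepsilon\Delta_i$ both nondecreasing must be constant on $[u^*,\overline u)$ and cannot jump at $\overline u$. If $\Delta_i$ also vanishes above $\overline u$, as staying in the box requires, it therefore vanishes on the entire last flat piece. So no admissible perturbation ever touches the top of the sag.

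There are further problems with this route. Lemma~\ref{lm: perturbations top and bot} with $\underline{\bm u}=\bm\nu(\underline s)$ needs $B_{\bm x}(\bm\nu(\underline s))=1$ and $\prod_j\nu_j(\underline s)>0$. Here only $G(\overline s)=1$ is assumed, and the sag may start at $\overline\psi(0)$ with $G<1$ there. Your first step (first-order conditions, $\nu_i(s)$ entering $(u^*,\overline u)$) is never used. Your last sentence also misattributes the contradiction: the fallback produces no feasible perturbation at all. It contradicts feasibility of $\bm x$, or the sag, directly.

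The missing idea, which is exactly the paper's proof, is to move only coordinate $i$ along the product constraint. Put $u_j(s)=\nu_j(\overline s)$ for $j\neq i$ and $u_i(s)=\overline u\,s^n/\overline s^{\,n}$. Then $\prod_j u_j(s)=s^n$, so $\underline G(s):=B(\bm u(s))\le G(s)$, with $\underline G(\overline s)=G(\overline s)=1$.

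Since $x_i\equiv c$ just left of $\overline u$, $\underline G$ is affine in $s^n$ there with slope $1-c/P$, where $P=\prod_{j\neq i}\nu_j(\overline s)$. Because $\underline G\le G<1$ on the sag, this slope must be strictly positive. This is where Lemma~\ref{lm: piecewisec and jumps}, and hence extremality, enters: continuity alone would only give a nonnegative slope.

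If $\overline u<1$, continuity of $x_i$ at $\overline u$ makes $\underline G$ differentiable at $\overline s$ with the same positive slope. Hence $\underline G(s)>1$ for $s$ slightly above $\overline s$, contradicting feasibility.

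If $\overline u=1$, there is no room to the right, and a separate observation is needed. Here $c=x_i(1)=1$ and $P\le 1$, so the slope $1-1/P\le 0$ already contradicts $\underline G<1$ just left of $\overline s$. Your sketch does not address this boundary case.
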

\begin{proof}
Fix an index $i$ with $\nu_i(\underline{s})<\nu_i(\overline{s})$, and suppose, towards a contradiction, that $x_i$ is continuous at $\nu_i(\overline{s})$.
We distinguish between two cases.

\textit{Case $\nu_i(\overline{s})<1$.}
Clearly, we have $\overline{s}<1$.
For $s$ near $\underline{s}$, define $\bm{u}(s)$ by keeping all coordinates except $i$ fixed at their values on the principal curve at $\overline{s}$, that is $u_j(s) = \nu_j(\overline{s})$ for $j\neq i$, and adjusting coordinate $i$ to satisfy the product constraint:
\[
u_{i}(s)=\frac{\nu_i(\overline{s})}{\overline{s}^{n}}s^{n}.
\]
By construction, $\prod_{j=1}^{n}u_{j}(s)=s^{n}$, so $\bm{u}(s)$ is feasible in the optimization problem that defines $G(s)$ in Lemma \ref{lm: aux G}.
Hence, $G(s) \geq \underline{G}(s)$, where $\underline{G}(s)$ is the objective evaluated at $\bm{u}(s)$:
\begin{align}
\underline{G}(s)
=s^{n}+\sum_{j=1}^{n}\int_{u_{j}(s)}^1x_{j}(u)du 
=s^{n}+\int_{\frac{\nu_i(\overline{s})}{\overline{s}^{n}}s^{n}}^{1}x_{i}(u)du+\sum_{j \neq i}\int_{\nu_j(\overline{s})}^1x_{j}(u)du.
\label{eq: G lower bound}
\end{align}
On the one hand, $\underline{G}(\overline{s})=G(\overline{s})=1$. 
On the other hand, since $(\underline{s},\overline{s})$ is a sag, we have $\underline{G}(s) \leq G(s) < 1$ for $s < \overline{s}$.

By Lemma \ref{lm: piecewisec and jumps}, the function $x_i$ is piecewise-constant on $[\nu_i(\underline{s}),\nu_i(\overline{s})]$ with finitely many jumps. 
Since, $x_i$ is assumed to be continuous at the right end point, it is constant on a left neighborhood of this point.
Consequently, $\underline{G}$ is an affine function of $s^n$ for  $s<\overline{s}$.
But $\underline{G}(s)<\underline{G}(\overline{s})$ for $s < \overline{s}$, meaning that the (left) slope of this affine function is strictly positive, therefore $\underline{G}(s)>1$ for $s>\overline{s}$.
This implies that $G(s)$ strictly exceeds $1$ contradicting $\underline{G} \leq G \leq 1$ on a  right neighborhood of $\overline{s}$, as required by feasibility.
Thus, we conclude that $x_i$ cannot be continuous at $\nu_i(\overline{s})$.    

\textit{Case $\nu_i(\overline{s})=1$.}
The argument is virtually identical to the previous case, with the only difference that now $x_i(1)=1$ and continuity at $\nu_i(\overline{s})=1$ implies $x_i(u)=1$ on a left neighborhood of $1$.
Hence, for $s<\overline{s}$ close enough to $\overline{s}$, the same construction of $\bm{u}(s)$ and the same lower bound $\underline{G}$ in \eqref{eq: G lower bound} apply, and $\underline{G}$ is affine in $s^{n}$ on a left neighborhood of $\overline{s}$.
Its (left) slope equals
\[
\frac{\partial}{\partial s^{n}}\underline{G}(\overline{s}-)
=
1-\frac{\nu_i(\overline{s})}{\overline{s}^{n}}x_i(\nu_i(\overline{s}))
=
1-\frac{1}{\overline{s}^{n}}\le 0.
\]
Since $\underline{G}(\overline{s})=G(\overline{s})=1$, linearity and the nonpositive slope imply that
$\underline{G}(s) \geq 1$ for all $s<\overline{s}$ close enough to $\overline{s}$ contradicting that $(\underline{s},\overline{s})$ is a sag.
Therefore, $x_i$ must be discontinuous at $\nu_i(\overline{s})=1$.
\end{proof}

\begin{lemma}[Discontinuity at the bottom of a sag]
\label{lm: jumps endpoints bot}
Let $\bm{x}$ be an extremal reduced form. 
Consider a sag $(\underline{s},\overline{s})$ satisfying $G(\underline{s})=1$ with $\underline{s}>0$.  
Then,
\[
\nu_i(\underline{s}) \in D_i\big(\nu_i(\underline{s}),\nu_i(\overline{s})\big) \;\;\forall i,
\]
i.e., $x_{i}$ is discontinuous at $\nu_i(\underline{s})$ whenever $\nu_i(\underline{s})<\nu_i(\overline{s})$.
\end{lemma}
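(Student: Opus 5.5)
The plan is to mirror the proof of Lemma \ref{lm: jumps endpoints top}, now working at the lower endpoint of the sag. Fix $i$ with $\nu_i(\underline{s})<\nu_i(\overline{s})$. Suppose, towards a contradiction, that $x_i$ is continuous at $\nu_i(\underline{s})$. Since $\underline{s}>0$ and $G(\underline{s})=1$, while $G<1$ on $[0,\overline{\psi}(0))$, we have $\underline{s}\geq\overline{\psi}(0)$. Hence $B(\bm\nu(\underline{s}))=G(\underline{s})=1$ and $\prod_j\nu_j(\underline{s})=\underline{s}^n$ with every $\nu_j(\underline{s})>0$.

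By Lemma \ref{lm: piecewisec and jumps}, $x_i$ is piecewise constant with finitely many jumps on $[\nu_i(\underline{s}),\nu_i(\overline{s})]$. Continuity at the left endpoint therefore makes $x_i$ equal to the constant $c=x_i(\nu_i(\underline{s}))$ on a right neighborhood of $\nu_i(\underline{s})$; this is the only fact about $x_i$ needed there. For $s$ slightly above $\underline{s}$, I will define the test point $\bm u(s)$ by $u_j(s)=\nu_j(\underline{s})$ for $j\neq i$ and
\[
u_i(s)=\frac{\nu_i(\underline{s})}{\underline{s}^n}s^n .
\]
This point satisfies the product constraint in \eqref{eq: G defined}, so $G(s)\geq\underline{G}(s)$ with
\[
\underline{G}(s)=s^n+\int_{u_i(s)}^1x_i(u)du+\sum_{j\neq i}\int_{\nu_j(\underline{s})}^1x_j(u)du .
\]
On a right neighborhood of $\underline{s}$, $\underline{G}$ is affine in $s^n$ with slope $1-c\,\nu_i(\underline{s})/\underline{s}^n$, and $\underline{G}(\underline{s})=1$.

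The sign of this slope gives the contradiction. If the slope is nonnegative, then $\underline{G}(s)\geq1$ for $s$ slightly above $\underline{s}$, so $G(s)\geq1$ inside the sag, which is impossible. If the slope is negative, I extend the affine piece to the left: on a left neighborhood of $\nu_i(\underline{s})$, $x_i$ is at most $c$ by monotonicity. Then $\int_{u_i(s)}^1x_i\,du$ is at least the affine extrapolation, because $x_i\leq c$ below $\nu_i(\underline{s})$ and $\int_{u_i(s)}^{\nu_i(\underline{s})}x_i\,du$ is a subtracted area. I need to check this direction carefully; it may reverse, in which case I would instead use that $x_i\geq c$ there up to the jump. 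So the negative-slope case should give $\underline{G}(s)>1$ for $s<\underline{s}$ close to $\underline{s}$, contradicting feasibility ($G\leq1$).

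To make this step rigorous I would avoid the sign bookkeeping. I would use the first-order characterization from Lemma \ref{lm: aux G}: at $s=\underline{s}$ the optimizer satisfies
\[
\lambda\,\underline{s}^n\in\bigl[x_i(\nu_i(\underline{s})-)\nu_i(\underline{s}),\,x_i(\nu_i(\underline{s}))\nu_i(\underline{s})\bigr],
\]
and continuity collapses this interval to the single point $c\,\nu_i(\underline{s})=\lambda\underline{s}^n$. The slope $1-\lambda$ of the affine test function then equals the envelope derivative of $G$ at $\underline{s}$ (in the variable $s^n$). The feasibility constraint $G\leq1$ with $G(\underline{s})=1$ at an interior point forces $\lambda=1$ whenever $\underline{s}<1$; note $\underline{s}<\overline{s}\leq1$. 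So the slope is zero and $\underline{G}\equiv1$ on a right neighborhood of $\underline{s}$, contradicting the sag. The main obstacle is exactly this step: verifying that the slope must vanish, i.e.\ that $G$ attains its maximum $1$ at $\underline{s}$ with a two-sided argument. That in turn requires continuity of $x_i$ from both sides, which is precisely the assumption made for contradiction.
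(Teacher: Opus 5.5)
Your final argument is correct and is essentially the paper's proof. Both use the test point anchored at $\bm\nu(\underline{s})$, the lower bound $\underline{G}$ that is affine in $s^n$ just right of $\underline{s}$, and continuity of $x_i$ at $\nu_i(\underline{s})$ to make the left and right slopes of $\underline{G}$ coincide. The paper uses the sag to make that slope strictly negative and then gets $\underline{G}>1$ just left of $\underline{s}$, contradicting feasibility. You instead use feasibility on both sides to make the slope zero and then contradict the sag on the right.

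Two simplifications. You can drop the extrapolation paragraph: its inequality does go the wrong way, since $x_i\le c$ left of $\nu_i(\underline{s})$, and the fallback $x_i\ge c$ there is false. You also do not need an envelope theorem for $G$. $\underline{G}$ is itself differentiable at $\underline{s}$ by continuity of $x_i$, and it satisfies $\underline{G}\le G\le 1=\underline{G}(\underline{s})$ on both sides, so its derivative there vanishes directly.
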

\begin{proof}
Let $i$ be such that $\nu_i(\underline{s})<\nu_i(\overline{s})$.
Since $\underline{s}>0$, it must be the case that $\nu_i(\underline{s})>0$.
The argument is identical to the first case in the proof of Lemma \ref{lm: jumps endpoints top}.
Specifically, suppose, for contradiction, that $x_i$ is continuous at $\nu_i(\underline{s})$.
Then, the lower bound $\underline{G}$ is well-defined in a neighborhood of $\underline{s}$, and we have
\[
\frac{\partial}{\partial s^{n}}\underline{G}(\underline{s}-) = \frac{\partial}{\partial s^{n}}\underline{G}(\underline{s}+) 
= 1-\frac{\nu_i(\underline{s})}{\underline{s}^{n}}x_{i}(\nu_i(\underline{s}))<0,
\]
where the equality follows from the continuity of $x_i$ at $\nu_i(\underline{s})$, and the inequality follows from the fact that $x_i$ is constant in a right neighborhood of $\nu_i(\underline{s})$, together with the assumption that $\underline{G} \leq G<1$ on $(\underline{s},\overline{s})$.
But this implies that $G\geq \underline{G}>1$ on a left neighborhood of $\underline{s}$, contradicting feasibility.
Therefore, $x_i$ is discontinuous at $\nu_i(\underline{s})$.
\end{proof}

\begin{lemma}[Non-triviality along a sag]
\label{lm: persons gap}
Let $\bm{x}$ be a feasible reduced form. 
Consider a sag $(\underline{s},\overline{s})$.
Then, we have (1) $\nu_i(\underline{s}) < \nu_i(\overline{s})$ for at least one index $i$; and (2) it is true for at least two such indices whenever $G$ equals $1$ at the endpoints of the sag.
\end{lemma}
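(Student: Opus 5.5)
Both parts will come from the product identity along the principal curve, $\prod_{j}\nu_j(s)=\max\{s,\overline{\psi}(0)\}^n$ (Lemma \ref{lm: aux G}), together with the observation that if the curve has the same value at the two endpoints of the sag, then so does $B$. Recall that a sag is an open interval $(\underline{s},\overline{s})\subseteq[\overline{\psi}(0),1]$ of positive length on which $G<1$.

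\emph{Part (1).} Since $\underline{s}\ge\overline{\psi}(0)$, Lemma \ref{lm: aux G} and the proof of Theorem \ref{th: feasibility} give $\prod_j\nu_j(s)=s^n$ for every $s$ in the closed sag. Because $\underline{s}<\overline{s}$, we get $\prod_j\nu_j(\underline{s})<\prod_j\nu_j(\overline{s})$. The curve is weakly increasing coordinatewise, so at least one coordinate must increase strictly. (Part 2 of Lemma \ref{lm: aux G} states this directly.)

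\emph{Part (2).} Suppose, towards a contradiction, that $G(\underline{s})=G(\overline{s})=1$ and only one coordinate $i$ moves, so that $\nu_j(s)=\nu_j(\underline{s})=:c_j$ for all $j\neq i$ and all $s$ in the sag.
\begin{enumerate}
\item If $\underline{s}=0$, then $\overline{\psi}(0)=0$, so $G(s)\to$ the value with some $u_j=0$, and $\nu_i(0)$ could be $0$. The endpoint case needs care, but $G(0)=1$ with $\underline{s}\ge\overline{\psi}(0)$ still works in the formula below. So assume $\underline{s}>0$; then $c_j>0$ and $\nu_i(\underline{s})>0$.
\item By monotonicity, $\nu_j(s)=c_j$ on the whole interval. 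The product identity then gives $\nu_i(s)=s^n/C$ with $C=\prod_{j\neq i}c_j$, and this is a bijection from $[\underline{s},\overline{s}]$ onto $[\nu_i(\underline{s}),\nu_i(\overline{s})]$.
\item Using \eqref{eq: B and G relationship}, define
\[
\phi(s):=G(s)=B(\bm\nu(s))=s^n+\int_{s^n/C}^1 x_i(u)\,du+K,
\]
where $K$ collects the fixed integrals of the other bidders.
\item Substituting $u=s^n/C$ gives $\phi=Cu+\int_u^1x_i+K$ as a function of $u\in[\nu_i(\underline{s}),\nu_i(\overline{s})]$. Its derivative in $u$ is $C-x_i(u)$, which is weakly decreasing in $u$ because $x_i$ is increasing. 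Hence $\phi$ is \emph{concave} in $u$.
\item A concave function equal to $1$ at both endpoints is at least $1$ on the whole interval. This contradicts $G<1$ on the sag, so at least two coordinates must move.
\end{enumerate}

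The main obstacle is handling the endpoint $\underline{s}=0$ (or $\overline{\psi}(0)=0$), where the reparametrization $u=s^n/C$ requires $C>0$. If some $c_j=0$, then $s^n=\prod_j\nu_j(s)=0$ on the whole interval, which is impossible for $s>0$. Hence $C>0$ automatically whenever the sag has positive length, and the argument above goes through.
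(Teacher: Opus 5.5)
Your proof is correct and follows the paper's route: part (1) is the strict-monotonicity clause of Lemma~\ref{lm: aux G}, and part (2) rests on the same observation as the paper, namely that when only coordinate $i$ moves, $G$ is concave in $s^n$ (equivalently, in your variable $u=s^n/C$), so $G=1$ at both endpoints forces $G\ge 1$ on the sag. Your closing remark that $C>0$, because $\prod_j\nu_j(\overline{s})=\overline{s}^n>0$, already covers the case $\underline{s}=0$, so your tentative reduction to $\underline{s}>0$ in the first item is unnecessary.
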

\begin{proof}
The first part of this claim has been established in Lemma \ref{lm: aux G}, and so we focus on the second part.

Suppose, for contradiction, that $\nu_i(\underline{s})<\nu_i(\overline{s})$ but $\nu_j(\underline{s})=\nu_j(\overline{s})$
for all $j \neq i$. 
For each such $j \neq i$, since $u_j^*$ is increasing,  $\nu_j$ equals to $\nu_j(\overline{s})$ on $[\underline{s},\overline{s}]$.
Therefore, for all $s \in[\underline{s},\overline{s}]$, we have
\[
\nu_i(s) = \frac{1}{\prod_{j \neq i}\nu_j(\overline s)} s^n=\frac{\nu_i(\overline s)}{\overline{s}^{n}}s^{n}.
\]    
It follows that $G(s)$ equals $\underline{G}(s)$ defined in \eqref{eq: G lower bound} for all $s \in [\underline{s},\overline{s}]$.
Clearly, $G$ is a concave function of $s^{n}$ on the respective interval.
Since $G(\underline{s})=G(\overline{s})=1$, it must be that $G=1$ on $(\underline{s},\overline{s})$ due to concavity.
This contradicts the premise of the claim, completing the proof by contradiction.
\end{proof}

\subsubsection{Proofs of Theorems \ref{th: extremality} and \ref{th: scores}: Statement $I$}

\begin{proof}
Let $\bm{x}$ be an extremal reduced form. 
As explained above, there are two mutually exclusive cases, namely I and II. 
We now analyze each in detail starting with the latter.

\textit{Case 2 with $\overline{s}>\overline{\psi}(0)$.}
Assume $G<1$ on $[\overline{\psi}(0),\overline{s})$ and $G=1$ on $[\overline{s},1]$ for some $\overline{s}>\overline{\psi}(0)$.
As explained in Lemma \ref{lm: aux G}, we also know that $G$ is strictly increasing on $[0,\overline{\psi}(0))$, and thus $G<1$ on that interval. 

To simplify notation, set $\overline{\bm{u}}=\bm\nu(\overline{s})$ and $\underline{\bm{u}}=\bm\nu(\overline{\psi}(0))$.
Since $\overline{s}>0$, we have $\prod_{i=1}^{n}\overline{u}_{i}>0$.
Lemmas \ref{lm: piecewisec and jumps} and \ref{lm: jumps endpoints top} jointly imply that for each $i$ with $\underline{u}_{i}<\overline{u}_{i}$, the function $x_{i}$ is piecewise-constant on this interval with finitely many jumps.
Furthermore, the upper endpoint is a discontinuity point, i.e., $\overline{u}_{i} \in D_{i}(\underline{u}_{i},\overline{u}_{i})$.
Since $x_i = 0$ on $[0,\underline{u}_i)$, by construction, it follows that the discontinuity points up to $\overline{u}_i$ are the same whether one starts at $\underline{u}_i$ or $0$:
\[
D_{i}(\underline{u}_{i},\overline{u}_{i}) = D_{i}(0,\overline{u}_{i}).
\]

\textit{Step 1.}
To begin, we show that there exists an index $i$ for which the discontinuity set $D_{i}(0,\overline{u}_{i})$ contains at least two elements. 
Suppose, towards a contradiction, that 
\[
D_{i}(0,\overline{u}_{i}) = \{\overline{u}_{i}\} \;\;\forall i.
\]
This means that $x_i = 0 $ on $[0,\overline{u}_i)$.
Since, by the definition of the transformation $\psi_i$, the function $x_i(u) > 0$ if and only if $u \geq \psi_i(0)$, the upper endpoint $\overline{u}_i$ is not lower than $\psi_i(0)$ for all $i$.
As a result,
\[
\overline s = \sqrt[n]{\prod_{i=1}^n \overline{u}_i}\leq \overline{\psi}(0),
\]
contradicting the assumption that $\overline s > \overline{\psi}(0)$.

\textit{Step 2.}
Next, fix $i$ such that $D_{i}(0,\overline{u}_{i})$ contains at least two elements, which exists due to the argument above. 
To rule out the second case with $\overline{s}>\overline{\psi}(0)$, we construct a perturbation $\bm \Delta: [0,1] \to \mathbf{R}^n$ so that $\bm x \pm \varepsilon \bm \Delta$ is a feasible reduced form provided $\varepsilon > 0$ is small enough.
As in the proof of Lemma \ref{lm: piecewisec and jumps}, the existence of such $\bm \Delta$ contradicts extremality.

To this end, let $\underline{u}^*$, $\overline{u}^*$ be the last two points in $D_{i}(0,\overline{u}_{i})$, where $\overline{u}^* = \overline{u})i$ as explained above.
Denote the value of 
\[
\max_{\bm u \in [0,1]^n} B_x(\bm u) \;\;\text{s.t.}\;\;\bm u \in \prod_{j=1}^n D_j(0,\overline{u}_j),\;\; \bm u \neq \overline{\bm u}
\]
by $\omega$. Due to finiteness of bidders' discontinuity sets and $G<1$ on $[0,\overline{s})$, $\omega$ is well-defined and strictly below $1$.

First of all, for each $j \neq i$, set $\Delta_j = 0$.
For coordinate $i$, set
\[
\Delta_i(u) = \kappa \cdot \frac{\mathbf{1}_{[\underline{u}^*,\overline{u}^*)}(u)}{\overline{u}^*-\underline{u}^*}
\]
for some $\kappa>0$ small enough to ensure $x_{i}\pm\Delta_i\in \mathscr{X}$.
For example, the choice analogous to \eqref{eq: kappa jumps} will do.

Next, pick any choose $\varepsilon \in (0,1)$ such that $\varepsilon \leq \overline{\varepsilon}$, where
\[
\overline{\varepsilon} = \frac{1-\omega}{\max_{u_{i}\in[\underline{u}^*,\overline{u}^*]}\left|\int_{u_{i}}^{\overline{u}^*}\Delta_i(u)du\right|}>0.
\]
Since $\bm x$ is piecewise constant on $[0,\overline{\bm u}]$, the mapping 
\[
u_j \in [0,\overline{u}_{j}] \mapsto B_{\bm x \pm \varepsilon \bm \Delta}(\bm{u})
\]
is piecewise linear holding other coordinates $(u_1,...,u_{j-1},u_{j+1},...,u_n)$ fixed in $\prod_{k \neq j}[0,\overline{u}_k]$
As a result,
\eqref{eq: Border reduced top} is equivalent to
\begin{align}
B_{\bm x \pm \varepsilon \bm \Delta}(\bm{u}) \leq 1, \;\;\forall\bm{u}\in\prod_{j=1}^{n}D_{j}(0,\overline{u}_{j}).
\label{eq: Border A}
\end{align}
By construction, our perturbation $\bm \Delta $ verifies \eqref{eq: Border A} at all jump points including $\overline{ \bm u}$ (due to the definition of $\bm \Delta$), and hence \eqref{eq: Border reduced top} holds as well.
But then Lemma \ref{lm: perturbations top} implies that $\bm x \pm \varepsilon \bm \Delta$ is feasible, which contradicts extremality of $\bm{x}$. 

To sum up, we have shown that Case 2 with $\overline{s}>\overline{\psi}(0)$ is inconsistent with extremality.
We now turn to Case 1, where we consider a sag $(\underline{s},\overline s)$ so that $G$ equals $1$ at its endpoints.
As before, we denote $\overline{\bm{u}}=\bm\nu(\overline{s})$ and $\underline{\bm{u}}=\bm\nu(\underline{s})$, which satisfy $\prod_{i=1}^{n}\overline{u}_{i}>0$ due to $\overline{s}>0$ and $\prod_{i=1}^{n}\underline{u}_{i}>0$ whenever $\underline{s}>0$. 
Below, we study $\underline{s} > 0$ and $\underline{s} = 0$ separately Lemma \ref{lm: jumps endpoints bot} does not cover the latter.

\textit{Case 1 with $\underline{s}>0$.}
By Lemmas \ref{lm: piecewisec and jumps}, \ref{lm: jumps endpoints top}, \ref{lm: jumps endpoints bot}, for each $i$ with $\underline{u}_{i}<\overline{u}_{i}$, the function $x_{i}$ is piecewise-constant on this interval with finitely many jumps including $\underline{u}_{i}$, $\overline{u}_{i}$.

By Lemma \ref{lm: persons gap}, we can find two distinct coordinates $i \neq j$ such that $\underline{u}_i<\overline{u}_i$ and $\underline{u}_j<\overline{u}_j$.
Following the logic of Step 2, we shall construct a perturbation to contradict that extremality of $\bm x$ assumed in the premise.

Let $\underline{u}_i^*$, $\overline{u}_i^*$ and $\underline{u}_j^*$, $\overline{u}_j^*$ be the last two points $D_{i}(\underline{u}_{i},\overline{u}_{i})$ and $D_{j}(\underline{u}_{j},\overline{u}_{j})$, respectively, where 
$\overline{u}_i^* = \overline{u}_i$ and $\overline{u}_j^* = \overline{u}_j$.
As before, denote the value of 
\[
\max_{\bm{u}\in[0,1]^{n}}B_{\bm{x}}(\bm{u})\;\;\text{s.t.}\;\;\bm u \in \prod_{k=1}^n D_k(\underline{u}_k,\overline{u}_k),\;\bm u \neq {\underline{\bm u}}, {\overline{\bm u}}.
\]
by $\omega$, which is again well-defined and strictly less than $1$ due to $G<1$ on $(\underline{s},\overline{s})$.

Now,  define a perturbation $\bm{\Delta}:[0,1]\to\mathbf{R}^{n}$ by setting $\Delta_k = 0$ for all indices $k$ other than $i,j$.
For coordinates $i$ and $j$, set
\[
\Delta_i(u) = \kappa \cdot \frac{\mathbf{1}_{[\underline{u}_i^*,\overline{u}_i^*)(u)}}{\overline{u}_i^*-\underline{u}_i^*},\;\;\Delta_j(u) = -\kappa \cdot \frac{\mathbf{1}_{[\underline{u}_j^*,\overline{u}_j^*)(u)}}{\overline{u}_j^*-\underline{u}_j^*}
\]
for some $\kappa>0$ small enough to ensure 
$x_{i}\pm\Delta_i \in \mathscr{X}$ and $x_{j}\pm\Delta_j\in \mathscr{X}$.
For example, the minimum of two values defined analogously to \eqref{eq: kappa jumps} for coordinates $i$ amd $j$ will suffice.

Note that
\[
\int_{\underline{u}_{i}}^{\overline{u}_{i}}\Delta_{i}(u)du+\int_{\underline{u}_{j}}^{\overline{u}_{j}}\Delta_{j}(u)du=0,
\]
which implies
\begin{align}
B_{\bm x \pm \varepsilon \bm \Delta}(\bm{u})\leq 1\;\;\forall\bm{u}\in\prod_{k=1}^{n}D_{k}(\underline{u}_{k},\overline{u}_{k}),
\label{eq: Border B 2}
\end{align}
for all $\varepsilon\in(0,1)$ such that $\varepsilon \leq \overline{\varepsilon}$, where
\[
\overline{\varepsilon} = \frac{1}{2}\frac{1-\omega}{\max\left\{
\max_{u_{i}\in[\underline{u}_i^*,\overline{u}_i^*]}\left|\int_{u_{i}}^{\overline{u}_i^*}\Delta_i(u)du\right|,
\max_{u_{j}\in[\underline{u}_j^*,\overline{u}_j^*]}\left|\int_{u_{j}}^{\overline{u}_j^*}\Delta_j(u)du\right|\right\}}>0.
\]
Since each function $x_k$ is piecewise constant on $[\underline{u}_{k},\overline{u}_{k}]$, the argument in Case 2 implies that (\ref{eq: Border B 2}) is equivalent to (\ref{eq: Border reduced top and bot}). 
We conclude that $\bm x \pm \varepsilon \bm \Delta$ is feasible due to Lemma \ref{lm: perturbations top and bot}, which contradicts extremality of $\bm{x}$. 

\textit{Case 1 with $\underline{s}=0$.}
Since $\underline{s} = 0$, we have $G(0) = 1$ which means $\overline{\psi}(0) = 0$ because $G<1$ on $[0,\overline{\psi}(0))$ due to Lemma \ref{lm: aux G} and the proof of Theorem \ref{th: feasibility}.
Since $x_i = 0$ on $[0,\underline{u}_i)$, the discontinuity points up to $\overline{u}_i$ are the same whether you start at $\underline{u}_i$ or $0$, exactly as happens in Case 2.
By the first step in the proof of this case, the discontinuity set $D_{i}(0,\overline{u}_{i})$ contains at least two points for some index $i$.
Furthermore, by the second step, there cannot be two such indices, hence we have
\[
D_{j}(0,\overline{u}_{j}) = \{\overline{u}_j\} \;\;\forall j \neq i.
\]

We proceed in two logical steps.
First, we argue that $D_{i}(0,\overline{u}_{i})$ must contain at least three elements, and then we complete the proof by ruling out this possibility as well.

\textit{Step 1.}
By the way of contradiction, suppose that $D_i(0,\overline{u}_i)$ contains exactly two discontinuity points, where one of them is necessarily $\overline{u}_i$ due to Lemma \ref{lm: jumps endpoints top}.
Let $\underline{u}_i^* < \overline{u}_i$ be the other discontinuity of $x_i$.

Suppose first that $\underline{u}_i^*>0$ is strictly positive. 
Then, for each sufficiently small $s > 0$, i.e.,  any value strictly less than $\sqrt[n]{\underline{u}_i^* \prod_{j \neq i}\overline{u}_j}$ will do, we have
\[
G(s) = s^n +  \sum_{j=1}^n \int_0^1 x_j(u)du
\]
because $x_j = 0$ on $[0,\overline{u}_j)$ for all $j$ other than $i$.
But this implies that 
\[
\sum_{j=1}^n \int_0^1 x_j(u)du < 1
\]
contradicting the fact that $G(\underline{s}) = 1$, since $\overline{\psi}(0) = \underline{s} = 0$.

It follows that we must have $\underline{u}_i^* = 0$, meaning that $x_i$ jumps at $u = 0$ and $u = \overline{u}_i$. 
Let $\chi$ denote the size of the first jump, that is the constant value of $x_i$ on the whole interval $[0,\overline{u}_i)$.
Recollect that $G$ equals $1$ at the endpoints of the sag $(\underline{s},\overline{s})$, which can be expressed as
\begin{align}
\label{eq: extra conditions s=0}
\overline{u}_i \cdot \chi + \sum_{j=1}^n \int_{\overline{u}_j}^1 x_j(u)du = \overline{u}_i \cdot \prod_{j \neq i}\overline{u}_j + \sum_{j=1}^n \int_{\overline{u}_j}^1 x_j(u)du = 1.
\end{align}
One immediate implication of \eqref{eq: extra conditions s=0} is that the jump size of $x_i$ satisfies
\[
\chi = \prod_{j \neq i}\overline{u}_j.
\]

Take any $s \in (0,\overline{s})$ and consider $\bm u$ that equals $\overline {\bm u}$ for all coordinates but
\[
u_i  = \frac{\overline{u}_i}{\overline{s}^n} s^n.
\]
Evaluate Border's constraint at this point to obtain
\[
B(\bm u) = s^n + (\overline{u}_i-u_i)\cdot \chi +  \sum_{j=1}^n \int_{\overline{u}_j}^1 x_j(u)du = 1,
\]
where the last equality follows from the definition of $\chi$, $u_i$, and \eqref{eq: extra conditions s=0}.
Since $\bm u$ is feasible in the problem defining $G(s)$ in Lemma \ref{lm: aux G}, we must have $G(s) = 1$, which contradicts the assumption that $G<1$ on $(\underline{s},\overline{s})$ as $\underline{s}=0$.

\textit{Step 2.}
We have established that there are at least three elements in $D_i(0,\overline{u}_i)$.
To conclude the proof, we shall again invoke a perturbational argument that has been extensively used earlier in this proof as well as to establish the auxiliary lemmas.

Let $\underline{u}_i^* < \overline{u}_i^* < \overline{u}_i$ be the last three such points, where the fact that $x_i$ jumps at the right endpoint is due to Lemma \ref{lm: jumps endpoints top}.
Denote the value in the following program 
\[
\max_{\bm{u}\in[0,1]^{n}}B_{\bm x}(\bm{u})\;\;\text{s.t.}\;\;\bm u \in \prod_{j=1}^n D_j(0,\overline{u}_j),\;\bm u \neq {\underline{\bm u}}, {\overline{\bm u}}
\]
by $\omega$, which is well-defined because each discontinuity set is finite, and is is strictly less than one because $G<1$ on the interior of the sag.

Consider a perturbation  $\bm{\Delta}:[0,1]\to\mathbf{R}^{n}$ used in the proof of the second part of Lemma \ref{lm: piecewisec and jumps}, that is $\Delta_j = 0$ for all $j\neq i$, whereas for coordinate $i$, we have
\[
\Delta_i(u) = \kappa \cdot\left(\frac{\mathbf{1}_{[\underline{u}_i^*,\overline{u}_i^*)(u)}}{\overline{u}_i^*-\underline{u}_i^*} - \frac{\mathbf{1}_{[\overline{u}_i^*,\overline{u}_i)(u)}}{\overline{u}_i-\overline{u}_i^*}\right)
\]
for some $\kappa>0$ small enough to ensure $\Delta_i\neq 0$ and 
$x_{i}\pm\Delta_i\in \mathscr{X}$.
Again, the choice analogous to \eqref{eq: kappa jumps} will do.

By construction,
\[
\int_{\underline{u}_i^*}^{\overline{u}_i}\Delta_i(u)du=0,
\]
Hence the perturbation preserves the relevant integral constraints, and Border's constraint at any grid point in $\prod_{j=1}^{n}D_{j}\left(0,\overline{u}_{j}\right)$.
In particular, we obtain
\begin{align}
B_{\bm x \pm \varepsilon \bm \Delta}(\bm{u})\leq 1\;\;\forall\bm{u}\in\prod_{j=1}^{n}D_{j}\left(0,\overline{u}_{j}\right)
\label{eq: Border B 1}
\end{align}
for every $\varepsilon\in(0,1)$ such that $\varepsilon \leq \overline{\varepsilon}$, where
\[
\overline{\varepsilon} = \frac{1-\omega}{\max_{u_{i}\in[\underline{u}_i^*,\overline{u}_i]}\left|\int_{u_{i}}^{\overline{u}_i}\Delta_i(u)du\right|}>0.
\]
Similarly to Case 2, since each function $x_j$ is piecewise constant on
$[0,\overline{u}_{j}]$, (\ref{eq: Border B 1}) is equivalent to
(\ref{eq: Border reduced top}). 
Thus, by Lemma \ref{lm: perturbations top}, both $\bm {x} \pm \varepsilon \bm \Delta$ are feasible, contradicting extremality of $\bm{x}$. 

\textit{Putting all pieces together.}
To sum up, we have established that if $\bm{x}$ is an extremal reduced form, then we cannot have any sag $(\underline{s},\overline{s})$ with $\overline{s}>\overline{\psi}(0)$.
But this means that
\begin{align}
\label{eq: G identity}
G(s)=1 \;\;\forall s \in [\overline{\psi}(0),1],
\end{align}
which as explained in the proof of Lemma \ref{lm: aux G} is equivalent to Border's constraint being identically $1$ along the whole principal curve.

As explained in Footnote \ref{ftn: feasibility}, the $\psi$-transform can be inverted to obtain the underlying CDF in $\mathscr{X}$.
It is easy to see that $\overline{x} = {\overline\psi}^{-1}/u$ is the CDF corresponding to $\overline{\psi}$.
Using $\overline{x}$, we can unpack \eqref{eq: G identity} for any $s \geq \overline{\psi}(0)$ as 
\begin{align*}
\frac{1 - s^n}{n} 
& = \int_{\overline{\psi}^{-1}(s)}^1 \iota d\ln \overline{\psi}(\iota) \\
& = \int_{\overline{\psi}(\overline{\psi}^{-1})(s)}^1 \overline{x}(u)du\\
& = \int_{s}^1 \overline{x}(u)du,
\end{align*}
where the first line uses \eqref{eq: border curve} in Theorem \ref{th: feasibility}, the second one is due to Lemma \ref{lm: psi integral}, and the third line follows from strict monotonicity of $\overline{\psi}^{-1}$ on $[\overline{\psi}(0),1]$.
But this means that $\overline{x}(u)$ equals $u^{n-1}$ above $\overline{\psi(0)}$ and is zero below that threshold, which is equivalent to \eqref{eq: extremality} in the theorem.
\end{proof}

\subsubsection{Proofs of Theorems \ref{th: extremality} and \ref{th: scores}: Statement $II$}
\begin{proof}
Take a feasible reduced form $\bm x$ such that the geometric average of its $\psi$-transforms denoted $\bm \psi = (\psi_1,...,\psi_n)$ satisfies \eqref{eq: extremality}.
By the definition of the $\psi$-transform and its bijectivity (see Footnote \ref{ftn: bijection}), we have $ux_i(u) = \psi_i^{-1}(u)$ for all bidders $i$.

Fix an index $i$ and let $\tilde x_i$ be the interim winning probability of that bidder induced the score allocation described in Theorem \ref{th: scores}, that is
\[
\widetilde{x}_{i}(u)=
\begin{cases}
\prod_{j\neq i}\mathbb{E}\left[\mathbf{1}_{[0,\psi_{i}^{-1}(u))}\Big(\psi_{j}^{-1}(u_{j})\Big)\right]
&\;\;\text{if}\;\; u\geq\psi_{i}(0),
\\
0 
&\;\;\text{if}\;\;u<\psi_{i}(0).
\end{cases}
\]
We shall show that $x_{i}$ and $\tilde x_i$ coincide. 
Clearly, both these functions equal $0$ on $[0,\psi_i(0))$, and so it suffices to consider their values on the complement of that set.

Let $u\geq\psi_{i}(0)$. Since for each $j$, by definition, $\psi_{j}^{-1}$ is zero on $[0,\psi_{j}(0))$ and strictly increasing
on its complement, we have 
\[
\tilde x_{i}(u)=\prod_{j\neq i}\psi_{j}\Big(\psi_{i}^{-1}(u)\Big).
\]
Multiply each side by $u=\psi_{i}\Big(\psi_{i}^{-1}(u)\Big)$
to obtain
\begin{align*}
u\widetilde{x}_{i}(u) 
&= \prod_{j=1}^n\psi_{j}\Big(\psi_{i}^{-1}(u)\Big)\\
&= \left[\overline{\psi}\Big(\psi_{i}^{-1}(u)\Big)\right]^{n}\\
&= \max\left\{\left[\overline{\psi}(0)\right]^{n},\psi_{i}^{-1}(u)\right\} \\
&= ux_{i}(u).
\end{align*}
where the second lines follows from the definition of $\overline{\psi}$, the third one is due to $\overline{\psi}(\iota)=\max\left\{ \overline{\psi}(0),\sqrt[n]{\iota}\right\}$. 
Finally, the last line is due to $ux_{i}(u)=\psi_{i}^{-1}(u)$ and 
\[
\psi_{i}^{-1}(u)\geq \psi_{i}^{-1}(\psi_{i}(0)) \geq\left[\overline{\psi}(0)\right]^{n}.
\]
To see the last inequality, note that $\psi_i(\iota) = \psi_i(0)$ for $\iota \leq \left[\overline{\psi}(0)\right]^{n}$ due to monotonicity of $(\psi_1,...,\psi_n)$.
\end{proof}

\subsubsection{Proofs of Theorems \ref{th: extremality} and \ref{th: scores}: Statement $III$}
\begin{proof}
This directly follows from the argument in Section \ref{sec: score allocations} establish that every score allocation induces an extremal reduced form.
\end{proof}

\subsection{Optimality}

\subsubsection{Proof of Proposition \ref{prop: oc problem}}
\begin{proof}
The proposition follows from combining our results on feasibility and extremality of reduced forms with the discussion of the $\delta$-transform in Section \ref{sec: optimality}.

First, as explained in Footnote \ref{ftn: bijection}, 
the $\delta$-transforms is a bijection between CDFs on the unit interval and absolutely continuous functions on the nonnegative reals with the slope between $0$ and $1$.
In view of this and the construction of $R_i$ in the main text, 
\[
\int_0^1 \sum_{i=1}^n H(x_i(u),u)du = \sum_{i=1}^n\int_0^\infty e^{-t}R_i(\delta_i(t),t)dt,
\]
where $\bm x$ and $\bm \delta$ are in this bijective relationship.

Second, by Theorem \ref{th: feasibility} and Lemma \ref{lm: aux G}, a reduced form $\bm x$ is feasible if and only if 
\begin{align}
s^n + n \int_{\overline{\psi}^{-1}(s)}^1 \iota \, d\ln \overline{\psi}(\iota) \leq 1 \;\; \forall s \in [\overline{\psi}(0),1].
\end{align}
By continuity of the $\psi$-transform, $\overline{\psi}$ is continuous, and hence this condition is equivalent to the following:
\[
\big[\overline\psi(s)\big]^n + n \int_{\overline{\psi}^{-1}\left(\big[\overline\psi(s)\big]^n\right)}^1 \iota \, d\ln \overline{\psi}(\iota) = \big[\overline\psi(s)\big]^n + n \int_{s}^1 \iota \, d\ln \overline{\psi}(\iota)\leq 1 \;\; \forall s \in [0,1],
\]
where the equality is due to $d\ln \overline{\psi}(s) = 0$ on intervals of constancy of $\overline{\psi}^{-1}\left(\big[\overline\psi(s)\big]^n\right)$.
Then, changing variables to $s=e^{-t}$ and using the definition of the $\delta$-transform with 
\[
\sum_{i=1}^n\delta_i(t) = -\sum_{i=1}^n \ln \psi_i(e^{-t}) = -\ln \left(\prod_{i=1}^n \psi_i(e^{-t})\right) = - n\ln \overline{\psi}(e^{-t})
\]
we can finally obtain
\[
\bm x \text{ is feasible} \iff \int_0^t e^{-\tau}\sum_{i=1}^n \delta_i'(\tau)d\tau \;\leq\; 1-e^{-\sum_{i=1}^n \delta_i(t)},
\]
where $\bm x$ and $\bm \delta$ are in the bijective relationship.

These two points imply the first part of the proposition, and then the second part follows from \eqref{eq: extremality} that characterizes extremality of reduced forms.
\end{proof}

\subsubsection{Proof of Proposition \ref{prop: marginal revenue}}
\begin{proof}
Take two distinct bidders $i, j$ and an inteval $(\underline{t},\overline{t})$ such that $(\delta_i^*)',(\delta_j^*)'$ are essentially bounded way from $0$ and $1$ on that interval.

Consider a perturbation $\bm \Delta:[0,\infty) \to \mathbf{R}^n$ that is identically zero for all $k$ other $i,j$, and $\Delta_i$ is a smooth function compactly supported in $(\underline{t},\overline{t})$ so that $\Delta(\underline{t}) = \Delta(\overline{t}) = 0$, whereas $\Delta_j$ is the minus of that function.
Since  $(\delta_i^*)',(\delta_j^*)'$ are essentially bounded, $\bm \delta^* \pm \varepsilon \bm \Delta$ is feasible in the optimal control problem in Proposition \ref{prop: oc problem} for all sufficiently small $\varepsilon > 0$.
As a result, if $\bm \delta^*$ is a solution, then we necessarily must have
\[
\int_{\underline{t}}^{\overline{t}} e^{-t} \left(\frac{\partial}{\partial \delta }R_i(\delta_i^*(t),t)-\frac{\partial}{\partial \delta }R_j(\delta_j^*(t),t)\right)\Delta_i(t)dt = 0.
\]
Remark that the term in the brackets is a continuous function.
Since the whole integral is zero for all smooth $\Delta_i$ compactly supported in $(\underline{t},\overline{t})$, we conclude that the term in the brackets is identically zero due to the standard application of the Fundamental lemma of the calculus of variations.
\end{proof}

\subsection{Optimal auctions: the case of extremality}

\subsubsection{Proof of Theorem \ref{th: optimum extremal}}
\begin{proof}
The proof proceeds in two steps. 
First, we show that $\bm \delta^*$ defined in the theorem is indeed a solution to the optimal control problem in Proposition \ref{prop: oc problem}.
Second, we establish this $\bm \delta$ corresponds to the auction in which the bidder with the highest nonnegative PVV defined in \eqref{eq: optimal score} gets the good with certainty.

\textit{Step 1.}
Let $\lambda = -(p^{\sharp})'$ on $[0,T)$ and $\lambda = 0$ above that threshold.
By construction, $\lambda$ is continuously differentiable, strictly positive on $[0,T)$, and integrable with $\int_t^{\infty} \lambda(\tau)d\tau = \max\{p^{\sharp}(t),0\}$.

For any $\bm \delta$ that is feasible in the optimal control problem \ref{prop: oc problem}, define
\begin{align}
\label{eq: cs inequality oc 1}
\mathscr{C}(\bm \delta)=\int_0^\infty \left[1 - e^{-\sum_{i=1}^n \delta_i(t)} - \int_0^t e^{-\tau}\sum_{i=1}^n \delta_i'(\tau)d\tau\right]\lambda(t)dt
\end{align}
and remark that 
\[
\mathscr{C}(\bm \delta) \geq 0
\]
because the term in the square brackets is nonnegative.

To proceed, we rewrite $\mathscr{C}(\bm \delta)$ solely in terms of $\bm \delta$ rather than its derivatives. 
First, integrating by parts the second term in \eqref{eq: cs inequality oc 1} and using $\bm\delta(0) = \bm 0$, we obtain
\begin{align}
\label{eq: cs inequality oc 2}
\mathscr{C}(\bm \delta) = \int_0^\infty \left[1 - e^{-\sum_{i=1}^n \delta_i(t)} - e^{-t}\sum_{i=1}^n \delta_i(t) - \int_0^t e^{-\tau}\sum_{i=1}^n \delta_i(\tau)d\tau\right]\lambda(t)dt.
\end{align}
Second, changing the order of integration in the third term and using $\int_t^{\infty} \lambda(\tau)d\tau = \max\{p^{\sharp}(t),0\}$, we can further rewrite \eqref{eq: cs inequality oc 2} as 
\begin{align}
\label{eq: cs inequality oc 3}
\mathscr{C}(\bm \delta) = \int_0^\infty \left[\left(1 - e^{-\sum_{i=1}^n \delta_i(t)} - e^{-t}\sum_{i=1}^n \delta_i(t)\right)\lambda(t) - e^{-t}\sum_{i=1}^n \delta_i(t)\max\{p^{\sharp}(t),0\}\right]dt.
\end{align}

By construction,  the expression \eqref{eq: cs inequality oc 3} is nonnegative for all feasible policies irrespective of their extremality of optimality, therefore
\begin{gather}
\label{eq: bound extremality 1}
\int_0^\infty e^{-t}\sum_{i=1}^n R_i(\delta_i(t),t)dt 
\leq \int_0^\infty e^{-t}\sum_{i=1}^n R_i(\delta_i(t),t)dt + \mathscr{C}(\bm\delta)
= \int_0^\infty\lambda(t)dt \\
+ \int_0^\infty e^{-t}
\left[\sum_{i=1}^nR_i(\delta_i(t),t)-\left(e^{t-\sum_{i=1}^n \delta_i(t)} + \sum_{i=1}^n \delta_i(t)\right)\lambda(t)-\sum_{i=1}^n \delta_i\max\{p^{\sharp}(t),0\}\right]dt.
\nonumber
\end{gather}
It is easy to see that $\mathscr{C}(\bm \delta^*) = 0$, therefore the bound in \eqref{eq: bound extremality 1} is actually tight for this tentative optimal policy.
We now make it more permissive while keeping its tightness when evaluated at $\bm \delta^*$.
Specifically, since the second line \eqref{eq: bound extremality 1} is strictly concave in $\bm \delta$, we can combine the gradient inequality saying that this concave function lies below its tangent at $\bm \delta^*$ and $\left(-e^{t-\sum_{i=1}^n \delta_i^*(t)} + 1\right)\lambda(t) = 0$, which is due to our construction of $\bm \delta^*$ and $\lambda$, to obtain
\begin{gather}
\label{eq: bound extremality 2}
\int_0^\infty e^{-t}\sum_{i=1}^n R_i(\delta_i(t),t)dt + \mathscr{C}(\bm\delta) \leq \int_0^\infty e^{-t}\sum_{i=1}^n R_i(\delta_i^*(t),t)dt \\
+ 
\int_0^\infty e^{-t}\sum_{i=1}^n
\left[
\underbrace{\frac{\partial}{\partial\delta}R_i(\delta_i^*(t),t)-\max\{p^{\sharp}(t),0\}}_{=\omega_i(t)}
\right](\delta_i(t)-\delta_i^*(t))dt.
\nonumber
\end{gather}

We claim that the second line in \eqref{eq: bound extremality 2} is maximized at $\bm \delta^*$ amongst all feasible policies.
To see why, fix an index $i$, and remark that the term in the square brackets (denoted $\omega_i$ for short) is identically zero on $[0,T)$ due to the definition of $\bm \delta^*$. 
Since $p^{\sharp}$ is negative on the complement of that set, we have
\begin{gather*}
\int_0^\infty e^{-t}\omega_i(t)\delta_i(t) dt 
= \int_T^\infty e^{-t}\frac{\partial}{\partial\delta}R_i(\delta_i^*(t),t)\delta_i(t) dt
\\
= \left[\int_T^\infty e^{-t} \frac{\partial}{\partial\delta}R_i(\delta_i^{\dagger}(T),t)dt\right] \delta_i(T) +\int_T^\infty\left[\int_t^\infty e^{-\tau} \frac{\partial}{\partial\delta}R_i(\delta_i^{\dagger}(T),\tau)d\tau\right]\delta_i'(t) dt
\\
\leq
\left[\int_T^\infty e^{-t} \frac{\partial}{\partial\delta}R_i(\delta_i^{\dagger}(t),t)dt\right] \delta_i(T) +\int_T^\infty\left[\int_t^\infty e^{-\tau} \frac{\partial}{\partial\delta}R_i(\delta_i^{\dagger}(t),\tau)d\tau\right]\delta_i'(t) dt=0,
\end{gather*}
where the second line follows from to the integration by parts and $\delta_i^*(t) = \delta_i^*(T) = \delta_i^{\dagger}(T)$ on $[T,\infty)$, whereas the third line is due to concavity of $R_i$, the fact that $\delta_i^{\dagger}$ is decreasing, and both $\delta_i(T)$ and $\delta_i'(t)$ are nonnegative.
Clearly, by definition of $\delta_i^\dagger$, which shows that the upper bound in \eqref{eq: bound extremality 2} is maximized at $\bm \delta^*$, thereby establishing optimality of this policy in the optimal control problem.

\textbf{Step 2.}
Having established optimality of $\bm \delta^*$, we now turn to the second part of the theorem.
Let $\bm x^{\sharp}$ be the reduced form that corresponds to $\bm \delta^{\sharp}$,
i.e., by \eqref{eq: x from delta}, we have
\begin{align}
\label{eq: x* extremal}
ux_i^{\sharp}(u) = e^{-(\delta_i^{\sharp})^{-1}(-\ln u)}.
\end{align}
Similarly, let $\bm x^*$ be the reduced form that corresponds to $\bm \delta^{*}$---it coincides with $\bm x^{\sharp}$ for quantiles above the threshold of $e^{-\delta_i^\sharp(T)}$ and equals zero otherwise.

By construction, $\bm \delta^{\sharp}$ satisfies \eqref{eq: feasibility delta} as equality at all times, meaning that its corresponding reduced form is extremal.
In view of Theorem \ref{th: scores}, it can be induced by awarding the good to bidder $i$ whenever 
\[
u_ix_i^{\sharp}(u_i) > u_jx_j^{\sharp}(u_j) \;\;\forall j \neq i,
\]
Equivalently, using  \eqref{eq: x* extremal} and applying $p^\sharp$, which is strictly decreasing, to each side, we can rewrite this condition in terms of bidder's PVVs $q_i(u_i) = p^{\sharp}\Big((\delta_i^{\sharp})^{-1}(-\ln u_i)\Big)$ as follows:
\[
q_i(u_i) > q_j(u_j) \;\;\forall j \neq i.
\]
By construction, $p^{\sharp}(T) = 0$, meaning that $u_i$ is higher than the threshold of $e^{-\delta_i^\sharp(T)}$ if and only if bidder $i$'s PVV is nonnegative.
This shows that the optimal reduced form $\bm x^*$ can be induced by allocating the good to the bidder with the highest PVV provided that it is nonnegative. 
\end{proof}

\subsubsection{Proof of Proposition \ref{prop: myerson}}
\begin{proof}
This proposition is a special case of Theorem \ref{th: optimum extremal} since all conditions of this theorem are verified.
Indeed, as explained in the main text, in Myerson's linear model bidders' revenue functions satisfy 
\[
\frac{\partial}{\partial\delta}R_i(\delta_i,t) = \zeta_i(e^{-\delta_i}),
\]
where $\zeta_i$ is bidder $i$'s MVV in the quantile space.
Clearly, the environment is regular when bidders' MVV are strictly increasing. 
Furthermore, the added condition on $\bm\delta^\dagger$ is satisfied vacuously as 
\[
\frac{\partial}{\partial\delta}R_i(\delta_i^\dagger(t),t) = 0
\]
yields a constant function: $\delta_i^\dagger(t) = \delta_i^\sharp(T)$ defined on $[T,\infty)$.

\end{proof}

\subsubsection{Proof of Proposition \ref{prop: optimum vs myerson}}
\begin{proof}

Let $\bm \delta^{\rm M}$ and $\bm \delta^*$ be Myerson's solution and the optimum.
Clearly, since PVVs are MVVs times the derivative of $h$, the cutoff time $T$ that appears in Theorem \ref{th: optimum extremal} is identical for both cases.
As explained in the main text, Myerson's solution satisfies
\[
\delta_1^{\rm M}(t) \geq \delta_i^{\rm M}(t) \;\;\forall i \neq 1
\]
at all times $t$.

Suppose towards a contradiction that there exists some time $t$ so that $\delta_1^{\rm M}(t) > \delta_1^{*}(t)$.
It is without loss of generality to assume that $t$ is below the cutoff $T$, since both $\bm \delta^{\rm M}$ and $\bm \delta^\star$ are frozen after $T$.
But then there must exist a bidder $i$ for whom the inequality is reversed at that time, that is
$\delta_i^{\rm M}(t) < \delta_i^{*}(t)$, because $\sum_{i=1}^n \delta_i^{\rm M}(t) = \sum_{i=1}^n \delta_i^*(t) = t$.
However, this contradicts \eqref{eq: optimum ev myerson} and \eqref{eq: optimum ev equalization} in the main text, and we conclude that $\delta_1^{\rm M}(t) \leq \delta_1^*(t)$ for all $t$.

Finally, note that $\delta_1^{*}$ being larger than $\delta_1^{\rm M}$ implies that bidder's interim winning probability at the optimum $x_1^*$ is larger than in Myerson's auction due to \eqref{eq: x from delta}.
The argument for the weakest bidder is identical and hence omitted.
\end{proof}

\subsection{Optimal auctions: going beyond extremality}

\subsubsection{Proof of Theorem \ref{th: optimum non-extremal}}
\begin{proof}
The proof proceeds is almost identical to the proof of Theorem \ref{th: optimum extremal}, and we therefore focus on highlighting on key differences instead of repeating the whole argument.

\textit{Step 1.}
This steps mimics the first part of the proof of Theorem \ref{th: optimum extremal} until \eqref{eq: bound extremality 2}.
Now, under the conditions of this theorem, $\omega_i$ is identically zero, which directly implies that $\bm \delta^*$ is a solution to the optimal control problem in Proposition \ref{prop: oc problem}.

\textit{Step 2.}
As in the proof of Theorem \ref{th: optimum extremal}, the reduced form $\bm x^\sharp$ corresponding to $\bm \delta^\sharp$ is extremal; furthermore, it can be implemented by awarding the good to the bidder with the highest PVV.
However, the reduced form $\bm x^*$ corresponding to the optimum $\bm \delta^*$ is no longer a simple truncation of $\bm x^\sharp$.

By \eqref{eq: x from delta}, we can express it as 
\[
ux_i^*(u) =
\begin{cases}
ux_i^{\sharp}(u) = e^{-(\delta_i^{\sharp})^{-1}(-\ln u)} &\text{if}\;\; u \geq e^{-\delta_i^\sharp(T)},
\\
ux_i^{\dagger}(u) = e^{-(\delta_i^{\dagger})^{-1}(-\ln u)} &\text{otherwise}.
\end{cases}
\]
Our assumptions on $\delta_i^\dagger$ implies that the function $x_i^*$ a CDF.
Furthermore, we have $\delta_i^\dagger(t) \leq \delta_i^\sharp(t)$ on $[T,\infty)$ due to 
\[
\frac{\partial}{\partial \delta}R_i(\delta_i^\dagger(t),t) = 0 \geq \frac{\partial}{\partial \delta}R_i(\delta_i^\sharp(t),t).
\]
and strict concavity of $R_i$. 
As a result, $x_i^\dagger \leq x_i^\sharp$ on the bottom interval, meaning that the fraction in \eqref{eq: optimal fraction} is well-defined under given our convention in Footnote \ref{ftn: fractional score}, i.e., $r_i(u_i) = 0$ whenever $x_i^\sharp(u_i) = 0$.
To sum up, the optimal reduced form $\bm x^*$ can be induced by first picking a winner according to PVVs---leading to bidders' interim winning probabilities $\bm x^\sharp$---and then allocating the fraction defined in that equation.
\end{proof}

\subsubsection{Proof of Proposition \ref{prop: CRA}}
\begin{proof}
As explained in the text, it is without loss of generality to look at group symmetric policies due to concavity of bidders' revenue functions, i.e., $\delta_1^\sharp,\delta_n^\sharp$ and $\delta_1^\dagger,\delta_n^\dagger$ corresponding to the risk-neutral and risk-averse bidders and defined as described in Theorem \ref{th: optimum non-extremal}. 
Concavity of $H_i$ in $x$ and its supermodularity imply that  $\delta_i^{\dagger}$ is weakly increasing; as a result, $\delta_i^*$ coincides with $\delta_i^\sharp$ before $T$ and follows $\delta_i^\dagger$ after that threshold.

We prove the proposition in two steps.
First, we shall show that at $t^o = -\frac{n}{n-1}\ln m < T$, both $\delta_1^\sharp,\delta_n^\sharp \lesseqgtr t + \ln m$ if and only if $t \gtreqless t^o$.
Then, we will rank $\delta_1^\sharp,\delta_n^\sharp$ (as well as $\delta_1^*,\delta_n^*$) before $t^o$ and after that time, and use this ranking to complete the proof.

\textbf{Step 1.}
Note that both $\delta_1^\sharp,\delta_n^\sharp$ must be strictly above $t + \ln m$ for $t$ sufficiently close to $0$ as $t + \ln m \to \ln m < 0$ as $t \to 0$.
On the other hand, at least one of must be strictly below $t + \ln m$ for $t$ sufficiently large as
\[
\min\{\delta_1^\sharp,\delta_n^\sharp\} \leq  \frac{k}{n}\delta_1^\sharp + \left(1-\frac{k}{n}\right)\delta_n^\sharp = \frac{t}{n}.
\]
By continuity, the minimum of $\delta_1^\sharp,\delta_n^\sharp$ crosses the line $t + \ln m$ at some time, say $t^o$.

Examination of \eqref{eq: cra revenue} reveals that at $t^o$, we have
\[
p^\sharp(t) =  v\left(e^{-\delta_1^\sharp(t^o)}\right)-v'\left(e^{-\delta_1^\sharp(t^o)}\right)\left(1-e^{-\delta_1^\sharp(t^o)}\right) 
= v\left(e^{-\delta_n^\sharp(t^o)}\right)-v'\left(e^{-\delta_n^\sharp(t^o)}\right)\left(1-e^{-\delta_n^\sharp(t^o)}\right),
\]
which implies that $\delta_1^\sharp(t^o)=\delta_n^\sharp(t^o) = \frac{t^o}{n} = t^o + \ln m$ due to regularity, \eqref{eq: cra budget}, and the fact that we have a crossing at $t^o$.
Thus, at $t = t^o$, all $\delta_i^\sharp$ are equal to each other and cross the line $t+\ln m$.
Clearly, this solves to the unique crossing time $t^o = -\frac{n}{n-1}\ln m$.
This crossing time is below $T$ since 
\[
p^\sharp(t) = v\left(m^{\frac{1}{n-1}}\right)-v'\left(m^{\frac{1}{n-1}}\right)\left(1-m^{\frac{1}{n-1}}\right)  =  \frac{\partial}{\partial x}H_i(m,m^{\frac{1}{n-1}}) > 0
\]
by assumption.

\textbf{Step 2.}
Let $t < t^o$.
Using marginal revenue-equalization and the unimodality assumption on $g$, we obtain 
\[
\frac{\partial}{\partial \delta}R(\delta_1^{\sharp}(t),t) = \frac{\partial}{\partial \delta}R_n(\delta_n^{\sharp}(t),t) < \frac{\partial}{\partial \delta}R_1(\delta_n^{\sharp}(t),t),
\]
which implies $\delta_1^\sharp(t) > \delta_n^\sharp(t)$ due to regularity.
The symmetric argument applies can be used to show that this ranking of  $\delta_1^\sharp, \delta_n^\sharp$ is reversed on $(t,T)$, and it can also be used to establish that $\delta_1^\dagger < \delta_n^\dagger$ holds on $[T,\infty)$

Putting all these pieces together, we conclude: $\delta_1^*(t)>\delta_n^*(t)$ for $t<t_0$, $\delta_1^*(t)=\delta_n^*(t)$ for $t=t_0$ and 
$\delta_1^*(t)<\delta_n^*(t)$ for $t>t_0$.
Finally, using \eqref{eq: x from delta}, we can obtain 
$x_1^*(u)>x_n^*(u)$ for $u>e^{-\delta_i^*(t_0)}=m^{\frac{1}{n-1}}$, $x_1^*(u)<x_n^*(u)$ for $u<m^{\frac{1}{n-1}}$ and $x_1^*(u)=x_n^*(u)(=m)$ for $u=m^{\frac{1}{n-1}}$, as claimed.
\end{proof}

\subsubsection{Example \ref{ex: CRA}}

We provide a derivation for the case with 2 risk-neutral bidders and one risk-averse bidder ($k=2)$.
For the case $k=1$ the derivation is analogous and is omitted. 

By marginal revenue equalization in \eqref{eq: cra revenue}, we have
\begin{equation}
\label{MR_1=MR_2}
2e^{-\delta^{\sharp}_1}-1=e^{-\delta^{\sharp}_3}-(1-e^{-\delta^{\sharp}_3})2e^{\delta^{\sharp}_3-t}    
\end{equation}
As $2\delta_1^{\sharp}+\delta_3^{\sharp}= t$, we have $\delta_1^{\sharp}=(t-\delta_3^{\sharp})/2$. 
Plugging this in \eqref{MR_1=MR_2}, we obtain
\begin{equation}\label{2}
e^{-\delta_3^{\sharp}}-(1-e^{-\delta_3^{\sharp}})2e^{\delta_3^{\sharp}-t}=2e^{(\delta_3^{\sharp}-t)/2}-1.    
\end{equation}
This is equivalent to a fourth-degree polynomial equation in $e^{-\delta_3^{\sharp}/2}$. 
However, instead of solving for $\delta_3^{\sharp}$, we can solve directly for $x_3^{\sharp}$ which is the object we are ultimately interested in. Indeed, $e^{-\delta_3^{\sharp}(t)}=\psi_1(e^{-t})$, so $e^{-t}=\psi_1^{-1}(e^{-\delta_3^{\sharp}(t)})$.
But notice that $\psi_i^{-1}(u) = ux_i(u)$. Thus, solving \eqref{2} for $e^{-t}$ after writing $e^{-\delta_3^{\sharp}}=u$, we will recover $ux_3^{\sharp}(u)$. Thus, $x_3^*(u)$ is the solution to 
\begin{equation}\label{3}
u-(1-u)2ux_3^{\sharp}/u=2\sqrt{ux_3^{\sharp}}/\sqrt{u}-1\Leftrightarrow 2x_3^{\sharp}(1-u)+2\sqrt{x_3^{\sharp}}-u-1=0,
\end{equation}
so 
\begin{equation}\label{ex3 x3_1}
x_3^{\sharp}(u)=\frac{2-u^2-\sqrt{3-2u^2}}{2(1-u)^2}. 
\end{equation}
By Theorem \ref{th: optimum non-extremal}, the expression in \eqref{ex3 x3_1} will be optimal only when $\delta^*_3(t)=\delta_3^{\sharp}(t)$, that is, $t\leq T$.
To find $T$, we equate both sides of \eqref{2} to 0 which yields $\delta_1^{\sharp}(T)=\ln 2$, $\delta_3^{\sharp}(T)-T=-2\ln 2$, so $e^{-\delta^{\sharp}_3(T)}-(1-e^{-\delta^{\sharp}_3(T)})2(1/4)=0$, $\delta_3^{\sharp}(T)=\ln 3$. Thus, $T=2\delta_1^{\sharp}(T)+\delta_3^{\sharp}(T)=\ln 12$.  
The expression in \eqref{ex3 x3_1} is valid for $u\geq e^{-\delta^{\sharp}_3(T)}=1/3$.

To find $x_1^{\sharp}(u)$, we can do the same 
by writing \eqref{MR_1=MR_2} in terms of $\delta_1$ only. 
We get 
\[2e^{-\delta^{\sharp}_1}-1=e^{2\delta^{\sharp}_1-t}-(1-e^{2\delta^{\sharp}_1-t})2e^{-2\delta^{\sharp}_1}.\]
Writing $e^{-\delta^{\sharp}_1}=u$ and solving for $e^{-t}=ux_1^{\sharp}$, we get 
\[
x_1^{\sharp}(u)=\max\left\{\frac{2u+2u^2-1}{2u+1/u},0\right\}.
\]
This is optimal for $u\geq e^{-\delta_1^{\sharp}(T)}=1/2$.

For lower $u$, the optimal allocations are governed by $\delta_i^{\dagger}(t)$ which can be found from $\frac{\partial}{\partial \delta}R_i(\delta^{\dagger}_i(t),t)=0$, which boils down to just $\frac{\partial}{\partial x}H_i(x_i^{\dagger}(u),u)=0$. Thus, for $u<1/3$
$u-(1-u)2x_3^{\dagger}(u)=0$, so $x_3^{\dagger}=\frac{u}{2(1-u)}$. Thus, by Theorem \ref{th: optimum non-extremal},
\[x_3^*(u)=
\begin{cases}
\frac{u}{2(1-u)}, & u<1/3;\\ 
\frac{2-u^2-\sqrt{3-2u^2}}{2(1-u)^2}, & u\geq 1/3.  
\end{cases}\]
$x_3^*(u)$ is continuous. 
(To apply Theorem~\ref{th: optimum non-extremal}, we have to check that the derivative of $\delta_3^{\dagger}(t)$ is (weakly) between zero and one, but here we do not have to solve for $\delta_3^{\dagger}(t)$ explicitly to do that. Indeed, as the resulting $x_3^{\dagger}(u)=\frac{u}{2(1-u)}$ is increasing, we know by the general properties of delta transforms that its delta transform satisfies $\delta_3'(t)\in[0,1]$.)

Likewise, for $u<1/2$, we find $\delta_1^{\dagger}(t)=\ln 2$ for $t\geq T$, which clearly has a derivative between 0 and 1. 
Recall that constant delta transforms are generated by jumps in the allocation $x$.
In fact, by \eqref{eq: x from delta}, $x^{\dagger}_i(u)=\exp\left(-(\delta_i^{\dagger})^{-1}(-\ln u)\right)/u$ we get that for $u<1/2$, $(\delta_1^{\dagger})^{-1}(-\ln u)=+\infty$, as $-\ln u>\ln 2=\delta_1^{\dagger}(t)$ for all $t$. 
Thus, $x^{\dagger}_1(u)=\exp(-\infty)/u=0$ for $u<1/2$.

Summing up, the optimal allocation to a risk-neutral bidder is
\[
x_1^*(u)=x_2^*(u)=
\begin{cases}
0, & u<1/2;\\ 
\frac{2u+2u^2-1}{2u+1/u}, & u\geq 1/2.  
\end{cases}
\]
At $u=1/2$, the interim allocation experiences a jump from 0 to 1/6.

\subsection{Geometry of the $\delta$-transform}

The $\delta$-transform yields an explicit geometric decomposition of a CDF $x \in \mathscr{X}$ into: 
(a) vertical segments (jumps), corresponding to $\delta'=0$; 
(b) horizontal segments with no mass, corresponding to $\delta'=1$;  
(c) strictly increasing continuous segments, corresponding to $\delta'\in(0,1)$.
For example, Figure \ref{fig:psi-transform} plots the $\psi$-transform corresponding to
\[
x(u)
= u\cdot\mathbf{1}_{[\sfrac{1}{4},\sfrac{1}{2})\,\cup\,[\sfrac{3}{4},1)}(u)
+ \sfrac{1}{2}\cdot\mathbf{1}_{[\sfrac{1}{2},\sfrac{3}{4})}(u),
\]
which  
(a) jumps at $u=\sfrac{1}{4}$ and $u=\sfrac{3}{4}$,  
(b) is horizontal on $[0,\sfrac{1}{4})$ and $[\sfrac{1}{2},\sfrac{3}{4})$, and  
(c) coincides with the diagonal on the remaining intervals.  
The resulting $\delta$-transform satisfies
\[
\delta'(t)=
\begin{cases}
0, & t\in \big[2\ln (\sfrac{4}{3}),\,\ln(\sfrac{8}{3})\big)\,\cup\,\big[2\ln 4,\,\infty\big),\\[6pt]
1, & t\in \big[\ln(\sfrac{8}{3}),\,2\ln 2\big),\\[6pt]
\sfrac{1}{2}, & t\in \big[0,\,2\ln(\sfrac{4}{3})\big)\,\cup\,\big[2\ln 2,\,2\ln 4\big),
\end{cases}
\]
and is illustrated in Figure \ref{fig: delta illustration} below. 
Because the $\delta$-transform proceeds from $u=1$ downward, the first flat interval of $x$ appears as the last interval where $\delta^\prime=0$, and similarly for the other pieces of the decomposition.

\begin{figure}[!htb]
\centering
\begin{tikzpicture}[scale=1.15]

\draw[->] (0,0)--(0,4.5);
\draw[->] (0,0)--(4.5,0) node[below]{$t$};

\pgfmathsetmacro{\ta}{2*ln(4/3)}
\pgfmathsetmacro{\tb}{ln(8/3)}
\pgfmathsetmacro{\tc}{2*ln(2)}
\pgfmathsetmacro{\td}{2*ln(4)}
\pgfmathsetmacro{\T}{\td}
\pgfmathsetmacro{\Xscale}{4/\T}

\pgfmathdeclarefunction{delta}{1}{%
	\pgfmathparse{
		(#1 < \ta) * (0.5*#1)
	  + (#1 >= \ta && #1 < \tb) * (ln(4/3))
	  + (#1 >= \tb && #1 < \tc) * (ln(4/3) + (#1 - \tb))
	  + (#1 >= \tc && #1 < \td) * (0.5*#1)
	  + (#1 >= \td) * (ln(4))
	}%
}

\draw[thick, red]
plot[domain=0:\T, samples=500]
({\Xscale*\x},{4*exp(-delta(\x))});

\draw[ultra thick, blue, dotted]
plot[domain=0:\T, samples=500]
({\Xscale*\x},{4*exp(delta(\x)-\x)});

\pgfmathsetmacro{\ustarA}{3/4}
\pgfmathsetmacro{\tstarA}{\ta}
\pgfmathsetmacro{\XstarA}{\Xscale*\tstarA}
\pgfmathsetmacro{\YstarA}{4*\ustarA} 

\draw[thick, dotted, black] (0,\YstarA) -- (\XstarA,\YstarA);
\node[anchor=east] at (0,\YstarA) {$u'=\sfrac{3}{4}$};
\draw[thick, dotted, black] (\XstarA,\YstarA) -- (\XstarA,0);
\filldraw[black] (\XstarA,\YstarA) circle (2pt);
\node[anchor=south west] at (\XstarA+0.05,\YstarA) {$x(\sfrac{3}{4})=\sfrac{3}{4}$};

\pgfmathsetmacro{\ustarB}{1/2}
\pgfmathsetmacro{\tstarB}{\tc}
\pgfmathsetmacro{\XstarB}{\Xscale*\tstarB}
\pgfmathsetmacro{\YstarB}{4*\ustarB}

\draw[thick, dotted, black] (0,\YstarB) -- (\XstarB,\YstarB);
\node[anchor=east] at (0,\YstarB) {$u''=\sfrac{1}{2}$};
\draw[thick, dotted, black] (\XstarB,\YstarB) -- (\XstarB,0);

\filldraw[black] (\XstarB,\YstarB) circle (2pt);
\node[anchor=south west] at (\XstarB+0.05,\YstarB) {$x(\sfrac{1}{2})=\sfrac{1}{2}$};

\pgfmathsetmacro{\Xta}{\Xscale*\ta}
\pgfmathsetmacro{\Xtc}{\Xscale*\tc}
\draw (\Xta,0) -- (\Xta,-0.12)
	node[below, xshift=-8pt] {$t'=2\ln(\sfrac{4}{3})$};

\draw (\Xtc,0) -- (\Xtc,-0.12)
	node[below, xshift=8pt] {$t''=2\ln 2$};
    
\end{tikzpicture}
\caption{Reading of the CDF $x(u)$ in Figure \ref{fig:psi-transform} for $u=u',u''$ from the parametric curves $e^{-\delta(t)}$ and $e^{t-\delta(t)}$ shown in red (solid) and blue (dotted).}
\label{fig: delta illustration}
\end{figure}

\subsection{Regularity}
In this section, we unpack regularity in terms of curvature properties of $H_i$.

\textbf{Condition (A).}
We measure how $\frac{\partial H_i}{\partial x}$ responds to changes in its arguments through the following pair of semi-elasticities:
\[
\xi_i^x = \frac{\partial}{\partial \ln x}\frac{\partial H_i}{\partial x},\;\;\xi_i^u = \frac{\partial}{\partial \ln u}\frac{\partial H_i}{\partial x}.
\]
By definition of bidders' revenue functions $R_i$ in \eqref{eq: R}, we have $\frac{\partial}{\partial \delta }R_i(\delta,t) = \frac{\partial}{\partial x}H_i(e^{\delta-t},e^{-\delta})$, hence
\begin{align}
\label{eq: R second derivatives}
\frac{\partial^2}{\partial \delta^2}R_i(\delta,t) = \xi_i^x(e^{\delta-t},e^{-\delta}) - \xi_i^u(e^{\delta-t},e^{-\delta}), \;\; \frac{\partial^2}{\partial \delta\partial t}R_i(\delta,t) = -\xi_i^x(e^{\delta-t},e^{-\delta}).
\end{align}

In view of these identities, bidder $i$'s revenue function $R_i$ is strictly concave if the $x$-semi-elasticity is strictly smaller than the $u$-semi-elasticity.
This holds vacuously when $H_i$ is strictly supermodular and concave, since then $\xi_i^x \leq 0 < \xi_i^u$.
More generally, Condition (A) requires that supermodularity is sufficiently strong relative to convexity.

\textbf{Condition (B).}
We now assume Condition (A) and examine the added restrictions imposed by Condition (B).
Under strict concavity of $R_i$, one can uniquely solve \eqref{eq: oc foc} for functions $(\bm \delta^\sharp,p^\sharp)$.
Then, regularity requires $\bm \delta^\sharp$ to be strictly increasing and $p^\sharp$ to be strictly decreasing. 
To see when it is the case, differentiate \eqref{eq: oc foc} using the Implicit function theorem to obtain
\begin{align}
\frac{\partial^2 R_i^\sharp}{\partial\delta^2}  \cdot (\delta_i^\sharp)' + \frac{\partial^2 R_i^\sharp}{\partial\delta\partial t} = (p^\sharp)'(t), \;\; \sum_{i=1}^n(\delta_i^\sharp)' = 1,
\end{align}
where $R_i^\sharp$ stays for $R_i$ evaluated at $\delta_i^\sharp(t)$ for each time $t$.
Let $\xi_j^{x,\sharp}$ and $\xi_j^{u,\sharp}$ be the semi-elasticities evaluated along the marginal revenue equalizing path, and set 
\[
\kappa^\sharp = \sum_{j=1}^n \frac{1}{\xi_j^{x,\sharp} - \xi_j^{u,\sharp}}.
\]
Using \eqref{eq: R second derivatives}, the above linear system solves to
\begin{align}
\label{eq: regularity R 1}
\underbrace{\kappa^\sharp \cdot (\xi_i^{x,\sharp} - \xi_i^{u,\sharp)}}_{>0} \cdot  \;\; (\delta_i^{\sharp})' = 1 - \sum_{j=1}^n \frac{\xi_j^{x,\sharp}-\xi_i^{x,\sharp}}{\xi_j^{x,\sharp} - \xi_j^{u,\sharp}},\;\; \underbrace{\kappa^\sharp}_{<0} \cdot \;\; (p^{\sharp})' = 1-\sum_{j=1}^n \frac{\xi_j^{x,\sharp}}{\xi_j^{x,\sharp} - \xi_j^{u,\sharp}},
\end{align}
meaning that the strict positivity of right-hand sides in \eqref{eq: regularity R 1} is necessary and sufficient for Condition (B) provided that Condition (A) is satisfied.

To sum up, the added Condition (B) is verified if and only if 
\begin{align}
\label{eq: regularity R 2}
1 > \sum_{j=1}^n \frac{\xi_j^{x,\sharp}-\max\{0,\xi_1^{x,\sharp},...,\xi_n^{x,\sharp}\}}{\xi_j^{x,\sharp} - \xi_j^{u,\sharp}}
\end{align}
along the path equalizing bidders' marginal revenues.
This trivially holds in Myerson's linear model as $x$-semi-elasticity is identically zero. 
More generally,  \eqref{eq: regularity R 2} demands that the semi-elasticities cannot be too strong and differ across various bidders too much along $\bm \delta^\sharp$.

As we pointed out earlier, Condition (A) is satisfied vacuously when $H_i$ strictly supermodular and concave in its first argument.
Then, since the $x$-semi-elasticity is negative, \eqref{eq: regularity R 2} simplifies to 
\[
1 > \sum_{j=1}^n \frac{\xi_j^{x,\sharp}}{\xi_j^{x,\sharp} - \xi_j^{u,\sharp}}.
\]
If, in addition, the environment is symmetric, then the marginal revenue equalizing path is necessarily symmetric, i.e., $\delta_i^\sharp = t/n$, which gives $x_i^\sharp = u^{n-1}$. 
We can further rewrite it as 
\[
1 > n \frac{\xi^x(u^{n-1},u)}{\xi^x(u^{n-1},u)-\xi^u(u^{n-1},u)},
\]
which nests the regularity condition in Theorem 1 of \citet{gershkov2022optimal} for the symmetric CRA specification: $H(x,u) = v(u)x-(1-u)v'(u)g(x)$.
\end{document}